\newtheorem{theorem}{Theorem}[section]
\newtheorem{lemma}{Lemma}[section]
\theoremstyle{remark} 
\newtheorem{remark}{Remark}[section]
\theoremstyle{definition} 
\newtheorem{definition}{Definition}[section]
\m@th\displaystyle{##}$\hfil}
\m@th\displaystyle{##}$\hfil}
\m@th\displaystyle{##}$\hfil}
\newcommand{\coloneq}{\coloneqq}
\newcommand{\eqcolon}{\eqqcolon}
\DeclareMathOperator*{\reverseProduct}{\mathchoice
	{\widetilde{\prod}}
	{\smash{\widetilde{\prod}}\vphantom{\prod}}
	{\widetilde{\prod}}
	{\widetilde{\prod}}
}
\newcommand{\smallReverseProduct}{\ensuremath{\widetilde\Pi}}
\newcommand{\diag}{\mathop{\mathrm{diag}}\nolimits}
\newcommand{\idxPlant}{_{\mathrm{p}}}
\newcommand{\subsDiscrete}{\mathrm{d}}
\newcommand{\idxDiscrete}{_{\subsDiscrete}}
\newcommand{\idxContinuous}{_{\mathrm{cont}}}
\newcommand{\subsActuate}{\mathrm{u}}
\newcommand{\subsMeasure}{\mathrm{y}}
\newcommand{\subsCompute}{\mathrm{ctrl}}
\newcommand{\todoOptional}[1]{}
\newcommand{\mat}[1]{\left\lbrack\begin{matrix} #1 \end{matrix}\right\rbrack}
\newcommand{\vecSmallT}[1]{\lbrack\begin{matrix} #1 \end{matrix}\rbrack^{\transp}}
\newcommand{\vecSubscriptT}[1]{\lbrack\begin{smallmatrix} #1 \end{smallmatrix}\rbrack^{\transp}}
\newcommand{\transp}{\top}
\newcommand{\R}{\mathbb{R}}
\newcommand{\Z}{\mathbb{Z}}
\newcommand{\N}{\mathbb{N}}
\newcommand{\ee}{\mathrm{e}}
\newcommand{\landauO}{O}
\newcommand{\ie}{i.\,e.\xspace}
\newcommand{\eg}{e.\,g.\xspace}
\newcommand{\ifpaper}[1]{} 
\newcommand{\ifreport}[1]{\cbstart#1\cbend} 
\newcommand{\ifreportX}[1]{#1} 
\newcommand{\ifreportStartMark}{\ifreportX{\cbstart}} 
\newcommand{\ifreportEndMark}{\ifreportX{\cbend}} 
\newcommand{\dDiff}{\mathrm{d}} 
\newcommand{\fracIfReport}[2]{\ifreportX{\frac{#1}{#2}}\ifpaper{#1/#2}} 
\newcommand{\spectralRadius}[1]{\rho\ifreportX{\hspace{-.08em}}\{#1\}}
\let\citep\cite
\newenvironment{keyword}{\noindent \textbf{Keywords: }}{}
\date{April 24, 2020}
\begin{document}

\title{\ifreportX{Details and Proofs for: \\}
	Stability Analysis of Multivariable Digital Control Systems with Uncertain Timing
} 



\newcommand{\autoren}{Maximilian Gaukler, Günter Roppenecker, Peter Ulbrich}
\ifpaper{
	\author[First]{Maximilian Gaukler} 
	\author[First]{Günter Roppenecker} 
	\author[First]{Peter Ulbrich}
}

\newcommand{\affiliation}{Friedrich-Alexander-Universität Erlangen-Nürnberg (FAU), \\ 		
	\ifreportX{\small} Erlangen, Germany \ifreportX{\\ \small} (e-mail:~max.gaukler@fau.de, guenter.roppenecker@fau.de, peter.ulbrich@fau.de)}
\ifpaper{
	\address[First]{\affiliation}
}
\ifreportX{\author{\autoren\\ \small \affiliation}}

\ifreportX{\maketitle}

\begin{abstract}                
The ever increasing complexity of real-time control systems results in significant deviations in the timing of sensing and actuation, which may lead to degraded performance or even instability.
In this paper we present a method to analyze stability under \emph{mostly-periodic} timing with bounded uncertainty, a timing model typical for the implementation of controllers that were actually designed for strictly periodic execution.
In contrast to existing work, we include the case of multiple sensors and actuators with \emph{individual} timing uncertainty.
Our approach is based on the discretization of a linear impulsive system. To avoid the curse of dimensionality, we apply a decomposition that breaks down the complex timing dependency into the effects of individual sensor-actuator pairs. Finally, we verify stability by norm bounding and a Common Quadratic Lyapunov Function. Experimental results substantiate the effectiveness of our approach for moderately complex systems.
\end{abstract}

\begin{keyword}
Control over Networks, Systems with Time-Delays, Sampled-Data Systems, Jitter, \acl{CQLF}, \acl{LMI}, Linear Impulsive Systems
\end{keyword}

\ifpaper{\allowdisplaybreaks}
\newcommand{\specificReferenceToReport}[1]{\cite[\cref{REPORT-#1}]{Gaukler2019extended}}

\acrodef{LMI}{Linear Matrix Inequality}
\acrodefplural{LMI}{Linear Matrix Inequalities}
\acrodef{MIMO}{multiple-input-multiple-output}
\acrodef{CQLF}{Common Quadratic Lyapunov Function}
\acrodef{LIS}{Linear Impulsive System}
%
\todoOptional{define nondeterministic --- depending on the definition, hybrid automata which exploit may-semantics can still be deterministic, as long as the \emph{discrete} part of the automaton successor of a transition is unique. Here, nondeterministic = there is no deterministic input-output mapping.}

\ifreport{
\section*{Note}	
This document is the extended version of the following publication: \emph{Stability Analysis of Multivariable Digital Control Systems with Uncertain Timing -- M. Gaukler, G. Roppenecker and P. Ulbrich, 2020, IFAC World Congress.} 
Please cite the original publication wherever possible. 


Additions to the original publication are marked by a gray bar at the margin, as in this paragraph. While section numbers are equal as far as possible, the numbering of theorems and formulas differs from the original publication.
Coypright 2020 IFAC.
}

\ifreportX{\clearpage \tableofcontents\clearpage}

\section{Introduction}
The vast majority of control systems are implemented as discrete-time controllers executed on a real-time computing platform. In the design process, sampling the sensors and updating the actuators is generally assumed to be synchronous and strictly periodic. However, on modern computing platforms and due to the ever-growing overall system complexity, it is becoming increasingly difficult and often prohibitively costly to satisfy this assumption in the actual implementation:
First, execution times are non-constant and hard to predict, especially when multiple applications share one processor. Second, contemporary digital sensors incorporate excessive signal pre-processing.
Consequently, the sensor reading may be outdated by a small but varying duration, even if it is queried strictly periodically. Last but not least, the accuracy of time synchronization in distributed (\ie, ranging from multi-core to networked) systems is limited. All these factors jeopardize the controller's design assumptions and add to timing uncertainties in its input and output.

Therefore, the practical implementation of a controller with period $T$ will in most cases result in a \emph{mostly-periodic} system in which the sensor and actuator times do not lie on the intended periodic grid $t=kT$, $k \in \N$, but in a small \emph{timing window} around these points. The resulting dynamics may be worse or even unstable. In practice, it is often assumed that the timing window is still small enough such that stability and convergence are not affected. This argument is problematic for two reasons: Firstly, without proper analysis, there is no guarantee that a certain timespan is \enquote{small enough}. Secondly, larger timing windows relax and simplify the scheduling of real-time applications and are therefore even desirable from a (real-time) design point of view. Consequently, in this paper, we concentrate on the stability analysis of mostly-periodic digital control loops with given timing windows.

\section{Related Work}

\acrodef{LET}{logical execution time}
Providing a deterministic execution platform has always been a core aim in real-time scheduling and design. Here, the general approach to eliminate timing uncertainty is to rely on a time-triggered execution of the controller code at predetermined instants of time. Known representatives for this are the Cyclic Executive~\citep{baker:89:rts} and Fixed-Priority Models~\citep{sha:89:real} for periodic tasks. However, the focus is on deadline adherence rather than avoidance of jitter. Synchronous development models address the latter problem. For example, the \ac{LET} paradigm \citep{Henzinger2003} suggests a decomposition of input and output: Sensors are sampled at fixed time instants (\eg, $t=kT$). Instead of updating the output immediately after the new value has been computed, the update is delayed until $t=kT+D_{\subsActuate}$ to eliminate jitter. In general, support for \emph{exact} synchronization requires, however, tailored programming languages and hardware support and is thus inapplicable to a wide range of systems. Therefore, most practical implementations of \ac{LET} resort to overapproximations and pessimism to match synchronicity within some uncertainty, which results in a timing window as considered in this work.

\acrodef{SISO}{single-input-single-output}
For the analysis of sampled-data systems with uncertain timing, a wide array of theoretical methods is available (cf. \cite{hetel2017stability}). From a user's point of view, the existing results building upon these methods can be categorized by the employed timing model:

Based on the small gain theorem, \cite{cervin12acc} analyzes stability for a timing model similar to ours. The analysis is, however, restricted to the \ac{SISO} case, which is easier since there are only two scalar timing uncertainties, namely sensor and actuator delay. The same holds for multiple inputs and outputs if all sensors are jointly sampled and all actuators are jointly updated. This results in a system with \ac{SISO}-like timing but vector-valued signals (\enquote{quasi-\ac{SISO}}). However, the quasi-SISO assumption is invalid for systems with multiple sensors that are not exactly synchronized.

Quasi-\ac{SISO} cases are analyzed in \cite{Kao2007,khatib:16:jnahs,Bauer2012ncs} and, with restriction to quantized output delays, in \cite{fontanelli:13:automatica}. To model network-controlled systems, \cite{Bauer2012ncs} also offers the alternative model that exactly one sensor or actuator is updated in every control period, thereby transforming a \ac{MIMO} system to a switched quasi-\ac{SISO} one. As this scenario is tailored to networked control with severely restricted communication resources, it does not match the common scenario of an embedded system that has enough resources to query all sensors in every period.

To the best of our knowledge, none of the existing publications address the actual \ac{MIMO} case of multiple sensors and actuators with independent timing uncertainties. Filling this critical gap is the contribution of this work.

\section{Problem Statement}\label{sec:problem}
\paragraph*{System Model:}  A control loop that is exponentially stable for perfect timing is executed with uncertain timing. We employ the system model by \cite{Gaukler2019}, restricted to the linear case without disturbance and measurement uncertainty:

The plant $\dot x\idxPlant(t) = A\idxPlant x\idxPlant(t) + B\idxPlant u(t)$ with state $x\idxPlant(t) \in \R^{n\idxPlant}$, output $y(t) = C\idxPlant x\idxPlant(t) \in \R^p$ and input $u(t) \in \R^m$ is controlled by a discrete-time controller with fixed period $T>0$, state $x\idxDiscrete(t) \in \R^{n\idxDiscrete}$. The controller dynamics are
\begin{equation}
	y_{\subsDiscrete,k} = y_k, ~~~ x_{\subsDiscrete,k+1} = A\idxDiscrete x_{\subsDiscrete,k} + B\idxDiscrete y_{\subsDiscrete,k}, ~~~ u_k = C\idxDiscrete x_{\subsDiscrete,k},
\end{equation}
where $y\idxDiscrete \in \R^p$ is a measurement buffer introduced for formal reasons.

Under ideal timing, the measurement $y_{\subsDiscrete,k}$ and actuation $u$ would be updated at $t=kT$. Actually, updating the $i$-th actuator component $u^{(i)}$ is offset by the timing deviation $\Delta t_{\subsActuate,i,k}$ and, respectively, sampling $y^{(i)}$ by $\Delta t_{\subsMeasure,i,k}$:
\begin{align}
	u^{(i)}(t) &= u_k~~\text{for}~~kT + \Delta t_{\subsActuate,i,k} \le t < (k+1)T + \Delta t_{\subsActuate,i,k+1},\nonumber\\
	y_{\subsDiscrete,k}^{(i)} &= y^{(i)}(kT + \Delta t_{\subsMeasure,i,k}).
\end{align}
The timing deviations are unknown but bounded to
\begin{align}
	\underline{\Delta t}_{\{\subsActuate,\subsMeasure\},i} \le \Delta t_{\{\subsActuate,\subsMeasure\},i,k} \le \overline{\Delta t}_{\{\subsActuate,\subsMeasure\},i},\\
	\intertext{where the bounds are less than half a period:}
	-T/2 < \underline{\Delta t}_{\{\subsActuate,\subsMeasure\},i} \le \overline{\Delta t}_{\{\subsActuate,\subsMeasure\},i} < T/2.
\end{align}
\ifpaper{\vspace{-2em}}
\paragraph*{Formalization:} To achieve a uniform formulation, the \enquote{discrete-time} variables $u$, $x\idxDiscrete$ and $y\idxDiscrete$ are treated as continuous-time signals that are updated at certain times and remain constant inbetween (zero-order hold). In this formulation, the $k$-th control period ($k \in \N$) is executed as follows: At $t_{\subsMeasure,i,k} = kT + \Delta t_{\subsMeasure,i,k}$, the $i$-th sensor, $i=1,...,p$, is sampled by setting the $i$-th component of $y\idxDiscrete(t)$ to the $i$-th component of $y(t)$. Similarly, the $j$-th actuator, $j=1,...,m$, is updated at $t_{\subsActuate,j,k} = kT + \Delta t_{\subsActuate,j,k}$ by setting the $j$-th component of $u(t)$ to the $j$-th component of $C\idxDiscrete x\idxDiscrete(t)$. Finally, the discrete controller is updated at $t=(k+1/2)T$ by setting $x\idxDiscrete(t)=A\idxDiscrete x\idxDiscrete(t^-) + B\idxDiscrete y\idxDiscrete(t^-)$. As discussed later, fixing this time at $t=(k+1/2)T$ is without loss of generality; it may be earlier or later as long as the order of events is maintained.
 
For readability, the startup behavior is defined such that the $0$-th control period is skipped and the initial states are given at $t_0=T/2$.
The resulting system is linear but nondeterministic and time-variant. For a detailed discussion of this model, see \cite{Gaukler2019}.
\ifpaper{\vspace{-1.5em}}
\paragraph*{Goal:} We want to prove exponential stability of the closed loop for moderate timing uncertainties. The focus is on an efficient solution that scales well to systems with a large number of inputs and outputs, even if this scalability makes the result more pessimistic and therefore the approach is only applicable to small timing uncertainties.

We define stability as the exponential decay of plant state $x\idxPlant$, controller state $x\idxDiscrete$, sampled measurement $y\idxDiscrete$ and actuation $u$, which are combined in the state vector
\begin{equation}
x(t) = \mat{x\idxPlant(t)^\transp & x\idxDiscrete(t)^\transp & y\idxDiscrete(t)^\transp & u(t)^\transp}^\transp  \in \R^{n}
\end{equation}
of dimension $n=n\idxPlant + n\idxDiscrete + p + m$:

\acrodef{CGES}{Continuous-Time Globally Uniform Exponential Stability}
\begin{definition} The closed loop with initial state $x(t_0)$ admits \emph{\acl{CGES}}\acused{CGES}, denoted as \ac{CGES}$(\lambda, D)$, iff there exist constants $D \in [1,\infty)$ and $\lambda < 0$ such that for all possible timings
	\begin{equation}
	|x(t)| \leq D|x(t_0)| \ee^{\lambda (t-t_0)} \quad \forall t\geq t_0, ~\forall x(t_0) \in \R^n.
	\end{equation}
\end{definition}

\section{\ifreportX{Preliminaries and }Notation} \label{sec:timing-stability:preliminaries}

Definitions are denoted with a colon, \eg, $a \coloneq b$ means that $a$ is defined as $b$. We define $\R$ as the real numbers, $\N := \{1,2,\dots\}$ and $\Z:=\{0,\pm1,\dots\}$. For a set $S$, the number of elements is denoted $|S|$. Rounding down is $\lfloor x \rfloor \coloneq \max \{ z \in \Z ~|~ z \leq x \}$. The euclidean norm of $x \in \R^{n}$ is $|x|:=\sqrt{x^{\transp} x}$, where ${}^{\transp}$ denotes transposition. If $A\in \R^{n \times n}$ has eigenvalues $\lambda_i$, it has spectral radius $\spectralRadius{A} := \max_i |\lambda_i|$. The spectral norm is $\|A\|_{\sigma}:=\spectralRadius{A^{\transp}A}$.

For $a,b\in\Z$, the reversed product $\smallReverseProduct$ is defined as
\ifpaper{\vspace{-0.2em}}
\begin{equation}
\reverseProduct_{i=a}^b X_i \coloneq \prod_{\mathclap{i=-b}}^{-a} X_{-i} = \begin{cases}
X_{b} X_{b-1} \dots X_{a+1} X_{a},& a \leq b,\\
I,& a > b.
\end{cases} \label{eq:timing-stability:reverse-product-def}
\end{equation}
\ifpaper{\vspace{-0.9em}}
\ifreport{$I$ is the unity matrix and $e_j=\mat{0 & \dots & 0 & 1 & 0 & \dots & 0}^\transp$ the $j$-th standard basis vector, both of appropriate dimension.}

\ifpaper{Positive definiteness of functions and matrices is denoted by $f(x)\succ 0$ and $P\succ0$.}
\ifreport{
	\begin{definition}[Positive Definiteness] \label{def:positive-definite}
		For the functions $f,g: \R^n \mapsto \R$, we define
		\begin{align}
		f(x) &\succ g(x) \quad\ifpaper{\!\!\!\!} :\Leftrightarrow \quad\ifpaper{\!\!\!\!} \begin{lrdcases}f(x)>g(x),& x \neq 0\\ f(x)=g(x),& x=0 \end{lrdcases} \quad\ifpaper{\!\!\!\!} \forall x \in \R^n.
		\intertext{For the symmetric matrices $F=F^\transp,G=G^\transp\in \mathbb R^{n \times n}$,}
		F &\succ G \quad :\Leftrightarrow \quad x^{\transp} (F-G)\, x \succ 0.
		\end{align}
		
		\ifreport{To define positive semidefiniteness, negative definiteness and negative semidefiniteness ($\succeq, \prec, \preceq$), the relation $>$ is replaced by $\geq, <, \leq$ respectively.}
		\ifreport{\xspace
			The restriction to symmetric $F$ and $G$ simplifies the further derivations, but does not restrict the results because only the symmetric part of a matrix contributes to the quadratic form:
			\begin{equation}
			x^\transp M x = x^\transp \left(\frac{M + M^\transp}{2}\right) x \quad \forall x \in \R^n, M \in \R^{n \times n}.
			\end{equation}}
	\end{definition}
	
	\begin{definition}[Norm\ifreport{~ \cite[pp. 597, 601]{Bernstein2009}}]
		\label{def:norm}
		The function $\|\cdot\|$ : $S \mapsto [0,\infty)$ is a vector norm on $S=\R^n$ or matrix norm on $S=\R^{m \times n}$ iff $\forall x,y \in S,   \alpha \in \R$
		\ifpaper{\allowdisplaybreaks}
		\begin{subequations}
			\begin{align}
			\|x\| &
			\begin{cases}
			> 0, & x \neq 0,\\
			= 0, & x = 0
			\end{cases}
			\\
			\|\alpha x\| &= |\alpha| \|x\| \\
			\text{and} \quad 	\| x + y \| &\le \|x\| + \|y\|. \label{eq:timing-stability:norm-additive}
			\end{align}
		\end{subequations}
	\end{definition}
	
	\begin{definition}[Euclidean \ifreport{Vector }Norm]
		For $x \in \R^{n}$, $|x|\coloneq \sqrt{x^\transp x}$ denotes the Euclidean norm, which is a vector norm.
	\end{definition}
	
	\begin{theorem}[Equivalence of Norms] \label{theorem:timing-stability:norm-comparison}
		\ifpaper{~}All norms $\|\cdot\|_a$, $\|\cdot\|_b$ defined on the same space $S$ are equivalent up to a bounded factor:
		\begin{equation}
		\ifpaper{\renewcommand{\quad}{~}}
		\forall \|\cdot\|_a, \|\cdot\|_b \quad \exists c_1,c_2>0\ifpaper{\!}:\quad \forall x \ifpaper{~}\quad c_1\|x\|_b \le \|x\|_a \le c_2\|x\|_b.\ifpaper{\!}
		\end{equation}
		\cite[Theorem 9.1.8, Definition 9.2.1]{Bernstein2009}
	\end{theorem}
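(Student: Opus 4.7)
The plan is to reduce the problem to comparing each norm against a fixed reference norm, namely the Euclidean norm $|\cdot|$, and then to combine the two resulting bounds by transitivity. First, I would fix the standard basis $e_1,\dots,e_N$ of the relevant space (with $N=n$ for vectors, or $N=mn$ after identifying $\R^{m \times n}$ with $\R^{mn}$ via $\vectorize$) and show that any norm $\|\cdot\|_a$ is bounded above by a constant multiple of $|\cdot|$: expanding $x=\sum_{i=1}^N x^{(i)} e_i$ and applying the triangle inequality together with the Cauchy--Schwarz inequality gives $\|x\|_a \le \bigl(\sum_i \|e_i\|_a^2\bigr)^{1/2} |x|$. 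The reverse triangle inequality $\bigl|\|x\|_a - \|y\|_a\bigr| \le \|x-y\|_a$ then shows that $\|\cdot\|_a$ is Lipschitz, in particular continuous, as a map from $(\R^N, |\cdot|)$ to $\R$.

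Next, I would invoke the extreme value theorem on the Euclidean unit sphere $\{x : |x|=1\}$, which is compact by Heine--Borel in finite dimensions. The continuous function $x \mapsto \|x\|_a$ attains a minimum $m_a$ and maximum $M_a$ there, and crucially $m_a>0$, because a norm vanishes only at the origin, which is excluded from the sphere. Applied to an arbitrary nonzero $x$ via its normalization $x/|x|$, this yields $m_a |x| \le \|x\|_a \le M_a |x|$; at $x=0$ both sides vanish, so the bound holds unconditionally. Carrying out the same argument for $\|\cdot\|_b$ with constants $m_b, M_b$ and chaining the two sets of inequalities produces the desired $c_1 \|x\|_b \le \|x\|_a \le c_2 \|x\|_b$ with $c_1 = m_a / M_b$ and $c_2 = M_a / m_b$.

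The principal obstacle is the appeal to compactness of the Euclidean unit sphere, which relies on the finite-dimensionality of the ambient space; I would not reprove Heine--Borel but simply invoke it. It is worth noting that this argument genuinely needs $N<\infty$: in infinite-dimensional spaces the statement is false, so the finiteness of $n$ (respectively $mn$) is doing real work. Since the paper only cites this theorem from Bernstein, one could equally well just reference it directly; the sketch above reconstructs the textbook proof for completeness.
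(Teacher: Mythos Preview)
Your argument is the standard and correct proof: bound an arbitrary norm against the Euclidean norm via the triangle inequality, deduce continuity, then use compactness of the Euclidean unit sphere (Heine--Borel, which requires finite dimension) to extract strictly positive lower and upper bounds, and finally chain the two comparisons. The paper itself gives no proof at all --- it simply states the result with a citation to \cite[Theorem 9.1.8, Definition 9.2.1]{Bernstein2009} --- so your sketch goes beyond what the paper does; it is exactly the textbook argument one would find behind that citation, and you correctly flag that the finite-dimensionality of $S$ is essential.
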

	
	\begin{definition}[Matrix-valued Limit]
		\label{def:timing-stability:limit}
		Analogous to the classical epsilon-delta-definition \citep{Stover2019Limit}, a matrix- or vector-valued limit is defined as
		\begin{align}
		\lim\limits_{x\to a} f(x) = y  \quad \ifpaper{\hspace{-.5em}} :\Leftrightarrow \quad \ifpaper{\hspace{-.5em}} \Big[\,&\forall \epsilon > 0 ~\exists \delta(\epsilon) > 0 \text{ such that }\quad \nonumber\\&\forall x \text{ with } \|x-a\|_X<\delta:\quad \ifpaper{\nonumber\\&\hspace{4.5em}} \|f(x) - y\|_Y < \epsilon\,\Big].
		\end{align}
		
		Due to \cref{theorem:timing-stability:norm-comparison}, the result is independent of the chosen norms $\|\cdot\|_X$ and $\|\cdot\|_Y$.
	\end{definition}
	
	\begin{definition}[Submultiplicative Matrix Norm\ifreport{~\cite[p. 604]{Bernstein2009}}]
		A matrix norm $\|\cdot\|$ is submultiplicative iff
		\begin{equation}
		\|XY\|\leq \|X\|\|Y\| \quad \forall X,Y \in \R^{n \times n}.
		\end{equation}
	\end{definition}
	
	\begin{definition}[Equi-Induced Matrix Norm\ifreport{~\cite[pp. 607 f.]{Bernstein2009}}]
		\label{def:timing-stability:equi-induced-norm}
		Every vector norm $\|\cdot\|_v$ on $\R^n$ leads to a corresponding \emph{equi-induced matrix norm} $\| \cdot \|_v'$ on $\R^{n \times n}$, defined by
		\begin{equation}
		\|M\|_v' \coloneq \max_{x \in \R^n \setminus \lbrace 0 \rbrace} \ifpaper{\!}\frac{\|Mx\|_v}{\|x\|_v}
		\ifreport{= \max_{\substack{x \in \R^n\text{ with }\|x\|_v = 1}} \|Mx\|_v},
		\end{equation}
		which is submultiplicative.\ifpaper{ \cite[pp. 607 f.]{Bernstein2009}} \ifreport{The prefix \enquote{equi-} denotes that $M$ is square.}
	\end{definition}
	
	\begin{definition}[Spectral Norm]
		The spectral norm
		\begin{align}
		\|M\|_\sigma :=&  \spectralRadius{A^{\transp}A} = \ifreportX{\bar \sigma(M) =}
		\max\limits_{x \in \R^n \setminus \{0\}} \frac{|Mx|}{|x|}
		\ifreportX{= \max\limits_{x~\text{with}~|x|=1} \underbrace{\sqrt{x^{\transp} M^{\transp} M x}}_{|Mx|}}
		\end{align}
		of $M$ is the maximum singular value $\bar\sigma(M)$, which describes the maximum growth of the euclidean norm \mbox{$|\cdot|$} due to multiplication with $M$. It is the equi-induced matrix norm of the euclidean vector norm, and therefore a submultiplicative matrix norm. 
		\cite[pp. 328, 
		603, 
		607--609
		]{Bernstein2009}%
	\end{definition}
	
	\begin{theorem}
		\label{theorem:timing-stability:norm-of-exp}
		If $\|\cdot\|$ is a submultiplicative matrix norm on $\R^{n\times n}$ and $\|I_{n \times n}\|=1$, then
		\begin{equation}
		\|\ee^{A\delta}\| \leq \ee^{\|A\delta\|} = \ee^{\|A\| |\delta|} \quad \forall A \in \R^{n \times n}, \delta > 0
		\end{equation}
		\cite[Proposition 11.1.2]{Bernstein2009}. The requirement $\|I_{n \times n}\|=1$ is fulfilled for all equi-induced norms \cite[Theorem 9.4.2]{Bernstein2009}, \ifpaper{including}\ifreport{which includes}\xspace the spectral norm.
	\end{theorem}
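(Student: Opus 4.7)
The plan is to use the power series definition $\ee^{A\delta} = \sum_{k=0}^{\infty} \frac{(A\delta)^k}{k!}$, bound each term with the defining properties of the norm, and identify the majorant as a scalar exponential series. First I would work with the partial sum $S_N \coloneq \sum_{k=0}^{N} \frac{(A\delta)^k}{k!}$, which converges to $\ee^{A\delta}$ by the standard convergence of the matrix exponential, and then show the bound for $\|S_N\|$ and pass to the limit.

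Applying the triangle inequality from \cref{def:norm} termwise yields $\|S_N\| \le \sum_{k=0}^{N} \frac{\|(A\delta)^k\|}{k!}$. For $k=0$, $(A\delta)^0 = I$ and the assumed normalization $\|I\|=1$ gives the correct first term; this is precisely where the hypothesis $\|I_{n\times n}\|=1$ is used. For $k\ge 1$, iterated submultiplicativity gives $\|(A\delta)^k\| \le \|A\delta\|^k$. Combining these and letting $N\to\infty$ yields
\begin{equation}
\|\ee^{A\delta}\| \le \sum_{k=0}^{\infty}\frac{\|A\delta\|^k}{k!} = \ee^{\|A\delta\|},
\end{equation}
which is the central inequality. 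The auxiliary identity $\ee^{\|A\delta\|} = \ee^{\|A\|\,|\delta|}$ then follows from absolute homogeneity of the norm (\cref{def:norm}) applied to the scalar $\delta$.

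The only subtlety I would anticipate is the interchange of limit and norm that is needed to convert the bound on $\|S_N\|$ into a bound on $\|\ee^{A\delta}\|$. This I would justify via the reverse triangle inequality $\bigl|\|S_N\| - \|\ee^{A\delta}\|\bigr| \le \|S_N - \ee^{A\delta}\|$, whose right-hand side tends to zero because $S_N$ converges to $\ee^{A\delta}$ in, e.g., the spectral norm, and all norms on $\R^{n\times n}$ are equivalent by \cref{theorem:timing-stability:norm-comparison}. Everything else is a routine termwise estimate, so this continuity argument is the main (and essentially only) non-trivial step in the proof.
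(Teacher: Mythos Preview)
Your argument is correct and is the standard proof of this inequality. The paper itself does not supply a proof but simply cites \cite[Proposition 11.1.2]{Bernstein2009}; your power-series estimate with termwise submultiplicativity and the continuity-of-norm passage to the limit is precisely how that cited result is established, so there is nothing to compare.
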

}
\ifpaper{The Cholesky Decomposition of $P \succ 0$ is $P =: P^{1/2} (P^{1/2})^{\transp}$. We use common definitions and properties of matrix norms from \cite{Bernstein2009}; a detailed list is detached to the extended version of this paper \specificReferenceToReport{sec:timing-stability:preliminaries}.}
\ifreport{
	\begin{theorem}[Cholesky Decomposition\ifreport{~\cite[Fact 8.9.38]{Bernstein2009}}] \label{theorem:timing-stability:cholesky}
		Any $P \succ 0$ \ifreport{$(\succeq 0)$\xspace} can be decomposed into
		$
		P =: P^{1/2} (P^{1/2})^{\transp}
		$ such that $P^{1/2}$ is lower triangular with positive \ifreport{(nonnegative) }diagonal entries\ifreport{. For $P \succ 0$, $P^{1/2}$ is}\ifpaper{,}\xspace invertible and uniquely defined. 
	\end{theorem}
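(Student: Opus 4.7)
The plan is to prove existence by induction on the dimension $n$ via explicit block elimination, and then derive uniqueness and invertibility as easy consequences.

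For $n=1$, the claim reduces to choosing the unique nonnegative root $\sqrt{p_{11}}$. For the inductive step, I would partition $P = \mat{p_{11} & v^\transp \\ v & P_{22}}$ with $p_{11} \in \R$, $v \in \R^{n-1}$, $P_{22} = P_{22}^\transp \in \R^{(n-1)\times(n-1)}$, and make the ansatz $L = \mat{\ell_{11} & 0 \\ \ell & L_{22}}$ with $L_{22}$ lower triangular. Expanding $L L^\transp = P$ and matching blocks forces, successively, $\ell_{11} = \sqrt{p_{11}}$, then $\ell = v/\ell_{11}$, and finally $L_{22} L_{22}^\transp = P_{22} - v v^\transp / p_{11}$; applying the inductive hypothesis to the Schur complement $S \coloneq P_{22} - v v^\transp / p_{11}$ yields the remaining block $L_{22}$.

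The core obstacle is showing that $S$ inherits (semi)definiteness from $P$ so that the induction applies. For $P \succ 0$, I would first note $p_{11} = e_1^\transp P e_1 > 0$ (so the division is legitimate) and then, given arbitrary $w \in \R^{n-1} \setminus \{0\}$, evaluate $x^\transp P x$ for $x = \mat{-v^\transp w / p_{11} & w^\transp}^\transp$; a direct computation shows this equals $w^\transp S w$, which is positive since $x\neq 0$, giving $S \succ 0$. The semidefinite case needs an extra subcase when $p_{11} = 0$: then linearity of $x^\transp P x$ in its first component forces $v = 0$, after which the construction trivially reduces to applying induction to $P_{22} \succeq 0$ with a leading zero row and column in $L$.

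Uniqueness for $P \succ 0$ follows because the sign constraint $\ell_{ii} > 0$ forces the positive square root at each recursive step, leaving no freedom in the off-diagonal entries either; invertibility is then immediate from $\det L = \prod_i \ell_{ii} > 0$.
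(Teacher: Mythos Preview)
Your argument is correct and is the standard inductive proof via the Schur complement. The block identification is right, the definiteness of $S = P_{22} - vv^\transp/p_{11}$ follows exactly from evaluating $x^\transp P x$ at the vector you indicate, and the semidefinite degenerate case $p_{11}=0 \Rightarrow v=0$ is handled properly. Uniqueness and invertibility in the strict case are immediate from the forced positive square root at each recursive step, as you say.

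However, there is no proof in the paper to compare against: the theorem is stated with a citation to \cite[Fact 8.9.38]{Bernstein2009} and used as a black box. So your write-up supplies an argument where the paper deliberately defers to the literature. If you want to match the paper's treatment, a one-line reference suffices; if you want to be self-contained, what you have is fine.
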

}

\section{Approach}
This section presents the high-level structure of our approach. Details are given in the subsequent sections.

\emph{Discretization:} In \cref{sec:robust-stability:discretization}, we apply a time discretization
\ifpaper{\vspace*{-.5em}}
\begin{equation}
	x_k \coloneq x(t_k^+) \coloneq \lim \limits_{\epsilon \to 0^+} x(kT + T/2 +\epsilon),
\end{equation}
which leads to the linear discrete-time system $x_{k+1} = A_k x_k$, whose transition matrix $ A_k = A(\Delta t_k) $ depends on the current timing vector
\begin{equation}
\Delta t_k \coloneq \mat{\Delta t_{\subsActuate,1,k}& \dots& \Delta t_{\subsActuate,m,k}& \Delta t_{\subsMeasure,1,k}& \dots& \Delta t_{\subsMeasure,p,k}}^{\transp}.
\end{equation}
The offset $+T/2$ was chosen such that the sensing and actuation events cannot move across the discretization times. This ensures that $A_k$ depends only on $\Delta t_k$.

In the following, the subscript $k$ of timing variables $\Delta t_{\dots}$ is often omitted.
To further simplify the notation, the system dynamics are defined as right-side continuous, so that always $x(t^+) = x(t)$. Therefore, the discretization is simplified to
$x_k :=  x(t_k)$ with $t_k := kT + T/2$.

Stability of the discretized system is easier to analyze, but still \emph{equivalent} to the desired continuous-time stability:

\acrodef{DGES}{Discrete-Time Globally Uniform Exponential Stability}
\begin{definition}
	The discretized control loop
	\begin{equation}
	x_{k+1} = A_k x_k, \quad A_k \in \mathcal A \subset \R^{n \times n} \label{eq:jsr-uncertain-discrete-sys}
	\end{equation}
	admits \emph{\acl{DGES}}\acused{DGES}, denoted as \enquote{$\mathcal{A}$ is \ac{DGES}$(\rho, C)$}, iff there exist constants $C \in [1,\infty)$ and $\rho \in (0, 1)$ such that
	\begin{equation}
	|x_k| \leq C|x_0| \rho^{k} \quad \forall k\geq0, \forall x_0 \in \R^n, \forall A_{0}, A_{1}, ... \in \mathcal{A}.
	\end{equation}	
	\ifreport{(Note that the restrictions $C\geq1$ and $\rho \neq 0$ immediately follow from the above equation.) }
\end{definition}
Here,
$
\mathcal{A}=\{A(\Delta t_k) ~|~ \underline{\Delta t}_{\{\subsActuate,\subsMeasure\},i} < \Delta t_{\{\subsActuate,\subsMeasure\},i,k} < \overline{\Delta t}_{\{\subsActuate,\subsMeasure\},i}\}
$
is the set of possible $A_k$ for all possible timings $\Delta t_{\{\subsActuate,\subsMeasure\},i,k}$.
\begin{theorem}
	\label{theorem:timing-stability:cges-equiv-dges}
	For the given control loop, \ac{CGES} $\Leftrightarrow$ \ac{DGES}.
\end{theorem}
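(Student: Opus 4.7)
The forward direction ($\Rightarrow$) is immediate: I would evaluate the CGES bound at the sampling instants $t_k = kT+T/2$. Since $t_0 = T/2$, this gives $|x_k| = |x(t_k)| \leq D|x_0|\ee^{\lambda k T}$, which is DGES with $\rho = \ee^{\lambda T} \in (0,1)$ and $C = D$.

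For the reverse direction ($\Leftarrow$), the plan is to sandwich the continuous trajectory between its discrete samples via a uniform one-period bound: there exists $M \geq 1$, depending only on the system matrices and the timing bounds but not on $k$ or the realized timings, such that $|x(t)| \leq M |x_k|$ for all $t \in [t_k, t_{k+1}]$. Given this, for arbitrary $t \geq t_0$ I choose $k = \lfloor (t-t_0)/T \rfloor$, apply DGES, and use $k \geq (t-t_0)/T - 1$ to obtain $|x(t)| \leq M|x_k| \leq MC|x_0|\rho^{k} \leq (MC/\rho)|x_0|\ee^{\lambda (t-t_0)}$ with $\lambda = \ln(\rho)/T < 0$, which is CGES with $D = \max(1, MC/\rho)$.

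The real work is therefore in the one-period bound. I would decompose $[t_k, t_{k+1}]$ into a finite sequence of continuous flow segments interleaved with discrete jumps: there are exactly $m+p+1$ events per period (one per actuator update, one per sensor sample, and the controller state update at $t_{k+1}$). Between events, only $x\idxPlant$ evolves, so the extended state obeys $\dot x = F x$ for a fixed block matrix $F$ built from $A\idxPlant$ and $B\idxPlant$; a segment of length $\tau \leq T$ contributes a factor at most $\ee^{\|F\|_\sigma T}$ by the standard bound $\|\ee^{F\tau}\|_\sigma \leq \ee^{\|F\|_\sigma \tau}$. Each event acts as a fixed linear map on $x$ (sensor $i$ overwrites $y\idxDiscrete^{(i)}$ with $(C\idxPlant x\idxPlant)^{(i)}$; actuator $j$ the analog on $u$; controller: $x\idxDiscrete \leftarrow A\idxDiscrete x\idxDiscrete + B\idxDiscrete y\idxDiscrete$), each with finite spectral norm. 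Multiplying at most $m+p+2$ flow factors with $m+p+1$ jump factors and taking the maximum over the finitely many possible event orderings yields a uniform $M$. The main obstacle is the bookkeeping here: one must argue that although the ordering of sensor and actuator events varies with timing, the product of their norms is bounded by a timing-independent constant (each individual map is fixed, and only finitely many orderings occur), and that the half-period offset $T/2$ ensures no event coincides with a sampling instant $t_k$, so the decomposition partitions each interval cleanly regardless of timing.
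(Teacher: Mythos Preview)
Your proposal is correct and follows essentially the same route as the paper: the paper also reduces \ac{CGES}$\Leftarrow$\ac{DGES} to a uniform one-period growth bound (\cref{theorem:timing-stability:growth-bound}), proved exactly as you outline by multiplying spectral-norm bounds for the at most $m+p+1$ jump matrices and the interleaved flows $\ee^{A\idxContinuous\delta}$, and then chains this with \ac{DGES} via $k=\lfloor (t-t_0)/T\rfloor$. Two minor remarks: your inequality $k \geq (t-t_0)/T - 1$ is the correct direction (the paper's printed inequality $\rho^{k}\le \ee^{\lambda(t-t_0)}$ from $k\le (t-t_0)/T$ actually points the wrong way for $\rho<1$, so your version is cleaner here); and your claim that no event coincides with $t_k$ is not quite right---the controller update $E_{\subsCompute}$ sits exactly at $t_k$---but right-continuity absorbs it into $x_k$, so the bound is unaffected.
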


\begin{proof}
	The proof given in \cref{sec:timing-stability:discretization-proof} works by bounding the overshoot inbetween two discretization points. \ifpaper{\qed}
\end{proof}

Next, we want to show \ac{DGES} by a \ac{CQLF}: Find $P \in \R^{n \times n}$ such that
\begin{align*}
V_P(x):=x^{\transp} P x &\succ 0 \quad 
\text{with} \quad V_P(A_k x) \prec V_P(x) \quad \forall A_k \in \mathcal{A}.
\end{align*}
\ifpaper{\vspace{-2.5em}}
\paragraph*{Difficulty:} 
To the best of our knowledge, the straightforward extension of an existing method is not feasible:
	
A direct numerical approach based on a grid of possible $\Delta t_k$ (\eg grid-and-bound as in \cite{Heemels2010}) suffers from \emph{exponential complexity} with regard to the number $m+p$ of sensors and actuators, which is also the dimension of the timing parameter space.
	
Similarly, an analytical approach which directly uses an explicit expression for $A(\Delta t_k)$ suffers from the prohibitively large number of case distinctions corresponding to the $(m+p)!$ possible orderings of sensor and actuator times.
\ifpaper{\vspace{-2em}}
\paragraph*{Decomposition (\cref{sec:robust-stability:splitting}):} We avoid these difficulties by breaking up the dynamics into a sum:
\begin{theorem}[Decomposition]
	 The transition matrix $A$, which depends on $m+p$ scalar timing variables, can be split into a sum of functions of one scalar parameter each:
	 	\allowdisplaybreaks
	\begin{align}
	A(\Delta t) = &A(\Delta t = 0) +  \sum_{i=1}^{m} \Delta A_{\subsActuate,i} (\Delta t_{\subsActuate,i})  \ifpaper{\nonumber \\&} +
	\sum_{j=1}^{p} \Delta A_{\subsMeasure,j} (\Delta t_{\subsMeasure,j})  \nonumber \\& +
	\sum_{i=1}^{m} \sum_{j=1}^{p}  \Delta A_{\subsActuate\subsMeasure,i,j} (\Delta t_{\subsMeasure,j} - \Delta t_{\subsActuate,i}),
	\label{eq:timing-stability:splitting}
	\end{align}
	where $A(\Delta t=0)$ is the nominal case and $\Delta A_{\dots}$ are \enquote{deviations} that obey $\lim_{|\Delta t| \to 0} \Delta A_{\dots}=0$.
	\label{theorem:timing-stability:splitting}
\end{theorem}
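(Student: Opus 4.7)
The plan is to derive $A(\Delta t)$ by direct computation: I would evaluate $x_{k+1}$ in terms of $x_k$ using variation of constants on $[t_k, t_{k+1}]$, then read off the decomposition from the explicit expressions. The key observation that makes this work is that the four sub-vectors of the state (plant $x\idxPlant$, controller $x\idxDiscrete$, measurement buffer $y\idxDiscrete$, actuation $u$) are only rank-one coupled by each sensor sample and each actuator update, and the pairwise interaction of actuator $i$ with sensor $j$ reduces to an integral whose bounds depend only on the scalar $\beta_j - \alpha_i$, thanks to time-invariance of the plant dynamics.

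For concreteness, I would shift the time origin to $(k+1)T$ so that actuator $i$ updates at $\alpha_i := \Delta t_{\subsActuate,i,k+1}$ and sensor $j$ samples at $\beta_j := \Delta t_{\subsMeasure,j,k+1}$, both in $(-T/2,T/2)$; the controller update then sits at the right endpoint $t_{k+1}$. Because $x\idxDiscrete$ stays constant throughout the interval, every actuator update writes $u^{(i)} \leftarrow C\idxDiscrete^{(i)} x_{\subsDiscrete,k}$, so $u_{k+1} = C\idxDiscrete x_{\subsDiscrete,k}$ is independent of $\Delta t$ and feeds purely into $A(0)$. Applying variation of constants to $\dot x\idxPlant = A\idxPlant x\idxPlant + B\idxPlant u(\tau)$ with $u(\tau)$ piecewise constant and jumping in component $i$ at time $(k+1)T + \alpha_i$, I obtain $x_{\subsPlant,k+1}$ as a nominal part plus a per-actuator correction of the form $\int_0^{T/2 - \alpha_i} \ee^{A\idxPlant \sigma}\,\dDiff\sigma \cdot B\idxPlant e_i (C\idxDiscrete^{(i)} x_{\subsDiscrete,k} - u_k^{(i)})$, which depends on $\alpha_i$ alone. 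The same formula applied on $[t_k, (k+1)T + \beta_j]$ yields $y_{\subsDiscrete,k+1}^{(j)} = C\idxPlant^{(j)} x\idxPlant((k+1)T + \beta_j)$ as a nominal-plus-$\beta_j$ term plus, for each actuator $i$, a jump contribution $\int_0^{\max(0, \beta_j - \alpha_i)} \ee^{A\idxPlant \sigma}\,\dDiff\sigma \cdot B\idxPlant e_i (C\idxDiscrete^{(i)} x_{\subsDiscrete,k} - u_k^{(i)})$; the substitution $\sigma \mapsto (k+1)T + \beta_j - \tau$ shows that both the upper bound $\max(0, \beta_j - \alpha_i)$ and the integrand depend only on $\beta_j - \alpha_i$. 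Finally, $x_{\subsDiscrete,k+1} = A\idxDiscrete x_{\subsDiscrete,k} + B\idxDiscrete y_{\subsDiscrete,k+1}$ inherits this decomposition from $y_{\subsDiscrete,k+1}$.

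Collecting the matrix coefficients, terms with all deviations at zero form $A(0)$, terms depending on a single $\alpha_i$ or $\beta_j$ form $\Delta A_{\subsActuate,i}(\alpha_i)$ and $\Delta A_{\subsMeasure,j}(\beta_j)$, and the per-pair difference terms form $\Delta A_{\subsActuate\subsMeasure,i,j}(\beta_j - \alpha_i)$. The limit $\lim_{|\Delta t|\to 0}\Delta A_{\dots}=0$ is then immediate by inspection: each deviation is constructed as a difference against its value at the zero argument, and the cross term additionally has $\max(0, \beta_j - \alpha_i) \to 0$ so the defining integral collapses to zero.

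The main obstacle is the variation-of-constants bookkeeping for the sensor-sample block, specifically verifying that the jump contribution from actuator $i$ to $x\idxPlant$ evaluated at time $(k+1)T + \beta_j$ really collapses into a function of only $\beta_j - \alpha_i$ without residual separate dependence on $\alpha_i$ and $\beta_j$. This is precisely where plant time-invariance is essential; without it, one would be forced into the $(m+p)!$ ordering-dependent case distinctions that the introduction explicitly warns against, and the pairwise decomposition would fail.
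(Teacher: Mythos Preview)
Your approach is correct and reaches the same decomposition, but by a genuinely different route than the paper. The paper works abstractly in the linear-impulsive-system framework: it writes $A_k$ as a reversed product of event matrices $E_i$ and flows $\ee^{A\idxContinuous(\tau_i-\tau_{i-1})}$, expands this product via a binomial identity into $2^{N_e-1}$ summands indexed by which factors $(E_i-I)$ appear, and then proves algebraically---from the block structure of $E_{\subsActuate,i}$, $E_{\subsMeasure,j}$---that every summand with three or more $(E_i-I)$ factors vanishes and, among the two-factor summands, only the ``measure after actuate'' combination survives. You instead unpack the block structure from the outset and compute each state component by variation of constants, so that the cross term emerges directly as an integral over $[\alpha_i,\beta_j]$ whose value depends only on $\beta_j-\alpha_i$ by time-invariance. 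The paper's event-matrix calculus is more systematic (the zero-product lemmas are stated once and applied mechanically, and the framework would accommodate other event types), whereas your direct computation is more elementary (no auxiliary lemmas, no $2^{N_e-1}$-term expansion to organise) and makes the role of plant time-invariance explicit rather than encoding it in the algebraic identities $(E_{\subsActuate,i}-I)\ee^{A\idxContinuous\delta}=E_{\subsActuate,i}-I$ and $\ee^{A\idxContinuous\delta}(E_{\subsMeasure,j}-I)=E_{\subsMeasure,j}-I$. The two are really the same computation in different coordinates: your ``rank-one coupling'' observation is precisely the concrete content of the paper's zero-product lemmas.
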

\begin{proof}See  \cref{sec:robust-stability:splitting} for the proof and results. \ifpaper{\qed}
\end{proof}
Loosely interpreted, $\Delta A_{\subsActuate,i}$ is the deviation of $A$ resulting from the timing of the $i$-th actuator, $\Delta A_{\subsMeasure,j}$ corresponds to the $j$-th sensor, and $\Delta A_{\subsActuate\subsMeasure,i,j}$ to the influence of actuator $i$ on sensor $j$. Explicit expressions are given in \cref{sec:robust-stability:splitting}.

\ifpaper{\vspace{-1.5em}}
\paragraph*{Stability by Norm Bounding (\cref{sec:timing-stability:norm,sec:timing-stability:normbound,sec:lmi}):} As assumed in the problem setting, the nominal case (perfect timing $\Delta t=0$) is stable and therefore achieves \ac{DGES} with $\rho < 1$. The resulting safety margin $1-\rho > 0$ can be used to prove stability up to a certain amount of timing deviation. For this, we use a matrix norm corresponding to a \ac{CQLF}:

\begin{theorem} Let $V_P(x)=x^\transp P x$, $P \in \R^{n\times n}$, be a positive definite\ifreport{\xspace(Lyapunov candidate)}\xspace function. Then the \emph{$P$-ellipsoid norm}
	\begin{equation}
	\|A\|_P := \max_{x \neq 0} \sqrt{\frac{V_P(Ax)}{V_P(x)}}
	\end{equation}
	is a submultiplicative matrix norm.
	\label{theorem:timing-stability:p-norm}
\end{theorem}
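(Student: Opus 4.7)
The plan is to reduce the $P$-ellipsoid matrix norm to an equi-induced matrix norm of a suitable vector norm, and then invoke the already-established submultiplicativity of equi-induced norms (\cref{def:timing-stability:equi-induced-norm}).

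First, because $V_P$ is positive definite we have $P \succ 0$, so by the Cholesky decomposition (\cref{theorem:timing-stability:cholesky}) we can write $P = P^{1/2} (P^{1/2})^{\transp}$ with an invertible lower-triangular factor $P^{1/2}$. I would then introduce the candidate vector norm
\begin{equation}
    \|x\|_P \coloneq \sqrt{V_P(x)} = \sqrt{x^{\transp} P^{1/2} (P^{1/2})^{\transp} x} = \bigl|(P^{1/2})^{\transp} x\bigr|,
\end{equation}
i.e., the Euclidean norm composed with the invertible linear map $x \mapsto (P^{1/2})^{\transp} x$. Verifying the three norm axioms from \cref{def:norm} for $\|\cdot\|_P$ is essentially routine: positivity and definiteness follow because $(P^{1/2})^{\transp}$ is invertible (so $(P^{1/2})^{\transp} x = 0 \Leftrightarrow x = 0$); absolute homogeneity is inherited from $|\cdot|$; and the triangle inequality follows by applying $|\cdot|$ to $(P^{1/2})^{\transp}(x+y)$ and using the triangle inequality of the Euclidean norm.

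Next I would observe that, with this vector norm,
\begin{equation}
    \|A\|_P = \max_{x \neq 0} \sqrt{\frac{V_P(Ax)}{V_P(x)}} = \max_{x \neq 0} \frac{\|Ax\|_P}{\|x\|_P},
\end{equation}
so $\|\cdot\|_P$ on $\R^{n\times n}$ is exactly the equi-induced matrix norm of the vector norm $\|\cdot\|_P$ on $\R^n$ in the sense of \cref{def:timing-stability:equi-induced-norm}. That definition already states that equi-induced matrix norms are submultiplicative, so the claim follows immediately. The only minor subtlety I anticipate is making sure the maximum in the definition of $\|A\|_P$ is well-defined (attained, finite, and nonnegative); this is clear because $\sqrt{V_P(Ax)/V_P(x)}$ depends only on the direction $x/\|x\|_P$ and hence can be taken over the compact unit sphere $\{x : \|x\|_P = 1\}$, on which it is continuous. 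No deep step is required; the main idea is just to recognise the $P$-ellipsoid norm as an equi-induced norm in disguise.
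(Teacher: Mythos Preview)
Your proposal is correct and follows essentially the same route as the paper: show that $\sqrt{V_P(x)}$ is a vector norm on $\R^n$, recognise $\|A\|_P$ as its equi-induced matrix norm, and invoke submultiplicativity of equi-induced norms. The only cosmetic difference is that the paper cites \cite[Fact 9.7.30]{Bernstein2009} for the vector-norm property, whereas you verify it explicitly via the Cholesky factor and the Euclidean norm.
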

\begin{proof}
	See \cref{sec:timing-stability:norm}. \ifpaper{\qed}
\end{proof}
This norm $\|A\|_P$ represents the worst-case decay of $V_P(x)$ for the time-invariant system $x_{k+1}=Ax_k$:
\begin{equation}
\|A\|_P \leq \rho \quad \Leftrightarrow \quad \left(V_P(x_{k+1}) \leq \rho^2 V_P(x_k) \quad\forall x_k\right).
\end{equation}

In general, norm bounds can be highly pessimistic. However, this norm can \emph{accurately} capture stability of the \emph{nominal} case $x_{k+1}=A(\Delta t=0)x_k$, for which $\spectralRadius{ A(\Delta t=0) } < 1$ is the minimal possible stability factor $\rho$ for \ac{DGES}.
\begin{theorem}
There exists $P$ such that $\rho_{\mathrm{n}} \coloneq \|A(\Delta t = 0)\|_P$ is arbitrarily close to $\spectralRadius{ A(\Delta t=0) }$.
	\label{theorem:timing-stability:qlf-to-norm-simplified}
\end{theorem}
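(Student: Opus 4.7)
The plan is to reduce the $P$-ellipsoid norm to the ordinary spectral norm through the change of variables induced by the Cholesky factor of $P$, and then invoke the classical norm-vs-spectral-radius construction via the real Schur form and a diagonal scaling.

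First, using \cref{theorem:timing-stability:cholesky} write $P = P^{1/2}(P^{1/2})^{\transp}$, set $S \coloneq (P^{1/2})^{\transp}$, and substitute $y = Sx$ in the definition of $\|\cdot\|_P$. A direct calculation gives
\[
\|A_0\|_P = \max_{y\neq 0} \frac{|S A_0 S^{-1} y|}{|y|} = \|S A_0 S^{-1}\|_\sigma, \qquad A_0 \coloneq A(\Delta t = 0).
\]
Every real invertible $S$ arises this way from some real $P = S^{\transp} S \succ 0$ and conversely, so the theorem is equivalent to the statement that for every $\epsilon > 0$ there exists a real invertible $S$ with $\|S A_0 S^{-1}\|_\sigma \le \spectralRadius{A_0} + \epsilon$. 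The matching lower bound $\|A_0\|_P \ge \spectralRadius{A_0}$ follows from the standard eigenvector argument valid for any submultiplicative equi-induced matrix norm, so only the upper bound requires construction.

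To construct $S$, I would take the real Schur decomposition $A_0 = Q T Q^{\transp}$ with $Q$ orthogonal and $T$ block upper triangular, whose diagonal blocks are either $1\times 1$ (real eigenvalues) or $2 \times 2$ (complex conjugate eigenvalue pairs). For each such pair $\alpha \pm i\beta$ I apply a block-diagonal real similarity $M$ that brings the corresponding $2\times 2$ diagonal block into the normal form $\bigl(\begin{smallmatrix}\alpha & \beta \\ -\beta & \alpha\end{smallmatrix}\bigr)$; this step is crucial because this particular form has spectral norm $\sqrt{\alpha^2+\beta^2}$, exactly equal to the modulus of its eigenvalues. Finally, I apply a block-diagonal scaling $D_\delta$ whose $j$-th diagonal block equals $\delta^j I$ of appropriate size. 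Conjugation by $D_\delta$ leaves the diagonal blocks of $\tilde T \coloneq M^{-1} T M$ invariant, while multiplying each off-diagonal block at position $(j,k)$, $j<k$, by $\delta^{k-j} \to 0$. Hence $D_\delta \tilde T D_\delta^{-1}$ converges as $\delta \to 0^+$ to its block-diagonal part, whose spectral norm equals the maximum eigenvalue modulus, namely $\spectralRadius{A_0}$; by continuity of $\|\cdot\|_\sigma$ there exists $\delta>0$ realizing the bound. Setting $S \coloneq D_\delta M^{-1} Q^{\transp}$ and $P \coloneq S^{\transp} S$ completes the construction.

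The main obstacle is staying within the reals: the textbook argument via the complex Jordan canonical form directly produces a nearly-diagonal $D_\delta J D_\delta^{-1}$, but the associated $S$ is complex and hence incompatible with a real symmetric positive definite $P$. Detouring through the real Schur form and explicitly normalising the $2\times 2$ diagonal blocks is what ensures that the spectral norm of the block-diagonal part of the transformed matrix matches the spectral radius \emph{exactly}, so that all of the slack $\epsilon$ can be absorbed into the off-diagonal shrinkage controlled by $\delta$.
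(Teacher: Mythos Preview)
Your argument is correct, but the paper proceeds quite differently. Rather than building $S$ by hand from the real Schur form, the paper fixes a target $\bar\rho>\spectralRadius{A_0}$, rescales to the matrix $\bar\rho^{-1}A_0$, which is still Schur stable, and invokes the standard discrete-time Lyapunov equation result (\cref{theorem:timing-stability:lyapunov-dt-lti}): there exists $P\succ 0$ with $(\bar\rho^{-1}A_0)^{\transp}P(\bar\rho^{-1}A_0)-P\prec 0$, which unwinds to $\|A_0\|_P\le\bar\rho$. This is shorter and stays entirely within quadratic-Lyapunov theory, but it silently uses the standing hypothesis $\spectralRadius{A_0}<1$. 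Your construction via real Schur form, $2\times 2$ block normalisation, and the shrinking diagonal $D_\delta$ is the classical direct proof that $\inf_{\|\cdot\|}\|A_0\|=\spectralRadius{A_0}$ over induced norms; it is more explicit, does not require $\spectralRadius{A_0}<1$, and makes transparent \emph{why} the extremal $P$ degenerates (your $D_\delta$ becomes singular as $\delta\to 0$, matching the paper's \cref{remark:timing-stability:extreme-p-norm}). One minor looseness: after invoking Cholesky you silently drop the triangularity of $S$ and allow arbitrary invertible $S$ with $P=S^{\transp}S$; this is fine because the identity $\|A_0\|_P=\|SA_0S^{-1}\|_\sigma$ holds for any such $S$, but it would be cleaner to derive that identity directly from $V_P(x)=|Sx|^2$ rather than via the Cholesky factor.
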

\begin{proof}
	See \cref{sec:timing-stability:norm}, \cref{theorem:timing-stability:extreme-lyapunov-lti}. \ifpaper{\qed}
\end{proof}

To check stability for uncertain timing, choose any $P \succ 0$ for which $\rho_{\mathrm{n}}\ifpaper{\!}<1$. This exists by the previous theorem; the implementation is discussed later. Then, stability under uncertain timing can be shown if the summands $\Delta A_{\dots}$ in \eqref{eq:timing-stability:splitting}, which represent timing deviation, are small enough:
\begin{theorem}[Norm Bounding]
	\label{theorem:timing-stability:norm-bound}
	The system is \ac{DGES} if
	\ifpaper{\allowdisplaybreaks}
	\begin{align}
	\Bigg(\underbrace{\|A(\Delta t = 0)\|_P}_{= \rho_{\mathrm{n}}} +  \sum_{i=1}^{m} {\|\Delta A_{\subsActuate,i} (\Delta t_{\subsActuate,i})\|_P}
	\nonumber \\
	+
	  \sum_{j=1}^{p} {\|\Delta A_{\subsMeasure,j} (\Delta t_{\subsMeasure,j})\|_P}  	\nonumber \\
	+  {\sum_{i=1}^{m} \sum_{j=1}^{p}} {\|\Delta A_{\subsActuate\subsMeasure,i,j} ({\Delta t_{\subsMeasure,j} - \Delta t_{\subsActuate,i}})\|_P} \label{eq:timing-stability:norm-bound:sum} & \Bigg)  < 1 \nonumber\\
	~~~\forall  \Delta t_{\{\subsActuate,\subsMeasure\},i} \in \left(\underline{\Delta t}_{\{\subsActuate,\subsMeasure\},i};~ \overline{\Delta t}_{\{\subsActuate,\subsMeasure\},i} \right).\hspace{-5em}&\hspace{5em}
	\end{align}
\end{theorem}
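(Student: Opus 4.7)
The plan is to apply the triangle inequality of the submultiplicative matrix norm $\|\cdot\|_P$ (\cref{theorem:timing-stability:p-norm}) to the splitting of \cref{theorem:timing-stability:splitting}, thereby obtaining a uniform contraction $\|A_k\|_P\le\rho<1$, and then to convert this bound into exponential Euclidean-norm decay of $x_k$ via the associated quadratic form $V_P$.

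Fix any admissible timing vector $\Delta t_k$. Substituting the decomposition of $A(\Delta t_k)$ and applying the triangle inequality \eqref{eq:timing-stability:norm-additive} for $\|\cdot\|_P$ term by term yields exactly the left-hand side of the hypothesis as an upper bound for $\|A(\Delta t_k)\|_P$. By assumption this bound is strictly less than $1$ at every admissible timing; since each $\Delta A_{\dots}$ is continuous in its scalar timing argument and the closed timing box is compact, the map $\Delta t \mapsto \|A(\Delta t)\|_P$ attains its supremum $\rho$, so $\|A_k\|_P\le\rho<1$ holds uniformly for every $A_k \in \mathcal A$.

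Next, unfold the definition of $\|\cdot\|_P$: the bound $\|A_k\|_P\le\rho$ is equivalent to $V_P(A_k x)\le\rho^2 V_P(x)$ for every $x \in \R^n$. Iterating along any admissible switching sequence $A_0,A_1,\ldots\in \mathcal A$ gives $V_P(x_k)\le\rho^{2k}V_P(x_0)$. Since $P\succ 0$, the equivalence of norms (\cref{theorem:timing-stability:norm-comparison}) provides constants $c_1,c_2>0$ with $c_1|x|^2\le V_P(x)\le c_2|x|^2$ for all $x$, whence $|x_k|\le \sqrt{c_2/c_1}\,\rho^k |x_0|$, which is precisely \ac{DGES}$\bigl(\rho,\sqrt{c_2/c_1}\bigr)$.

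The only non-trivial technical point I anticipate is promoting the pointwise strict bound on the open timing box to a uniform strict bound on its closure; this is resolved by the continuity/compactness argument noted above and does not touch the substance of the theorem. Everything else is a direct combination of the two inputs: the triangle inequality for $\|\cdot\|_P$, which exists because \cref{theorem:timing-stability:p-norm} certifies it as a norm, and the additive decomposition of $A(\Delta t)$ supplied by \cref{theorem:timing-stability:splitting}; hence the main substantive step is simply the triangle inequality applied to that decomposition.
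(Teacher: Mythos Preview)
Your proof is correct and follows essentially the same route as the paper: apply the triangle inequality for $\|\cdot\|_P$ to the decomposition of \cref{theorem:timing-stability:splitting}, obtain a uniform contraction, and convert to Euclidean-norm decay via equivalence of norms. The paper packages the last two steps into \cref{theorem:timing-stability:norm-bound:generic} (bounding $\|\smallReverseProduct A_j\|_\sigma$ via submultiplicativity and matrix-norm equivalence) rather than iterating $V_P$ directly, and it omits the compactness argument you add for uniformity of $\rho$---but these are cosmetic differences, not alternative strategies.
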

\begin{proof}
	Consider $\|A(\Delta t)\|_P$ and apply \cref{theorem:timing-stability:splitting} and the triangle inequality to see that \eqref{eq:timing-stability:norm-bound:sum} implies $\|A(\Delta t)\|_P < 1$ for all possible $\Delta t$. This leads to \ac{DGES} as detailed in \cref{sec:timing-stability:norm}, \cref{theorem:timing-stability:norm-bound:generic}. \ifpaper{\qed}
\end{proof}

For the practical implementation, upper bounds for $\|\Delta A_{\dots}\|_P$ are computed in \cref{sec:timing-stability:normbound} and $P$ is determined by \acp{LMI} in \cref{sec:lmi}.

\paragraph*{Benefits:} The approach shows \ac{DGES} and therefore \ac{CGES} using norm bounds, which entails some conservatism. This will later be evaluated by experiments in \cref{sec:experiments}. On the other hand, the chosen method is particularly well-suited for analyzing \ac{MIMO} systems with moderate timing uncertainty:
\begin{theorem}[Stability implies timing robustness]
	\label{theorem:timing-stability:feasibility}
	If the nominal case $\Delta t=0$ is stable, then \cref{theorem:timing-stability:norm-bound} can show stability for some nonzero (possibly small) timing bounds.
\end{theorem}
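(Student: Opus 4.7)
The plan is to combine the near-tightness of the ellipsoid norm on the nominal system with the continuity of the deviation terms at $\Delta t = 0$.

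First, I would invoke the assumption that the nominal closed loop is stable. For the discretized system, this means $\spectralRadius{A(\Delta t = 0)} < 1$. Applying \cref{theorem:timing-stability:qlf-to-norm-simplified}, choose $P \succ 0$ such that
\begin{equation*}
\rho_{\mathrm{n}} \coloneq \|A(\Delta t = 0)\|_P < 1,
\end{equation*}
and set the safety margin $\varepsilon \coloneq 1 - \rho_{\mathrm{n}} > 0$. Fix this $P$ for the rest of the argument.

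Next, I would use the decomposition \eqref{eq:timing-stability:splitting}: the sum in \cref{theorem:timing-stability:norm-bound} contains the nominal term plus finitely many deviation terms, namely $N \coloneq m + p + mp$ summands of the form $\Delta A_{\dots}(\delta)$, each depending on a single scalar argument ($\Delta t_{\subsActuate,i}$, $\Delta t_{\subsMeasure,j}$, or $\Delta t_{\subsMeasure,j} - \Delta t_{\subsActuate,i}$). By \cref{theorem:timing-stability:splitting}, each deviation satisfies $\lim_{|\delta| \to 0} \Delta A_{\dots}(\delta) = 0$. Since $\|\cdot\|_P$ is a norm on $\R^{n\times n}$ and hence continuous (by \cref{theorem:timing-stability:norm-comparison} applied to compare $\|\cdot\|_P$ with, \eg, the spectral norm), it follows that $\|\Delta A_{\dots}(\delta)\|_P \to 0$ as $\delta \to 0$.

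Finally, I would choose the timing window. For each deviation term there exists, by the definition of the limit, a $\delta_{\dots} > 0$ such that $\|\Delta A_{\dots}(\delta)\|_P < \varepsilon / N$ whenever $|\delta| < \delta_{\dots}$. Let $\delta^\star$ be the minimum over these finitely many $\delta_{\dots}$, and take any bounds with $0 < -\underline{\Delta t}_{\{\subsActuate,\subsMeasure\},i},~\overline{\Delta t}_{\{\subsActuate,\subsMeasure\},i} < \delta^\star/2$, which is admissible because $\delta^\star$ may freely be assumed $\leq T$. Then every scalar argument of a deviation term is strictly bounded by $\delta^\star$ in absolute value, so each summand contributes at most $\varepsilon/N$ to the sum in \eqref{eq:timing-stability:norm-bound:sum}, giving a total strictly below $\rho_{\mathrm{n}} + N \cdot \varepsilon/N = 1$. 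Hence the hypothesis of \cref{theorem:timing-stability:norm-bound} is satisfied, and stability is proven for these nonzero timing bounds. The only non-routine step is the continuity of $\|\cdot\|_P$ as a function of its matrix argument, which follows immediately from norm equivalence.
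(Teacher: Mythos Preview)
Your proposal is correct and follows essentially the same route as the paper: pick $P$ via \cref{theorem:timing-stability:qlf-to-norm-simplified} so that $\rho_{\mathrm n}<1$, then use $\lim_{|\Delta t|\to 0}\Delta A_{\dots}=0$ from \cref{theorem:timing-stability:splitting} to force the remaining summands below the margin $1-\rho_{\mathrm n}$. Your version is slightly more explicit than the paper's---you count the $N=m+p+mp$ terms, distribute $\varepsilon/N$ to each, and take $\delta^\star/2$ to control the cross-term arguments $\Delta t_{\subsMeasure,j}-\Delta t_{\subsActuate,i}$---but these are just careful spellings-out of the same argument.
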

\begin{proof}
	Consider the summands $\rho_{\mathrm{n}} + \sum \|\Delta A_{\dots}\|_P$ in \eqref{eq:timing-stability:norm-bound:sum}. Assume a timing bound $|\Delta t| < \delta$ with sufficiently small $\delta > 0$.
	\ifpaper{
		By \cref{theorem:timing-stability:qlf-to-norm-simplified}, choose $P$ such that $\rho_{\mathrm{n}} < 1$. By \cref{theorem:timing-stability:splitting}, $\Delta A_{\dots} \to 0$ for $|\Delta t| \to 0$, so choosing $\delta$ sufficiently small guarantees that $\sum \|\Delta A_{\dots}\|_P < 1-\rho_n$. Then, \eqref{eq:timing-stability:norm-bound:sum} is true. For a detailed proof, see \specificReferenceToReport{theorem:timing-stability:feasibility}.
		\qed
	}
	\ifreport{
		For the first summand, the assumption of nominal stability $\spectralRadius{A(\Delta t=0)} < 1$ means that due to \cref{theorem:timing-stability:qlf-to-norm-simplified}, it is possible to choose $P$ such that $\rho_{\mathrm{n}} < 1$.
		
		The next step is to bound the remaining part of the sum below $1-\rho_{\mathrm{n}}$.
		Due to \cref{theorem:timing-stability:splitting}, $\Delta A_{\dots} \to 0$ for $\Delta t \to 0$. By the definition of a matrix-valued limit (\cref{def:timing-stability:limit}), this implies that for any desired bound $\epsilon > 0$ on the norm $\|\Delta A_{\dots}\|_P$ of the deviations $\Delta A_{\dots}$ from the nominal case, there is a corresponding timing bound $\delta(\epsilon)>0$ such that $(|\Delta t| < \delta \Rightarrow \|\Delta A_{\dots}\|_P < \epsilon)$. Let $\epsilon$ be small enough such that the condition of \cref{theorem:timing-stability:norm-bound} is satisfied. Then, the system is stable for $|\Delta t| < \delta(\epsilon) > 0$.
	}
\end{proof}
	
\ifreport{
	Because nominal stability ($\rho_{\mathrm{n}}<1$) must hold for the result of any controller design method, this has two important consequences:
		
	\begin{itemize}
		\item 	In theory, the approach is always guaranteed to return some nonzero timing range. In practice, numerical issues of the implementation may prevent success if $\rho_{\mathrm{n}}$ is very close to $1$.
		
		\item Independent of the approach, any control loop of the considered form that is stable for perfect timing is also stable for a small amount of timing deviation, even if timing or robustness were not considered in the design.
	\end{itemize}
}

\begin{remark}[Complexity]
	\label{remark:timing-stability:complexity}
	With increasing number of sensors and actuators, checking \cref{theorem:timing-stability:norm-bound} requires only a \emph{polynomially} increasing number of matrix norm computations. The approach therefore avoids the exponential growth suffered by gridding the parameter space. In detail, the computation consists of determining $P$, $\rho_{\mathrm{n}}$, and then $p+m+mp$ bounds one-dimensional functions $\|\Delta A_{\dots} (\delta)\|_P$, where $\delta$ is a bounded scalar variable.
	\todoOptional{Komplexität weiter ins Detail aufdröseln -- Dimension der Matrizen wird ja auch größer...siehe sec:timing-stability:normbound:complexity in unused.tex.}
	\todoOptional{As will be seen later, the size of the \ac{LMI} for determining $P$ ..... only grows linearly.}
\end{remark}

\begin{remark}[Interpretability]
	Because each summand \ifpaper{\linebreak[-999]}$\|\Delta A_{\dots}\|_P$ in \cref{theorem:timing-stability:norm-bound} only refers to the timing of at most one sensor and one actuator, its maximum loosely corresponds to the amount of instability caused by the timing of one sensor, actuator or sensor-actuator-pair. This gives important hints on the timing sensitivity, which can be used to improve the design of the real-time system, \eg to give priority to sensors with high sensitivity.
\end{remark}

The following sections present the low-level details of every analysis step.  \Cref{sec:experiments} then shows experimental results.

\section{Discretization} \label{sec:robust-stability:discretization}

\subsection{Definition\xspace \ifreportX{and Discretization\xspace} of a \ac{LIS}} \label{sec:robust-stability:splitting:lis}
A simple definition of a linear impulsive system is
\begin{subequations}
\begin{align}
\dot x(t) &= A\idxContinuous x(t), \quad t \neq \tau_i, \quad t > \tau_0,\\
x(t) &= E_{i} x(t^-),\quad t=\tau_i, \quad i \in \N,\\
x(\tau_0) &= x_0,
\qquad\tau_0 < \tau_1 < \tau_2 < \dots.
\end{align}
\end{subequations}
$A\idxContinuous$ models continuous dynamics, which are interrupted by discrete events $E_i$ at $t=\tau_i$.
For ease of notation, this definition is chosen such that the resulting trajectory is right-continuous, \ie, $x(t^+)=x(t)$.
\ifpaper{\vspace{-1em}}
\paragraph*{Extension to Concurrent Events}
This definition cannot handle concurrent events $\tau_{i}=\tau_{i+1}$, which is a problem for the basic case of perfect timing: In this case, all measurements and actuator updates occur at the same time $t=kT$. \ifpaper{To solve this problem and allow the excluded case $\tau_i=\tau_{i+1}$, we directly define the trajectory as }\ifreport{Therefore, the definition must be extended such that $\tau_{i+1}=\tau_i$ is permitted and leads to the same result as the right-side limit $\tau_{i+1} \to \tau_{i}^+$.

\begin{definition}[\ac{LIS} with Concurrent Events]
	
	A more appropriate generalized definition is the following algorithm, which can be interpreted as a hybrid automaton:
	\begin{enumerate}
		\item Start at $i=0$, $t=\tau_0$, $x(\tau_0)=x_0$.
		\item Compute $x(t)$ for $\tau_i < t \leq \tau_{i+1}$ as solution of $\dot x(t) = A\idxContinuous x(t)$ with known initial value $x(\tau_i)$. (For concurrent events, \ie $\tau_{i+1}=\tau_{i}$, this step has no effect.)
		
		If $\tau_{i+1}$ does not exist because there is only a finite number of events, use the unbounded time range $\tau_i < t < \infty$ instead of $\tau_i < t \le \tau_{i+1}$.
		\label{item:timing-robustness:impulsive-system-cont-step}
		\item Set $x(\tau_{i+1}) \coloneq E_{i+1} x(\tau_{i+1})$ and then set $i \coloneq i+1$. Go to \ref{item:timing-robustness:impulsive-system-cont-step}. (\enquote{Set} refers to overwriting the previous value, analogous to updating a variable in usual (imperative) programming languages.)
	\end{enumerate}
\end{definition}
\todoOptional{später für Diss: Hybriden Automat zeichnen: Init $i=0,t=t_0,x=x_0$, Flow $\dot t=1$, $\dot x = A\idxContinuous x$, Invariant $t \le t_{i+1}$, Transition: Guard $t=t_{i+1}$, Reset $i'=i+1$, $x'=E_i x$}

\paragraph*{Trajectory}
The above algorithm yields an explicit formula for the trajectory of the linear impulsive system:}
\allowdisplaybreaks
\begin{align}
x(t) &\ifreportX{=}\ifpaper{\coloneq} \ee^{A\idxContinuous(t-\tau_{N})} E_N \ee^{A\idxContinuous(\tau_N-\tau_{N-1})} \ifpaper{ \nonumber \\ & \hphantom{==}} E_{N-1}  \ee^{A\idxContinuous(\tau_{N-1}-\tau_{N-2})} \dots E_1 \ee^{A\idxContinuous(\tau_1-\tau_0)} x_0 \\
&= \ifpaper{\textstyle} \ee^{A\idxContinuous(t-\tau_{N})} \left(\reverseProduct_{i=1}^{N} E_i \ee^{A\idxContinuous  (\tau_i - \tau_{i-1})} \right) x(\tau_0) \label{eq:timing-stability:lis-discretization}
\end{align}
with $N$ such that  $\tau_{N} \le t < \tau_{N+1}$ and $\smallReverseProduct$ as defined in \eqref{eq:timing-stability:reverse-product-def}.
\ifpaper{For background, see \specificReferenceToReport{sec:robust-stability:splitting:lis}.}

\todoOptional{Transforming the notation to $\delta_i \coloneq \tau_{i} - \tau_{i-1}$, $D_{i} \coloneq E_i - I$ would save some space later.}

\subsection{Model of Closed Loop as Linear Impulsive System} \label{sec:timing-stability:splitting:lis-control-model}
The closed loop defined in \cref{sec:problem} can be rewritten in the framework of linear impulsive systems, similar to the derivations in \cite{Gaukler2018} and \cite{Rheinfels2019}.
\ifreport{\paragraph*{State}
As noted before, the state is defined as
\begin{equation}
x(t) \coloneq \mat{x\idxPlant(t) \\ x\idxDiscrete(t) \\ y\idxDiscrete(t) \\ u(t)} \in \R^{n}, \quad n=n\idxPlant + n\idxDiscrete + p + m.
\end{equation}}
In the following, all block matrices are separated along the dimensions $n\idxPlant$, $n\idxDiscrete$, $p$, $m$ of the four state components.

\paragraph*{Continuous Dynamics}
The plant dynamics are continuous and all other variables are constant between the discrete events:
\begin{align}
&A\idxContinuous=\mat{A\idxPlant & 0 & 0 & B\idxPlant \\ 0 & 0 & 0 & 0 \\ 0 & 0 & 0 & 0 \\ 0 & 0 & 0 & 0 } \Rightarrow
\ee^{A\idxContinuous \delta}=\mat{\ee^{A\idxPlant\delta} & 0 & 0 & \tilde B(\delta) \\ 0 & I & 0 & 0 \\ 0 & 0 & I & 0 \\ 0 & 0 & 0 & I } \nonumber \\
&\forall \delta \in \R, \text{ with }\tilde B(\delta) \coloneq \ifpaper{\textstyle}\int_0^\delta \ee^{A\idxPlant \xi} \dDiff \xi \,B\idxPlant.
 \label{eq:timing-stability:splitting:lis-control-model:exp-a-tau}
\end{align}

\paragraph*{Discrete Events}
The $k$-th control period is defined as the time range $(k-1/2)T < t \le (k + 1/2)T$. Within this period, all sensors and actuators are updated near $t=kT$:
\ifpaper{\allowdisplaybreaks}
\begin{align}
	E_{\subsActuate,i} &= I + \mat{0 & 0 & 0 & 0 \\ 0 & 0 & 0 & 0 \\ 0 & 0 & 0 & 0 \\ 0 & S_i C\idxDiscrete & 0 &  - S_i}, \quad t_{\subsActuate,i,k}=kT + \Delta t_{\subsActuate,i,k}, \ifpaper{\nonumber\\[-1.3em]} \\
	E_{\subsMeasure,i} &= I + \mat{0 & 0 & 0 & 0 \\ 0 & 0 & 0 & 0 \\ S_iC\idxPlant & 0 & -S_i & 0 \\ 0 & 0 & 0 &  0},\quad t_{\subsMeasure,i,k}=kT+\Delta t_{\subsMeasure,i,k}. \ifpaper{\nonumber\\[-1.3em] }
\end{align}
$S_i \coloneq \ifreport{e_i e_i^\transp =} \mathrm{diag}(\underbrace{0,\dots,0}_{i-1 \text{ times}},1,0,\dots,0)$ are selector matrices of appropriate dimension. The index \enquote{$k$} of the event times will later be omitted for better readability.

Just before the end of the control period, at $t=(k+1/2)T$, the new controller state is computed instantaneously from the recent measurements:
\begin{align}
	E_{\subsCompute} = \mat{I & 0 & 0 & 0 \\ 0 & A\idxDiscrete & B\idxDiscrete & 0 \\ 0 & 0 & I & 0 \\ 0 & 0 & 0 & I},\quad t_{\subsCompute,k}=(k+1/2)T
\end{align}
Note that the actual timing of the controller computation may deviate from this assumption by a bounded amount because updating the controller state has no physical impact. This can be proven by \eqref{eq:timing-stability:lis-discretization} and
\begin{align}
		E_{\subsCompute} \ee^{A\idxContinuous \delta_1} \ee^{A\idxContinuous \delta_2} &= \ee^{A\idxContinuous \delta_1} E_{\subsCompute} \ee^{A\idxContinuous \delta_2} \ifpaper{\nonumber\\&\hspace{-1em}}= 	\ee^{A\idxContinuous \delta_1} \ee^{A\idxContinuous \delta_2} E_{\subsCompute} \quad\forall \delta_{1,2}  \geq 0.
\end{align}
Therefore, the only timing requirements on the controller are its data dependencies: Computation may start as soon all measurements are available and may take until the first actuator is updated.
\ifpaper{\vspace{-.7em}}
\paragraph*{Order of events} 
With $\tau_0 \coloneq kT-T/2$, the set of events $(E_i, \tau_i)$ in the $k$-th control period is
\begin{align}
EV_k \coloneq &\big\lbrace (E_i, \tau_i) | i =1,\dots,N_e\big\rbrace 
\label{eq:timing-stability:order-of-events-1}
\\ = &\left\lbrace (E_{\subsActuate,i}, t_{\subsActuate,i,k}) | i =0,\dots,m-1\right\rbrace \cup \nonumber \\
&\left\lbrace (E_{\subsMeasure,i}, t_{\subsMeasure,i,k}) | i =0,\dots,p-1\right\rbrace \cup \nonumber \\
&\left\lbrace (E_{\subsCompute}, t_{\subsCompute,k}) \right\rbrace \text{ with } \tau_i :\leq \tau_{i+1}, \\
|EV_k| &\coloneq N_e \coloneq m+p+1,
\end{align}
which means that events in each period are numbered as $i=1,\dots,N_e$ according to their temporal order and that all events occur exactly once.
While the order of events with identical time $\tau_i$ is ambiguous, this is not a problem since\ifpaper{, as detailed in \specificReferenceToReport{theorem:timing-stability:reorder},}\ifreport{\xspace the following theorem guarantees that\xspace} all possible orders lead to the same trajectory\ifreport{, thus, an arbitrary order can been chosen without loss of generality}.
\ifreport{
	\begin{theorem}
		The order of actuation and/or measurement events occuring at the same time $\tau_i=\tau_{i+1}$ does not change the system dynamics. \label{theorem:timing-stability:reorder}
	\end{theorem}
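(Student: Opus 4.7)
The plan is to reduce the claim to an algebraic commutativity statement. When $\tau_i = \tau_{i+1}$, the matrix exponential $\ee^{A\idxContinuous(\tau_{i+1}-\tau_i)}$ sitting between two consecutive factors of \eqref{eq:timing-stability:lis-discretization} degenerates to $I$, so two concurrent events contribute the bare product $E_{i+1} E_i$. Since the symmetric group on $\ell$ concurrent events is generated by adjacent transpositions, it suffices to show that any two concurrent events of type $E_{\subsActuate,i}$ or $E_{\subsMeasure,j}$ commute as matrices. The controller event $E_{\subsCompute}$ does not need to be considered here because the strict bounds $|\Delta t_{\{\subsActuate,\subsMeasure\},i,k}| < T/2$ ensure that no sensor or actuator event can coincide with $t_{\subsCompute,k} = (k+1/2)T$.

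I would then write $M_a := E_a - I$ for each event, so the commutator collapses to $E_a E_b - E_b E_a = M_a M_b - M_b M_a$, and it suffices to show $M_a M_b = M_b M_a = 0$ for every pair of concurrent event matrices. The argument is a block-sparsity observation: $M_{\subsActuate,i}$ has nonzero rows only in the $u$-block and nonzero columns only in the $x\idxDiscrete$- and $u$-blocks, whereas $M_{\subsMeasure,j}$ has nonzero rows only in the $y\idxDiscrete$-block and nonzero columns only in the $x\idxPlant$- and $y\idxDiscrete$-blocks. For a mixed pair both products therefore vanish by block-disjointness, because the row-support of one factor is disjoint from the column-support of the other in both orders. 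For two distinct actuator events, the only possibly nonzero contribution to $M_{\subsActuate,i} M_{\subsActuate,j}$ passes through the $u$-column reader $-S_i$ on the left and the $u$-row writer $S_j$ on the right, but $S_i S_j = e_i e_i^{\transp} e_j e_j^{\transp} = 0$ whenever $i \neq j$; the symmetric computation handles $M_{\subsActuate,j} M_{\subsActuate,i}$, and the two-measurement case is structurally identical.

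The main obstacle is nothing more than careful bookkeeping in the $4 \times 4$ block partition --- all the conceptual weight rests on the projector identity $S_i S_j = 0$ for $i \neq j$ together with the fact that actuator updates never read from $y\idxDiscrete$ and measurement updates never read from $u$. I would therefore present the proof as three short cases (actuator--actuator, measurement--measurement, and mixed), each discharged by a single block-matrix multiplication that collapses to zero, followed by the one-line observation that pairwise commutativity of adjacent factors extends to arbitrary permutations of any simultaneous block and hence to trajectory-invariance via \eqref{eq:timing-stability:lis-discretization}.
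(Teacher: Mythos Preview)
Your proposal is correct and follows essentially the same route as the paper: reduce trajectory-invariance to pairwise commutativity of adjacent concurrent events via \eqref{eq:timing-stability:lis-discretization}, write $E_a = I + M_a$, and show $M_a M_b = 0$ for all distinct actuation/measurement pairs by block structure and the projector identity $S_i S_j = 0$ for $i \neq j$. The paper packages the zero-product step as a separate lemma (\eqref{eq:timing-stability:splitting:nullproduct-tau=0}, proved by the same explicit block computations you sketch) and invokes it in the commutativity lemma \eqref{eq:timing-stability:splitting:commutative}; your argument is a self-contained version of the same computation. One minor wording slip: for $M_a M_b = 0$ you need the \emph{column}-support of the left factor disjoint from the \emph{row}-support of the right, not the other way around, but your actual case checks are correct.
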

	\begin{proof}
		Consider the trajectory \eqref{eq:timing-stability:lis-discretization} of the linear impulsive system. If the $i$-th and $(i+1)$-th event occur at the same time $\tau_i=\tau_{i+1}$, this yields a trajectory 
		$x(t)=\cdots E_{i+1} E_{i} \cdots$. Reversing the order of these events changes the trajectory to $x(t)=\cdots E_{i+1} E_{i} \cdots$. 
		As will be shown later in \eqref{eq:timing-stability:splitting:commutative},  $E_{i+1}E_{i}=E_{i}E_{i+1}$ holds for all measurement and actuation event matrices $E_{i}$, $E_{i+1}$, so the trajectory remains unchanged.
		
	\end{proof}
}

\subsection{CGES $\Leftrightarrow$ DGES} \label{sec:timing-stability:discretization-proof}

In this section, the equivalence of \ac{DGES} and \ac{CGES} will be shown using the fact that the overshoot between two discrete samples is bounded.
\todoOptional{für Report: Skizze mit Abtastwerten und exponentieller Schranke $\bar C \ee^{\bar \lambda \delta}$ dazwischen.}

\begin{theorem} \label{theorem:timing-stability:growth-bound}
	The growth rate of the closed control loop during one control period is bounded:
	
	There exist constants $\bar C \geq 1, \bar \lambda \in \R$ such that $\forall k\geq 0$,
	\begin{equation}
	|x(t_k + \delta)| \le \bar C \ee^{\bar \lambda \delta} |x(t_k)|\quad  \forall \delta \in [0, T), \forall x(t_k) \in \R^n. 	\label{eq:timing-stability:growth-bound}
	\end{equation}
	
	Note that this is not a stability result: Any discrete-time control effectively runs in open loop between the sampling instants, so $\bar \lambda> 0$ if the uncontrolled plant is unstable\ifreport{\xspace(To see that this must be true, consider the case when the plant state is nonzero, \ie, $x\idxPlant(t_k) \neq 0$, and all other entries of $x(t_k)$ are zero)}.
\end{theorem}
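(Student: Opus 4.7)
The plan is to exploit the explicit trajectory formula \eqref{eq:timing-stability:lis-discretization} of the linear impulsive system and bound each factor by a constant that is uniform over all admissible timings. The key observation is that within a single control period at most $N_e = m+p+1$ events occur, and each event matrix $E_{\subsActuate,i}$, $E_{\subsMeasure,j}$, $E_{\subsCompute}$ is completely determined by the fixed system matrices, with no dependence on the timing realization.

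First, I would apply \eqref{eq:timing-stability:lis-discretization} with $\tau_0 = t_k$ on the interval $(t_k,\, t_k + \delta]$ for $\delta \in [0, T)$. At most $N \le N_e$ events $E_{i_1}, \dots, E_{i_N}$ occur at times $\tau_1 \le \dots \le \tau_N \le t_k + \delta$, yielding
\begin{equation}
x(t_k + \delta) = \ee^{A\idxContinuous (t_k + \delta - \tau_N)} \left( \reverseProduct_{j=1}^N E_{i_j} \ee^{A\idxContinuous (\tau_j - \tau_{j-1})} \right) x(t_k).
\end{equation}

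Next, I would take the spectral norm of both sides, exploit its submultiplicativity, and bound every continuous propagator by \cref{theorem:timing-stability:norm-of-exp}, which gives $\|\ee^{A\idxContinuous s}\|_\sigma \le \ee^{\|A\idxContinuous\|_\sigma s}$. The event matrices are handled by the uniform bound
\begin{equation}
M \coloneq \max \big\{ \|E_{\subsActuate,i}\|_\sigma,\, \|E_{\subsMeasure,j}\|_\sigma,\, \|E_{\subsCompute}\|_\sigma : i=1,\dots,m,\, j=1,\dots,p \big\},
\end{equation}
which is a finite constant depending only on $A\idxDiscrete, B\idxDiscrete, C\idxDiscrete, C\idxPlant$ and the selector structure. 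Since the lengths of the continuous segments are non-negative and sum to exactly $\delta$, combining these ingredients gives $|x(t_k+\delta)| \le M^{N_e}\, \ee^{\|A\idxContinuous\|_\sigma \delta}\, |x(t_k)|$. Setting $\bar C \coloneq \max(1,\, M^{N_e})$ and $\bar \lambda \coloneq \|A\idxContinuous\|_\sigma \ge 0$ then delivers the claimed bound.

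I do not anticipate any subtle obstacle. The only point that requires care is verifying that $M$ is independent of the timing, which follows because each event matrix is itself a fixed object. Uniformity in $k$ follows from the identical block structure of the events in every control period, and coincident events $\tau_j = \tau_{j+1}$ pose no problem because they simply compose into a product of at most $N_e$ event matrices whose norm remains bounded by $M^{N_e}$. The resulting $\bar \lambda = \|A\idxContinuous\|_\sigma$ may well be strictly positive when the open-loop plant is unstable, which is consistent with the remark immediately following the theorem.
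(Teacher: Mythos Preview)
Your proof is correct and follows essentially the same route as the paper: apply the trajectory formula \eqref{eq:timing-stability:lis-discretization}, bound each event matrix by the uniform constant $C_{\mathrm{ev}}$ (your $M$), bound each propagator via $\|\ee^{A\idxContinuous s}\|_\sigma \le \ee^{\|A\idxContinuous\|_\sigma s}$, and telescope the exponents. The only cosmetic difference is that the paper observes $N \le m+p$ (the $E_{\subsCompute}$ event at $t_{k+1}=t_k+T$ falls outside $[t_k,t_k+T)$), whereas you use the slightly looser $N \le N_e = m+p+1$; this changes the constant $\bar C$ but not the argument.
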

\begin{proof}
	Assume $0 < \delta < T$ (the case $\delta=0$ is trivially true). The event matrices from \cref{sec:timing-stability:splitting:lis-control-model} are bounded:
	\begin{equation}
	C_{\mathrm{ev}} \coloneq \max_{M \in \lbrace E_{\subsCompute}, E_{\subsActuate,1}, \dots, E_{\subsActuate,m}, E_{\subsMeasure,1}, \dots, E_{\subsMeasure,p} \rbrace} \|M\|_{\sigma} < \infty
	\end{equation}
	exists because they are constant and finite.
	
	Consider \eqref{eq:timing-stability:lis-discretization} with $N \in \{0, \dots, m+p\}$ as the number of events in $(t_k, t_k + \delta]$. Note that by \eqref{eq:timing-stability:order-of-events-1}, the events are numbered such that the first event after $t= \tau_0 \coloneq t_k$ has the number $i=1$. By \ifreport{\cref{def:norm,theorem:timing-stability:norm-of-exp}}\ifpaper{the properties of the spectral norm},
	\begin{align}
	|x(t_k \ifpaper{&\!} + \ifpaper{\!}\delta)| =\ifreportX{&} \ifpaper{\textstyle }\left|\ee^{A\idxContinuous(t_k + \delta-\tau_{N})} \big(\reverseProduct_{i=1}^{N} \ifpaper{\!} E_i \ee^{A\idxContinuous  (\tau_i - \tau_{i-1})}\big)  x(t_k)\right|  \nonumber\\
	\leq& \ifpaper{\textstyle} \ee^{\|A\idxContinuous \|_{\sigma}(t_k + \delta -\tau_{N})} \reverseProduct_{i=1}^{N} \|E_i\|_{\sigma}\ee^{\|A\idxContinuous \|_{\sigma}(\tau_i-\tau_{i-1})} |x(t_k)| \nonumber\\
	\leq & \ee^{\|A\idxContinuous \|_{\sigma}t_k + \delta - \tau_0} C_{\mathrm{ev}}^N |x(t_k)| \nonumber\\
	\leq& \underbrace{\ee^{\|A\idxContinuous \|_{\sigma}\delta} \strut}_{\ee^{\bar \lambda \delta}} \underbrace{\strut C_{\mathrm{ev}}^{m+p}}_{\bar C}  |x(t_k)|.\ifpaper{\qed}
	\end{align}
\end{proof}
\ifpaper{\vspace{-1.5em}}
\paragraph*{Proof of \cref{theorem:timing-stability:cges-equiv-dges} (\acs{CGES} $\Leftrightarrow$ \acs{DGES}):}
The proof using \cref{theorem:timing-stability:growth-bound} is similar to \cite[Prop.\xspace2]{khatib:16:jnahs}.
\ifpaper{Details are given in \specificReferenceToReport{sec:timing-stability:discretization-proof}.}
\ifreport{
	\xspace A generalized version of this argument is given in \cite{Nesic1999}.
	\paragraph*{\enquote{$\Rightarrow$}:} Assume \ac{CGES}$(\lambda, D)$ and let $\rho = \ee^{\lambda T}$ and $C=D$. Then, the system is \ac{DGES}$(\rho, C)$:
	\begin{equation}
	|x_k|=|x(t_k)| \stackrel{\text{\ac{CGES}}}{\le} D |x(t_0)| \ee^{\lambda kT} = C|x_0| \rho^k.
	\end{equation}
	
	\paragraph*{\enquote{$\Leftarrow$}:} Assume \ac{DGES}$(\rho, C)$, which implies $0<\rho<1$. Let $\lambda = \log(\rho)/T$, so $\lambda < 0$ and $\rho = \ee^{\lambda T}$. Assume $t \geq t_0$, since both \ac{CGES} and \ac{DGES} only refer to this time range.
	Define $k(t):=\left\lfloor (t-t_0)/T\right\rfloor$ as the integer $k$ for which $t_{k(t)} \le t < t_{k(t)+1}$. This implies $k(t) \le (t - t_0)/T$ and therefore $\rho^{k(t)} \le \ee^{\lambda(t-t_0)}$. Because \cref{theorem:timing-stability:growth-bound} bounds the ratio between $x(t)$ and the previous discrete-time sample $x(t_{k(t)})$, the system is \ac{CGES}$(\lambda, D)$:
	\begin{multline}
	|x(t)| \stackrel{\text{\eqref{eq:timing-stability:growth-bound}}}{\leq} |x(t_{k(t)})| \bar C \ee^{\bar \lambda T} \stackrel{\text{\ac{DGES}}}{\leq}  C |x(t_0)| \underbrace{\rho^{k(t)}}_{\le \ee^{\lambda(t-t_0)}} \bar C \ee^{\bar \lambda T} 
	\ifpaper{\\}
	\leq  	\underbrace{C \bar C \ee^{\bar \lambda T}}_{D} \ee^{\lambda (t-t_0)} |x(t_0)|.
	\end{multline}
}
\section{Decomposition}\label{sec:robust-stability:splitting}
In the following, we derive \cref{theorem:timing-stability:splitting}, a key result of our approach: The transition matrix $A_k$ can be split into summands that depend on at most two timing variables.

\ifreport{
	\subsection{Properties of Measurement and Actuation Event Matrices}\label{sec:robust-stability:splitting:matrix-properties}
	In this section, properties of the combinations of event matrices for actuation and measurement will be stated, which will later lead to the proof of \cref{theorem:timing-stability:splitting}. These properties follow directly from block matrix multiplication. For each of the properties, a loose interpretation will be given, which is not to be taken as a formal statement on its own.
	
	\paragraph*{Notation} In the following, $\forall i$ is shorthand for $\forall i \in \left\{1,\dots,m\right\}$ if it refers to $E_{\subsActuate,i}$, and $\forall i \in \left\{1,\dots,p\right\}$ for $E_{\subsMeasure,i}$. The same holds for $\forall j$. Similarly, $\forall \delta$ is shorthand for $\forall \delta \in \R$. The notation $E_{a,\dots } = \dots \forall a \in \left\lbrace \text{\enquote{\ensuremath{\subsActuate}}}, \text{\enquote{\ensuremath{\subsMeasure}}}\right\rbrace$ means that an equation is valid for both $E_{\subsActuate,\dots}$ and $E_{\subsMeasure,\dots}$.
	
	\paragraph*{Properties of a Single Event}
	\begin{lemma}
		Actuation is unaffected by prior delays, as
		\begin{align}
		(E_{\subsActuate,i}-I) \ee^{A\idxContinuous  \delta} &= E_{\subsActuate,i}-I \quad \forall i,\delta  \label{eq:timing-stability:splitting:actuate-time-invariant},
		\end{align}
		whereas measurement is unaffected by \emph{subsequent} delays:
		\begin{align}
		\ee^{A\idxContinuous  \delta} (E_{\subsMeasure,i}-I) &= (E_{\subsMeasure,i}-I) \quad \forall i,\delta. \label{eq:timing-stability:splitting:measurement-reverse-time-invariant}
		\end{align}
	\end{lemma}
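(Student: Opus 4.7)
The plan is to verify both identities by direct block matrix multiplication, using the explicit expressions for $E_{\subsActuate,i}-I$, $E_{\subsMeasure,i}-I$, and $\ee^{A\idxContinuous\delta}$ from \cref{sec:timing-stability:splitting:lis-control-model}. I partition throughout along the block dimensions $n\idxPlant, n\idxDiscrete, p, m$ associated with $x\idxPlant, x\idxDiscrete, y\idxDiscrete, u$, and exploit the fact, visible from \eqref{eq:timing-stability:splitting:lis-control-model:exp-a-tau}, that the $x\idxDiscrete$, $y\idxDiscrete$ and $u$ block rows of $\ee^{A\idxContinuous\delta}$ act as the identity in those blocks and as zero elsewhere. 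Physically, the continuous flow leaves $x\idxDiscrete$, $y\idxDiscrete$ and $u$ invariant, so any operation that reads only from these components or writes only into them is insensitive to a pre- or post-composed $\ee^{A\idxContinuous\delta}$, respectively.

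For the first identity, $E_{\subsActuate,i}-I$ has its only nonzero block row in the $u$ position, with nonzero block entries $S_iC\idxDiscrete$ and $-S_i$ in the $x\idxDiscrete$ and $u$ columns. Multiplying this row from the left onto $\ee^{A\idxContinuous\delta}$ amounts to forming a linear combination of only the $x\idxDiscrete$ and $u$ block rows of $\ee^{A\idxContinuous\delta}$, which are $\mat{0 & I & 0 & 0}$ and $\mat{0 & 0 & 0 & I}$. A short block computation then shows that the $u$ block row of the product is again $\mat{0 & S_iC\idxDiscrete & 0 & -S_i}$, while all other block rows stay zero, recovering $E_{\subsActuate,i}-I$.

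The second identity is the mirror statement. Every column of $E_{\subsMeasure,i}-I$ is zero outside the $y\idxDiscrete$ block, so multiplying such a column from the left by $\ee^{A\idxContinuous\delta}$ selects only its $y\idxDiscrete$ block column $\mat{0 \\ 0 \\ I \\ 0}$, which leaves the column unchanged. Hence every column of $E_{\subsMeasure,i}-I$ passes through $\ee^{A\idxContinuous\delta}$ untouched, giving the identity.

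I do not expect any real obstacle: the lemma is a purely structural fact about the block sparsity of the event matrices combined with the zero-order-hold structure of $\ee^{A\idxContinuous\delta}$. The only care required is keeping block indexing consistent with \eqref{eq:timing-stability:splitting:lis-control-model:exp-a-tau}. I would present both identities side by side to make the duality explicit, since this symmetry is exactly what the decomposition in \cref{sec:robust-stability:splitting} later exploits when pulling delays past actuation events on the right and past measurement events on the left.
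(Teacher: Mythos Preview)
Your proposal is correct and matches the paper's approach: the paper states at the start of \cref{sec:robust-stability:splitting:matrix-properties} that these properties follow directly from block matrix multiplication, and your argument carries out exactly that verification using the explicit block forms of $E_{\subsActuate,i}-I$, $E_{\subsMeasure,i}-I$, and $\ee^{A\idxContinuous\delta}$ from \eqref{eq:timing-stability:splitting:lis-control-model:exp-a-tau}. Your added physical interpretation and the emphasis on the row/column duality are a nice elaboration beyond the paper's terse justification, but the underlying proof is the same.
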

	
	\begin{align}
		\intertext{However, measurement \emph{is} affected by prior delays, as}
		(E_{\subsMeasure,i}-I) \ee^{A\idxContinuous  \delta} &= \ifpaper{\small} \mat{0 & 0 & 0 & 0\\ 0 & 0 & 0 & 0 \\ S_i C_p \ee^{A\idxPlant\delta} &  0 & -S_i & S_i C_p \int_0^\delta \ee^{A\idxPlant \xi} d\xi B\idxPlant\\ 0&0&0& 0} \ifpaper{\nonumber \\  & \hphantom{=}}\quad \forall i,\delta
	\end{align}
	\paragraph*{Properties of Two Subsequent Events}
	
	\begin{lemma}[Zero products]
		Products of the form $(E_{\dots} - I) \ee^{A\idxContinuous \delta} (E_{\dots} - I)$ are zero, as long as the events are distinct and the combination is not \enquote{actuate, then measure}:
		\begin{multline}
		(E_{a,i}-I) \ee^{A\idxContinuous  \delta} (E_{b,j} - I)=0 \quad 
		\forall (a,i) \neq (b,j), ~ \forall \delta, \\
		\forall (a,b) \in \left\lbrace \text{\enquote{\ensuremath{\subsActuate}}},
		\text{\enquote{\ensuremath{\subsMeasure}}}\right\rbrace^2 \setminus \left\lbrace (\text{\enquote{\ensuremath{\subsMeasure}}}, \text{\enquote{\ensuremath{\subsActuate}}}) \right\rbrace	 
		\label{eq:timing-stability:splitting:nullproduct}
		\end{multline}
		Additionally, for $\delta=0$, \ie, no delay between the events, this product is always zero:
		\begin{align}
		(E_{a,i}-I) (E_{b,j} - I)=0 \quad &\forall (a,i) \neq (b,j),
		\quad  \ifpaper{\nonumber\\&}
		\forall a,b \in \left\lbrace \text{\enquote{\ensuremath{\subsActuate}}}, \text{\enquote{\ensuremath{\subsMeasure}}}\right\rbrace
		\label{eq:timing-stability:splitting:nullproduct-tau=0}
		\end{align}
	\end{lemma}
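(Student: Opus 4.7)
The plan is a short case analysis on the pair of event types $(a,b)$. In each case the strategy is identical: absorb the central factor $\ee^{A\idxContinuous \delta}$ into one of the two event differences using the identities \eqref{eq:timing-stability:splitting:actuate-time-invariant} and \eqref{eq:timing-stability:splitting:measurement-reverse-time-invariant} from the previous lemma, and then show that the resulting $\delta$-independent block-matrix product vanishes. Two elementary facts drive everything: the selector identity $S_i S_j = e_i e_i^{\transp} e_j e_j^{\transp} = 0$ for $i \neq j$, and the complementary sparsity patterns that $E_{\subsActuate,\cdot}-I$ is supported only in block row~$4$ while $E_{\subsMeasure,\cdot}-I$ is supported only in block row~$3$.

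Concretely, for $(a,b) = (\subsActuate, \subsActuate)$ I would apply \eqref{eq:timing-stability:splitting:actuate-time-invariant} on the left to eliminate the exponential, reducing the product to $(E_{\subsActuate,i}-I)(E_{\subsActuate,j}-I)$. The nonzero entries of both factors live in block row~$4$, and the two surviving block entries of the product in positions $(4,2)$ and $(4,4)$ both carry the factor $S_i S_j$, which vanishes since $i \neq j$. The case $(a,b) = (\subsMeasure, \subsMeasure)$ is symmetric: use \eqref{eq:timing-stability:splitting:measurement-reverse-time-invariant} on the right and apply the same selector identity in block row~$3$. For $(a,b) = (\subsActuate, \subsMeasure)$, either identity absorbs the exponential, and the resulting product $(E_{\subsActuate,i}-I)(E_{\subsMeasure,j}-I)$ vanishes by sparsity alone for \emph{any} $i,j$: $E_{\subsMeasure,j}-I$ is supported in block row~$3$, but block column~$3$ of $E_{\subsActuate,i}-I$ is identically zero. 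This completes \eqref{eq:timing-stability:splitting:nullproduct}.

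For the $\delta=0$ addendum \eqref{eq:timing-stability:splitting:nullproduct-tau=0}, the exponential is just the identity, so the claim reduces to $(E_{a,i}-I)(E_{b,j}-I) = 0$ whenever $(a,i) \neq (b,j)$. The three combinations handled above carry over immediately; the only genuinely new case is the previously excluded $(a,b) = (\subsMeasure, \subsActuate)$, which succumbs to the mirror argument: $E_{\subsActuate,j}-I$ is supported in block row~$4$, and block column~$4$ of $E_{\subsMeasure,i}-I$ is identically zero.

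The main difficulty is not mathematical but bookkeeping: keeping careful track of which block row and column contains the nonzero entries of each factor. Two remarks deserve explicit mention in the write-up. First, the distinctness hypothesis $(a,i) \neq (b,j)$ is only actively invoked in the two same-type cases to force $S_i S_j = 0$; in the mixed cases the vanishing is purely structural and remains valid even for $i = j$. Second, the asymmetry between the three \enquote{good} cases in \eqref{eq:timing-stability:splitting:nullproduct} and the configuration $(\subsMeasure, \subsActuate)$ (excluded for general $\delta$) reflects the fact that $\ee^{A\idxContinuous \delta}$ carries the single nonzero off-diagonal block $\tilde B(\delta)$ coupling actuation into the plant state, which is precisely what a subsequent measurement would read out.
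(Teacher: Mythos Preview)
Your proof is correct and follows essentially the same route as the paper: a case-by-case block-matrix computation for each pair $(a,b)$, driven by the selector identity $S_iS_j=0$ for $i\neq j$ and the complementary sparsity of $E_{\subsActuate,\cdot}-I$ and $E_{\subsMeasure,\cdot}-I$. The only organizational difference is that you first invoke \eqref{eq:timing-stability:splitting:actuate-time-invariant} and \eqref{eq:timing-stability:splitting:measurement-reverse-time-invariant} to absorb the exponential and reduce each case to a two-factor product, whereas the paper multiplies the three factors out directly; your version makes the $\delta$-independence explicit up front and is arguably cleaner, but the underlying computation is the same.
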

	\begin{proof} The lemma directly follows from block matrix computations for each case.
		Actuation of $u_j$ does not affect the subsequent actuation of $u_{i\neq j}$:
		\begin{align}
		(E_{\subsActuate,i}-I) \ee^{A\idxContinuous  \delta} (E_{\subsActuate,j}-I) &= \mat{0 & 0 & 0 & 0\\ 0 &0  & 0 & 0 \\ 0 & 0 & 0 & 0 \\ 0&-S_i S_j C\idxDiscrete&0& S_i S_j} = 0 \quad \forall i\neq j,
		\forall \delta \\
		\intertext{Measurement does not affect subsequent actuation:}
		(E_{\subsActuate,i}-I) \ee^{A\idxContinuous  \delta} (E_{\subsMeasure,j}-I) &= 0 \quad \forall i,j,\delta\\
		\intertext{Measurement of $y_i$ does not affect subsequent measurement of $y_{j\neq i}$:}
		(E_{\subsMeasure,i}-I) \ee^{A\idxContinuous  \delta} (E_{\subsMeasure,j}-I) &=  \mat{0 & 0 & 0 & 0 \\ 0 & 0 & 0 & 0 \\ -S_i S_j C_p & 0 & S_i S_j & 0\\ 0 & 0 & 0 & 0} = 0 \quad \forall i \neq j, \forall \delta.
		\intertext{However, actuation \emph{does} affect subsequent measurements, \ie}
		(E_{\subsMeasure,i}-I) \ee^{A\idxContinuous  \delta} (E_{\subsActuate,j}-I) &= 	\ifpaper{\small} \mat{0 & 0 & 0 & 0 \\ 0 & 0 & 0 & 0 \\ 0 & 0 & 0 & 0 \\ 0 & S_i C_p \int_0^\delta \ee^{A\idxPlant \xi} d\xi B\idxPlant S_j C\idxDiscrete & 0 & -S_i C_p \int_0^\delta \ee^{A\idxPlant \xi} d\xi B\idxPlant S_j}  \quad \forall i,j,\delta
		\intertext{can be nonzero, except if the measurement happens immediately after actuation:}
		(E_{\subsMeasure,i}-I) (E_{\subsActuate,j}-I) &= 0 \quad \forall i,j. \label{eq:timing-stability:splitting:measure-immediately-after-actuate}
		\end{align}
	\end{proof}
	\begin{lemma}[Commutativity]
		All measurement and actuation event matrices commute:
		\begin{equation}
		\forall i,j,\quad \forall a,b \in \left\lbrace \text{\enquote{\ensuremath{\subsActuate}}}, \text{\enquote{\ensuremath{\subsMeasure}}}\right\rbrace: E_{a,i}E_{b,j} = E_{b,j}E_{a,i}. \label{eq:timing-stability:splitting:commutative}
		\end{equation}
	\end{lemma}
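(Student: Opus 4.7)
The plan is to reduce commutativity to the already-established zero-product identity \eqref{eq:timing-stability:splitting:nullproduct-tau=0}. Write each event matrix in the form $E_{a,i} = I + M_{a,i}$, where $M_{a,i}\coloneq E_{a,i}-I$ is the nontrivial block from the definitions in \cref{sec:timing-stability:splitting:lis-control-model}. Expanding the product gives
\begin{equation*}
E_{a,i}E_{b,j} = I + M_{a,i} + M_{b,j} + M_{a,i}M_{b,j},\qquad
E_{b,j}E_{a,i} = I + M_{a,i} + M_{b,j} + M_{b,j}M_{a,i},
\end{equation*}
so the claim $E_{a,i}E_{b,j}=E_{b,j}E_{a,i}$ is equivalent to $M_{a,i}M_{b,j}=M_{b,j}M_{a,i}$.

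First I would dispose of the trivial case $(a,i)=(b,j)$, where both sides coincide. For the remaining case $(a,i)\neq(b,j)$, I would invoke \eqref{eq:timing-stability:splitting:nullproduct-tau=0} directly: applied to the pair $((a,i),(b,j))$ it yields $M_{a,i}M_{b,j}=0$, and applied to the swapped pair $((b,j),(a,i))$ (which is also admitted, since $(b,j)\neq(a,i)$ and the quantification is over all $a,b\in\{\text{\enquote{\ensuremath{\subsActuate}}},\text{\enquote{\ensuremath{\subsMeasure}}}\}$) it yields $M_{b,j}M_{a,i}=0$. Both products vanish, and commutativity follows.

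There is no serious obstacle: the work has effectively already been done by the preceding zero-product lemma, whose block-matrix verification covered every admissible pairing of actuation/measurement events. The only subtlety worth flagging in the write-up is that \eqref{eq:timing-stability:splitting:nullproduct-tau=0} is stated for all $a,b\in\{\text{\enquote{\ensuremath{\subsActuate}}},\text{\enquote{\ensuremath{\subsMeasure}}}\}$ (unlike \eqref{eq:timing-stability:splitting:nullproduct}, which excludes the \enquote{measure after actuate} case for $\delta>0$), so it may be applied symmetrically in both orderings to obtain the vanishing of both $M_{a,i}M_{b,j}$ and $M_{b,j}M_{a,i}$ without any additional case analysis.
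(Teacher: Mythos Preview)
Your proof is correct and follows essentially the same approach as the paper: both write $E_{a,i}=I+(E_{a,i}-I)$, expand the product, and invoke \eqref{eq:timing-stability:splitting:nullproduct-tau=0} to eliminate the cross terms. The only minor difference is that you make the symmetric application of \eqref{eq:timing-stability:splitting:nullproduct-tau=0} to both orderings explicit, whereas the paper folds this into a single display.
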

	\begin{proof}
		For $(a,i) = (b,j)$, the statement is trivially true. Now consider $(a,i) \neq (b,i)$:
		\begin{align}
		\forall a,b \in &\left\lbrace \text{\enquote{\ensuremath{\subsActuate}}}, \text{\enquote{\ensuremath{\subsMeasure}}}\right\rbrace,
		\quad \forall (a,i) \neq (b,j): \nonumber \\
		E_{a,i}E_{b,j} &= (E_{a,i}-I + I) (E_{b,j} - I + I) \nonumber \\
		&= I + (E_{a,i}-I) + (E_{b,j} - I) + \underbrace{(E_{a,i}-I)(E_{b,j} - I)}_{0 \text{ due to \eqref{eq:timing-stability:splitting:nullproduct-tau=0}}} \nonumber \\
		&= I + (E_{b,j} - I) + (E_{a,i}-I) + 0 \nonumber \\
		&= E_{b,j}E_{a,i}. 
		\end{align}
	\end{proof}

	\paragraph*{Extension to Three and More Events}
	The result \eqref{eq:timing-stability:splitting:nullproduct} leads to a property of the longer chain $(E_{\dots} - I) \ee^{A\idxContinuous \delta} (E_{\dots} - I) \ee^{A\idxContinuous \delta} (E_{\dots} - I)$, again assuming distinct events.
	\begin{lemma}[Long products are zero]
		\begin{align}
		\ifpaper{&}(E_{a,i}-I) \ee^{A\idxContinuous  \delta_1} (E_{b,j} - I)\ee^{A\idxContinuous  \delta_2} (E_{c,k} - I)=0 \quad \ifpaper{\nonumber \\}&\forall (a,b,c) \in \left\lbrace \text{\enquote{\ensuremath{\subsActuate}}}, \text{\enquote{\ensuremath{\subsMeasure}}}\right\rbrace^3,
		\nonumber \\
		&\forall (a,i) \neq (b,j) \neq (c,k), (a,i) \neq (c,k),
		\nonumber \\
		&\forall \delta_1, \delta_2. \label{eq:timing-stability:splitting:nullproduct3}
		\end{align}
		The result implies that any such product of length three and above is zero.
	\end{lemma}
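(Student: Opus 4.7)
The plan is a short case analysis on the leftmost pair of event types $(a,b)$, leveraging \eqref{eq:timing-stability:splitting:nullproduct}, which states that any two-factor subproduct $(E_{\alpha,i}-I)\ee^{A\idxContinuous\delta}(E_{\beta,j}-I)$ with distinct events vanishes unless the types satisfy $(\alpha,\beta) = (\subsMeasure, \subsActuate)$. The guiding observation is that in a three-factor product with middle event of type $b$, the \enquote{dangerous} pattern $(\subsMeasure, \subsActuate)$ cannot appear simultaneously at both the left pair $(a,b)$ and the right pair $(b,c)$, since the first would force $b=\subsActuate$ while the second would force $b=\subsMeasure$.

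First I would split on $(a,b)$. If $(a,b) \neq (\subsMeasure, \subsActuate)$, then the leftmost two factors already multiply to zero by \eqref{eq:timing-stability:splitting:nullproduct} — the required hypothesis $(a,i)\neq(b,j)$ is supplied directly by the pairwise distinctness of the three events — and hence so does the full product. Otherwise $(a,b) = (\subsMeasure, \subsActuate)$, which forces $b = \subsActuate$; regardless of the choice of $c$, the right pair $(b,c) \in \{(\subsActuate,\subsActuate),(\subsActuate,\subsMeasure)\}$ is not the excluded pattern, so by \eqref{eq:timing-stability:splitting:nullproduct} (with $(b,j)\neq(c,k)$) the rightmost two factors vanish, and again the full product is zero.

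For the stated extension to products of length $n \geq 3$, the same reasoning carries over: pick any three consecutive factors (say the leftmost three), which are pairwise distinct by the overall distinctness assumption; by the three-factor result just proved, their product is zero, and hence so is the full $n$-factor product.

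I do not anticipate a genuine obstacle — the entire argument is a routine dichotomy. The one point to verify carefully is that the distinctness hypothesis required for each invocation of \eqref{eq:timing-stability:splitting:nullproduct} is indeed delivered by the assumed pairwise distinctness of the three indexed events, which it is in both branches.
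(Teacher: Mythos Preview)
Your proof is correct and follows the same approach as the paper: a case analysis showing that every three-factor product contains a two-factor subproduct that vanishes by \eqref{eq:timing-stability:splitting:nullproduct}. The paper simply enumerates all eight combinations of $(a,b,c)$ explicitly, whereas you compress this to a two-branch dichotomy on whether $(a,b)=(\subsMeasure,\subsActuate)$; the key observation you articulate---that the excluded pattern cannot occur at both adjacent pairs because it would force $b$ to be simultaneously $\subsActuate$ and $\subsMeasure$---is exactly what underlies the paper's enumeration, just stated more economically.
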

	\begin{proof}
		This is because there are $2^3$ possibilities for $(a,b,c)$, and for each the chain contains at least one product that is zero due to \eqref{eq:timing-stability:splitting:nullproduct}:
		\begin{enumerate}
			\item $\underbrace{(E_{\subsActuate,i}-I) \ee^{A\idxContinuous  \delta_1} (E_{\subsActuate,j} - I)}_{0}\ee^{A\idxContinuous  \delta_2} (E_{\subsActuate,k} - I)=0$
			\item $\underbrace{(E_{\subsActuate,i}-I) \ee^{A\idxContinuous  \delta_1} (E_{\subsActuate,j} - I)}_{0}\ee^{A\idxContinuous  \delta_2} (E_{\subsMeasure,k} - I)=0$
			\item $\underbrace{(E_{\subsActuate,i}-I) \ee^{A\idxContinuous  \delta_1} (E_{\subsMeasure,j} - I)}_{0}\ee^{A\idxContinuous  \delta_2} (E_{\subsActuate,k} - I)=0$
			\item $\underbrace{(E_{\subsActuate,i}-I) \ee^{A\idxContinuous  \delta_1} (E_{\subsMeasure,j} - I)}_{0}\ee^{A\idxContinuous  \delta_2} (E_{\subsMeasure,k} - I)=0$
			\item $(E_{\subsMeasure,i}-I) \ee^{A\idxContinuous  \delta_1} \underbrace{(E_{\subsActuate,j} - I)\ee^{A\idxContinuous  \delta_2} (E_{\subsActuate,k} - I)}_{0}=0$
			\item $(E_{\subsMeasure,i}-I) \ee^{A\idxContinuous  \delta_1} \underbrace{(E_{\subsActuate,j} - I)\ee^{A\idxContinuous  \delta_2} (E_{\subsMeasure,k} - I)}_{0}=0$
			\item $\underbrace{(E_{\subsMeasure,i}-I) \ee^{A\idxContinuous  \delta_1} (E_{\subsMeasure,j} - I)}_{0}\ee^{A\idxContinuous  \delta_2} (E_{\subsActuate,k} - I)=0$
			\item $\underbrace{(E_{\subsMeasure,i}-I) \ee^{A\idxContinuous  \delta_1} (E_{\subsMeasure,j} - I)}_{0}\ee^{A\idxContinuous  \delta_2} (E_{\subsMeasure,k} - I)=0$	
		\end{enumerate}
	\end{proof}
}
\ifreport{
	\subsection{General Expansion of Binomial Products}
	For subsequent proofs, we require a generic way to expand a product of binomials, such as
	\begin{align}
	(A_3 + B_3)(A_2 + B_2)(A_1 + B_1) = &A_3 A_2 A_1 + A_3 A_2 B_1 + A_3 B_2 A_1 + A_3 B_2 B_1 \nonumber \\  &+B_3 A_2 A_1 + B_3 A_2 B_1 + B_3 B_2 A_1 + B_3 B_2 B_1.
	\end{align}
	If \enquote{$A$} and \enquote{$B$} are interpreted as binary digits, where \enquote{$A$} is $0$ and \enquote{$B$} is $1$, the sequence of summands is generated by counting in binary: 000 is $A_3A_2A_1$, 001 is $A_3A_2B_1$, 010 is $A_3 B_2 A_1$, ..., up to 111. To generalize the notation, the binary numbers are interpreted as a vector $\vecSmallT{d_1 & d_2 & d_3}$ of binary digits $d_i \in \{0,1\}$:
	\begin{lemma}
		\label{lemma:robust-stability:splitting:expansion}
		Let $A_1,\dots,A_N,B_1,\dots,B_N \in \R^{n \times n}$ and $N \in \N$. Then,
		\begin{align}
		\reverseProduct_{i=1}^{N}  (A_i + B_i) &= \sum_{\vecSubscriptT{ d_1 & d_2 & \dots & d_N } \in \left\{0,1\right\}^{N}} ~\reverseProduct_{i=1}^{N} 
		\begin{cases}
		A_i, & d_i=0,\\
		B_i, & d_i=1.
		\end{cases}
		\end{align}
	\end{lemma}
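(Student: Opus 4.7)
\end{lemma}
\begin{proof}[Sketch]
The plan is to proceed by induction on $N$, exploiting the recursive structure of the reverse product $\smallReverseProduct$ defined in \eqref{eq:timing-stability:reverse-product-def}.

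For the base case $N=1$, the right-hand side is a sum indexed by $d_1 \in \{0,1\}$ whose summands are $A_1$ (for $d_1=0$) and $B_1$ (for $d_1=1$), which equals the left-hand side $A_1+B_1$.

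For the inductive step, assume the identity holds for some $N\in\N$ and consider $N+1$. Because the reverse product orients the highest-index factor leftmost, it splits off recursively as
\begin{equation*}
\reverseProduct_{i=1}^{N+1}(A_i+B_i) \;=\; (A_{N+1}+B_{N+1})\cdot\reverseProduct_{i=1}^{N}(A_i+B_i).
\end{equation*}
Applying the inductive hypothesis to the second factor and then distributing $(A_{N+1}+B_{N+1})$ from the left replaces every summand indexed by a digit vector $\vecSubscriptT{d_1 & \dots & d_N}\in\{0,1\}^N$ by two new summands, prefixed with $A_{N+1}$ and with $B_{N+1}$ respectively. These two summands are precisely the reverse products associated with the extended digit vectors obtained by appending $d_{N+1}=0$ or $d_{N+1}=1$, so the index set doubles from $\{0,1\}^N$ to $\{0,1\}^{N+1}$, which yields the claim for $N+1$.

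The only subtlety here is keeping the left-to-right factor order correct, since the reverse product convention places higher indices leftmost and one must verify that the newly prefixed $A_{N+1}$ or $B_{N+1}$ correspond to the last digit $d_{N+1}$ and not to some other position. Beyond this bookkeeping no deeper argument is required; the statement is simply the noncommutative distributive law for a product of $N$ binomials.
\end{proof}
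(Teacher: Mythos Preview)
Your proof is correct and follows essentially the same approach as the paper: induction on $N$, splitting off the leftmost factor $(A_{N+1}+B_{N+1})$, applying the hypothesis, and distributing to double the index set from $\{0,1\}^N$ to $\{0,1\}^{N+1}$. The paper writes out the two resulting sums over $\{0,1\}^N\times\{0\}$ and $\{0,1\}^N\times\{1\}$ explicitly before merging them, but the argument is identical.
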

	Note that the lemma is given for the reversed product $\smallReverseProduct$, but also valid for the normal product $\Pi$.
	\begin{proof}
		For a proof by induction, we start with the observation that the lemma trivially holds for $N=1$. As induction step, assume the lemma holds for a fixed $N \geq 1$ to show that it holds for $N+1$:
		\begin{align}
		\reverseProduct_{i=1}^{N+1}  (A_i + B_i)  =& (A_{N+1} + B_{N+1}) \reverseProduct_{i=1}^{N}  (A_i + B_i) \\
		\stackrel{\mathclap{\text{assume claim for $N$}}}{=}& \hspace{3em} A_{N+1}  \sum_{\vecSubscriptT{ d_1 & d_2 & \dots & d_N } \in \left\{0,1\right\}^{N}} ~\reverseProduct_{i=1}^{N} 
		\begin{lrdcases}
		A_i, & d_i=0\\
		B_i, & d_i=1
		\end{lrdcases} \\
		&\hspace{3em}+ B_{N+1}  \sum_{\vecSubscriptT{ d_1 & d_2 & \dots & d_N } \in \left\{0,1\right\}^{N}} ~\reverseProduct_{i=1}^{N} 
		\begin{lrdcases}
		A_i, & d_i=0\\
		B_i, & d_i=1
		\end{lrdcases} \\
		=& \sum_{\vecSubscriptT{ d_1 & d_2 & \dots & d_{N+1} } \in \left\{0,1\right\}^{N} \times \{0\},} ~\reverseProduct_{i=1}^{N+1} 
		\begin{lrdcases}
		A_i, & d_i=0\\
		B_i, & d_i=1
		\end{lrdcases} \\
		&+
		\sum_{\vecSubscriptT{ d_1 & d_2 & \dots & d_{N+1} } \in \left\{0,1\right\}^{N} \times \{1\}} ~\reverseProduct_{i=1}^{N+1} 
		\begin{lrdcases}
		A_i, & d_i=0\\
		B_i, & d_i=1
		\end{lrdcases} 
		\\
		=&		\sum_{\vecSubscriptT{ d_1 & d_2 & \dots & d_{N+1} } \in \left\{0,1\right\}^{N+1}} ~\reverseProduct_{i=1}^{N+1} 
		\begin{lrdcases}
		A_i, & d_i=0\\
		B_i, & d_i=1
		\end{lrdcases}
		\end{align}
		By induction, the lemma holds for any $N \geq 1$.
	\end{proof}
}

\ifreportX{\subsection{Proof \ifpaper{Sketch }of \Cref{theorem:timing-stability:splitting}}}
\ifpaper{\paragraph*{Proof \ifpaper{Sketch }of \Cref{theorem:timing-stability:splitting}:}}
 \label{sec:timing-stability:splitting:proof} 
Consider the complete $k$-th control period from $x(t_{k-1})$, \ie,  just after the controller state has been computed, until  $x(t_k)$, \ie just after the next controller computation. As discussed earlier, the period starts with the event counter $i=0$ at $t=\tau_0 \coloneq t_{k-1}=kT-T/2$ and ends after event $i=N_e = m+p+1$ at $t=\tau_{N_e} = t_{k}=kT+T/2$.

\Cref{eq:timing-stability:lis-discretization} leads to $x(t_k) = A_{k-1} x(t_{k-1})$ with 
\newcommand{\dVektor}{\ensuremath{\vecSubscriptT{ d_1 & d_2 & \dots } \in \left\{0,1\right\}^{N_e-1}}}
\begin{align}
	\ifpaper{\!\!\!}A_{k-1} \ifpaper{\!}=& 
	E_{\subsCompute} \ee^{A\idxContinuous(\tau_{N_e}-\tau_{N_e-1})} \underbrace{\ifpaper{\textstyle} \reverseProduct_{i=1}^{N_e-1}  \ifpaper{\!} E_i \ee^{A\idxContinuous  (\tau_i - \tau_{i-1})} }_{\eqcolon X}. \ifpaper{\!\!}
\end{align}
$X$ only contains measurement and actuation events, \ie,\ifreport{\xspace in the following analysis of $X$,}\xspace all matrices $E_i$ are either $E_i=E_{\subsActuate,\dots}$ or $E_i=E_{\subsMeasure,\dots}$. \ifpaper{The remainder of the proof, which can be found in \specificReferenceToReport{sec:timing-stability:splitting:proof}, then consists of expanding the product $X$ and canceling most terms by exploiting the structure of $E_{\{\subsMeasure,\subsActuate\},i}$ and $A\idxContinuous$.
This shows \cref{theorem:timing-stability:splitting} with
{\small
	\begin{align}
	A&(\Delta t = 0)  \nonumber\\[-.5em]
	=& E_{\subsCompute} \ee^{A\idxContinuous T/2} \left(I + \sum_{i=1}^{m} (E_{\subsActuate,i} - I) + \sum_{j=1}^{p} (E_{\subsMeasure,j} - I)\right) \ee^{A\idxContinuous T/2},\nonumber\\[-1.5em] \label{eq:timing-stability:splitting:only-in-paper-result-1}\\
	\Delta &A_{\subsActuate,i} (\Delta t_{\subsActuate,i}) =E_{\subsCompute} \ee^{A\idxContinuous T/2}(\ee^{-A\idxContinuous\Delta t_{\subsActuate,i}}-I) (E_{\subsActuate,i}-I),  \\
	\Delta &A_{\subsMeasure,j} (\Delta t_{\subsMeasure,j})
	=  E_{\subsCompute} (E_{\subsMeasure,j}-I) \ee^{A\idxContinuous T/2}(\ee^{A\idxContinuous \Delta t_{\subsMeasure,j}} - I), \\
	\Delta &A_{\subsActuate\subsMeasure,i,j}(\Delta t_{\subsMeasure,j} - \Delta t_{\subsActuate,i}) \nonumber\\
	=& \begin{cases}
	0, ~~\Delta t_{\subsMeasure,j} - \Delta t_{\subsActuate,i} \leq 0,\\
	E_{\subsCompute} (E_{\subsMeasure,j}-I) \ee^{A\idxContinuous  (\Delta t_{\subsMeasure,j}-\Delta t_{\subsActuate,i})} (E_{\subsActuate,i}-I), & \text{else}.
	\end{cases}\nonumber\\[-1.5em] \label{eq:timing-stability:splitting:only-in-paper-result-4}
	\end{align}
}
All cases of the deviations $\Delta A_{\dots}$ are of the form $M_1 (\ee^{A\idxContinuous \delta(\Delta t)} - I) M_2$, where $M_{1,2}\in\R^{n \times n}$ depend on the event type and $\delta(\Delta t) \in \R$ on the timing such that $\lim_{|\Delta t| \to 0} \delta(\Delta t)=0$. Consequently, we have $\lim_{|\Delta t| \to 0} \Delta A_{\dots} = M_1 (I-I) M_2 = 0$.
}%
\ifreport{
	
	$X$ can be rewritten as
	\begin{align}
		X=&\reverseProduct_{i=1}^{N_e-1}  (I \ee^{A\idxContinuous  (\tau_i - \tau_{i-1})} + (E_i-I) \ee^{A\idxContinuous  (\tau_i - \tau_{i-1})}) \\
		\stackrel{\mathclap{\text{\cref{lemma:robust-stability:splitting:expansion}}}}{=}&\hspace{1em}\sum_{\dVektor} \reverseProduct_{i=1}^{N_e-1} 
		\begin{lrdcases}
		I, & d_i=0\\
		(E_i-I), & d_i=1
		\end{lrdcases} \cdot  \ee^{A\idxContinuous  (\tau_i - \tau_{i-1})}.
	\end{align}	
	This expanded form can be split by $\sum_i d_i$, the amount of how often a factor $(E_i-I)$ appears, to then apply the event matrix properties from \cref{sec:robust-stability:splitting:matrix-properties}:
	\allowdisplaybreaks
	\begin{align}
			X=&\reverseProduct_{i=1}^{N_e-1} I \ee^{A\idxContinuous  (\tau_i - \tau_{i-1})} + \nonumber \\
			&\sum_{j=1}^{N_e-1} \left( \left( \reverseProduct_{i=j+1}^{N_e-1} I  \ee^{A\idxContinuous  (\tau_i - \tau_{i-1})} \right) (E_j-I) \ee^{A\idxContinuous  (\tau_j - \tau_{j-1})} \left( \reverseProduct_{i=1}^{j-1} I \ee^{A\idxContinuous  (\tau_i - \tau_{i-1})}   \right) \right) + \nonumber \\
			&\underbrace{\sum_{\dVektor, \sum_i d_i = 2} \reverseProduct_{i=1}^{N_e-1} \begin{lrdcases}
				I, & d_i=0\\
				(E_i-I), & d_i=1
				\end{lrdcases} \cdot \ee^{A\idxContinuous  (\tau_i - \tau_{i-1})}
			}_{\substack{
				\text{all summands except for the combination } \dots  (E_{\subsMeasure,i}-I) \ee^{A\idxContinuous  \delta} \dots  (E_{\subsActuate,j}-I) \dots \\
				\text{are $=0$ due to \eqref{eq:timing-stability:splitting:nullproduct} and } \ee^{A\idxContinuous \delta_0} I \ee^{A\idxContinuous \delta_1} = \ee^{A\idxContinuous(\delta_0+\delta_1)}
			}}
			 + \nonumber \\
			&\underbrace{
				\sum_{\dVektor, \sum_i d_i \geq 3} \reverseProduct_{i=1}^{N_e-1}   \begin{lrdcases}
				I, & d_i=0\\
				(E_i-I), & d_i=1
				\end{lrdcases} \cdot \ee^{A\idxContinuous  (\tau_i - \tau_{i-1})}
			}_{ =0 \text{ due to \eqref{eq:timing-stability:splitting:nullproduct3} and }\ee^{A\idxContinuous \delta_0} I \ee^{A\idxContinuous \delta_1} = \ee^{A\idxContinuous(\delta_0+\delta_1)}}.\\
		=& \reverseProduct_{i=1}^{N_e-1}  \ee^{A\idxContinuous  (\tau_i - \tau_{i-1})} + \nonumber \\
		&\sum_{j=1}^{N_e-1} \left( \left( \reverseProduct_{i=j+1}^{N_e-1} I \ee^{A\idxContinuous  (\tau_i - \tau_{i-1})}  \right)  (E_j-I)  \ee^{A\idxContinuous  (\tau_j - \tau_{j-1})}\left( \reverseProduct_{i=1}^{j-1} I \ee^{A\idxContinuous  (\tau_i - \tau_{i-1})}  \right)  \right)+ \nonumber \\
		&\sum_{i=1}^{p} \sum_{j=1}^{m} \small \begin{cases}
		0, & t_{\subsActuate,j} \geq t_{\subsMeasure,i}\\
		\ee^{A\idxContinuous  (\tau_{N_e-1}-t_{\subsMeasure,i})} (E_{\subsMeasure,i}-I) \ee^{A\idxContinuous  (t_{\subsMeasure,i}-t_{\subsActuate,j})} (E_{\subsActuate,j}-I) \underbrace{\ee^{A\idxContinuous  (t_{\subsActuate,j}-\tau_0)}}_{\text{omit due to \eqref{eq:timing-stability:splitting:actuate-time-invariant}}}, & \text{else}
		\end{cases} \\
		=& \ee^{A\idxContinuous  (\tau_{N_e-1}-\tau_0)} + \sum_{j=1}^{N_e-1} \ee^{A\idxContinuous(\tau_{N_e-1} - \tau_{j})} (E_j-I) \ee^{A\idxContinuous(\tau_{j} - \tau_0)} + \nonumber \\
			&\sum_{i=1}^{p} \sum_{j=1}^{m} \small \begin{cases}
		0, & t_{\subsActuate,j} \geq t_{\subsMeasure,i}\\
		\ee^{A\idxContinuous  (\tau_{N_e-1}-t_{\subsMeasure,i})} (E_{\subsMeasure,i}-I) \ee^{A\idxContinuous  (t_{\subsMeasure,i}-t_{\subsActuate,j})} (E_{\subsActuate,j}-I), & \text{else}
		\end{cases} \\
		=& \ee^{A\idxContinuous  (\tau_{N_e-1}-\tau_0)} + \nonumber \\
		& \sum_{i=1}^{m} \ee^{A\idxContinuous(\tau_{N_e-1} - t_{\subsActuate,i})} (E_{\subsActuate,i}-I) \underbrace{\ee^{A\idxContinuous(t_{\subsActuate,i} - \tau_0)}}_{\text{omit due to \eqref{eq:timing-stability:splitting:actuate-time-invariant}}} + \nonumber \\
		& \sum_{j=1}^{p} \ee^{A\idxContinuous(\tau_{N_e-1} - t_{\subsMeasure,j})} (E_{\subsMeasure,j}-I) \ee^{A\idxContinuous(t_{\subsMeasure,j} - \tau_0)} + \nonumber \\	
		&\sum_{i=1}^{p} \sum_{j=1}^{m} \small \begin{cases}
		0, & t_{\subsActuate,j} \geq t_{\subsMeasure,i}\\
		\ee^{A\idxContinuous  (\tau_{N_e-1}-t_{\subsMeasure,i})} (E_{\subsMeasure,i}-I) \ee^{A\idxContinuous  (t_{\subsMeasure,i}-t_{\subsActuate,j})} (E_{\subsActuate,j}-I), & \text{else.}
		\end{cases}
		\intertext{According to this splitting of $X$, $A_k=A_{\subsCompute} \ee^{A\idxContinuous(\tau_{N_e}-\tau_{N_e-1})} X$ can be rewritten as}
	A_k &= E_{\subsCompute} \ee^{A\idxContinuous(\tau_{N_e}-\tau_0)} + \nonumber \\
	& \sum_{i=1}^{m}E_{\subsCompute} \ee^{A\idxContinuous(\tau_{N_e} - t_{\subsActuate,i})} (E_{\subsActuate,i}-I)+ \nonumber \\
	& \sum_{j=1}^{p} E_{\subsCompute} \underbrace{\ee^{A\idxContinuous(\tau_{N_e} - t_{\subsMeasure,j})}}_{\text{omit due to \eqref{eq:timing-stability:splitting:measurement-reverse-time-invariant}}} (E_{\subsMeasure,j}-I) \ee^{A\idxContinuous(t_{\subsMeasure,j} - \tau_0)} + \nonumber \\	
	&\sum_{i=1}^{p} \sum_{j=1}^{m} \small \begin{cases}
	0, & t_{\subsActuate,j} \geq t_{\subsMeasure,i}\\
	E_{\subsCompute} \underbrace{\ee^{A\idxContinuous(\tau_{N_e} - t_{\subsMeasure,j})}}_{\text{omit due to \eqref{eq:timing-stability:splitting:measurement-reverse-time-invariant}}} (E_{\subsMeasure,i}-I) \ee^{A\idxContinuous  (t_{\subsMeasure,i}-t_{\subsActuate,j})} (E_{\subsActuate,j}-I), & \text{else}.
	\end{cases}
	\label{eq:timing-stability:splitting:Ak-unsimplified}
	\\
	\intertext{With $t_{\{u,y\},i} = \Delta t_{\{u,y\},i} + kT$, this becomes}
	A_k &= E_{\subsCompute} \ee^{A\idxContinuous T} + \nonumber \\
	& \sum_{i=1}^{m}\underbrace{ E_{\subsCompute} \ee^{A\idxContinuous(T/2 - \Delta t_{\subsActuate,i})} (E_{\subsActuate,i}-I)}_{M_{u,i}(\Delta t_{\subsActuate,i})} + \nonumber \\
	& \sum_{j=1}^{p} \underbrace{ E_{\subsCompute} (E_{\subsMeasure,j}-I) \ee^{A\idxContinuous(T/2 + \Delta t_{\subsMeasure,j})} }_{M_{\subsMeasure,j}(\Delta t_{\subsMeasure,j})} + \nonumber \\	
	&\sum_{j=1}^{p} \sum_{i=1}^{m} \small \underbrace{\begin{cases}
	0, &  \Delta t_{\subsMeasure,j} - \Delta t_{\subsActuate,i} \leq 0\\
	E_{\subsCompute} (E_{\subsMeasure,j}-I) \ee^{A\idxContinuous  (\Delta t_{\subsMeasure,j}-\Delta t_{\subsActuate,i})} (E_{\subsActuate,i}-I), & \text{else}.
	\end{cases}}_{M_{\subsActuate\subsMeasure,i,j}(\Delta t_{\subsMeasure,j} - \Delta t_{\subsActuate,i})}\\
	&= E_{\subsCompute} \ee^{A\idxContinuous T} + \sum_{i=1}^{m} M_{\subsActuate,i}(\Delta t_{\subsActuate,i}) + \sum_{i=1}^{p} M_{\subsMeasure,i}(\Delta t_{\subsMeasure,i}) + \sum_{i=1}^{m} \sum_{j=1}^{p} M_{\subsActuate\subsMeasure,i,j}(\Delta t_{\subsMeasure,j} - \Delta t_{\subsActuate,i})
	\intertext{Setting this equal to the desired result of \cref{theorem:timing-stability:splitting},}
		A_k \stackrel{!}{=}& A(\Delta t = 0) +  \sum_{i=1}^{m} \underbrace{\Delta A_{\subsActuate,i} (\Delta t_{\subsActuate,i})}_{\text{small}} +  \sum_{j=1}^{p} \underbrace{\Delta A_{\subsMeasure,j} (\Delta t_{\subsMeasure,j})}_{\text{small}} + \nonumber \\&
	\sum_{i=1}^{m} \sum_{j=1}^{p}  \underbrace{\Delta A_{\subsActuate\subsMeasure,i,j} (\Delta t_{\subsMeasure,j} - \Delta t_{\subsActuate,i})}_{\text{small}}, \label{eq:timing-stability:splitting-repeated-in-proof}
	\end{align}
	leads to
	\begin{align}
	A(\Delta t = 0) =& \left.A_k\right|_{\Delta t_{\subsMeasure,0,1,\dots,m} = \Delta t_{\subsActuate,0,1,\dots,p}=0}\\
	=& E_{\subsCompute} \ee^{A\idxContinuous T} + E_{\subsCompute} \ee^{A\idxContinuous T/2} \left(\sum_{i=1}^{m} (E_{\subsActuate,i} - I) \right) + \nonumber \\
	&E_{\subsCompute} \left(\sum_{j=1}^{p} (E_{\subsMeasure,j} - I)\right) \ee^{A\idxContinuous T/2} \\
	\stackrel{\text{\eqref{eq:timing-stability:splitting:actuate-time-invariant}, \eqref{eq:timing-stability:splitting:measurement-reverse-time-invariant}}}{=}& E_{\subsCompute} \ee^{A\idxContinuous T/2} \left(I + \sum_{i=1}^{m} (E_{\subsActuate,i} - I) + \sum_{j=1}^{p} (E_{\subsMeasure,j} - I)\right) \ee^{A\idxContinuous T/2},  \label{eq:timing-stability:splitting:only-in-report-result-1}\\
	\Delta A_{\subsActuate,i} (\Delta t_{\subsActuate,i}) =& M_{\subsActuate,i}(\Delta t_{\subsActuate,i}) -  M_{\subsActuate,i}(0) \\
	=& E_{\subsCompute} \ee^{A\idxContinuous T/2}(\ee^{-A\idxContinuous\Delta t_{\subsActuate,i}}-I) (E_{\subsActuate,i}-I)  	\label{eq:timing-stability:splitting:only-in-report-result-2}\\
	\Delta A_{\subsMeasure,j} (\Delta t_{\subsMeasure,j}) =& M_{\subsMeasure,j}(\Delta t_{\subsMeasure,j}) -  M_{\subsMeasure,j}(0) \\
	=&  E_{\subsCompute} (E_{\subsMeasure,j}-I) \ee^{A\idxContinuous T/2}(\ee^{A\idxContinuous \Delta t_{\subsMeasure,j}} - I), 	\label{eq:timing-stability:splitting:only-in-report-result-3}\\
	\Delta A_{\subsActuate\subsMeasure,i,j}(\Delta t_{\subsMeasure,j} - \Delta t_{\subsActuate,i}) =& M_{\subsActuate\subsMeasure,i,j}(\Delta t_{\subsMeasure,j} - \Delta t_{\subsActuate,i}) -  \underbrace{M_{\subsActuate\subsMeasure,i,j}(0,0)}_0 \\
	=& \begin{cases}
	0, & \Delta t_{\subsMeasure,j} - \Delta t_{\subsActuate,i} \leq 0\\
	E_{\subsCompute} (E_{\subsMeasure,j}-I) \ee^{A\idxContinuous  (\Delta t_{\subsMeasure,j}-\Delta t_{\subsActuate,i})} (E_{\subsActuate,i}-I), & \text{else}.
	\end{cases}
	\end{align}
	
	This proves the key equation of \cref{theorem:timing-stability:splitting}. With
	\begin{equation}
		(E_{\subsMeasure,j}-I) (M-I) (E_{\subsActuate,i}-I) = (E_{\subsMeasure,j}-I) M (E_{\subsActuate,i}-I) - \underbrace{(E_{\subsMeasure,j}-I) (E_{\subsActuate,i}-I)}_{=0 \text{ due to \eqref{eq:timing-stability:splitting:nullproduct}}} \quad \forall M \in \R^{n \times n},
	\end{equation}
	$\Delta A_{\subsActuate\subsMeasure,i,j}$ can be rewritten as
	\begin{align}
	\Delta A_{\subsActuate\subsMeasure,i,j}(\Delta t_{\subsMeasure,j} - \Delta t_{\subsActuate,i}) =& \begin{cases}
	0, & \Delta t_{\subsMeasure,j} - \Delta t_{\subsActuate,i} \leq 0\\
	 {E_{\subsCompute} (E_{\subsMeasure,j}-I) (\ee^{A\idxContinuous  (\Delta t_{\subsMeasure,j}-\Delta t_{\subsActuate,i})} - I) (E_{\subsActuate,i}-I),} & \text{else}.
	\end{cases}
	\label{eq:timing-stability:splitting:only-in-report-result-4}
	\end{align}
	Now, all $\Delta A_{\dots}$ are of the form $M_1 (\ee^{A\idxContinuous \delta} - I) M_2$, which simplifies the derivation of bounds. Note that in all cases, $\delta \to 0$ if $|\Delta t| \to 0$. Therefore, $\lim_{|\Delta t| \to 0} \Delta A_{\dots} = M_1 (I-I) M_2 = 0$.
	
	This concludes the proof of \cref{theorem:timing-stability:splitting}.
}%
The results \ifpaper{\eqref{eq:timing-stability:splitting:only-in-paper-result-1}--\eqref{eq:timing-stability:splitting:only-in-paper-result-4}}\ifreportX{\labelcref{eq:timing-stability:splitting:only-in-report-result-1,eq:timing-stability:splitting:only-in-report-result-2,eq:timing-stability:splitting:only-in-report-result-3,eq:timing-stability:splitting:only-in-report-result-4}} are validated by numerical experiments. \ifreport{(For details, see  \texttt{check\_Ak\_delta\_to\_nominal()} in \texttt{src/qronos/lis/test\_lis.py}  in the code linked in \cref{sec:experiments}.)}

\section{$P$-Ellipsoid Norm} \label{sec:timing-stability:norm}
This section presents connections between the Lyapunov candidate function $V_P(x):=x^\transp P x$ and the $P$-ellipsoid matrix norm.
\ifpaper{\vspace{-1.5em}}
\ifreport{
	\begin{theorem} \label{theorem:timing-stability:lyapunov-dt-lti}		
		Iff the \emph{time-invariant} system $x_{k+1} = A x_k$ is exponentially stable (\ie, \ac{DGES}), there always exists a quadratic Lyapunov function $V_P(x):=x^\transp P x \succ 0$ with $V_P(Ax) \prec V_P(x)$. 
		$P \in \R^{n \times n}$ 
		is a solution of
		\begin{equation}
		A^\transp P A - P = -Q, \qquad P,Q \succ 0
		\end{equation}
		with the positive definite parameter $Q$	
		\cite[Proposition 11.10.5]{Bernstein2009}.
	\end{theorem}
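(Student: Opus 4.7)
The plan is to prove both directions of the \enquote{iff} separately, with the discrete Lyapunov equation $A^\transp P A - P = -Q$ serving as the explicit construction of $P$ in the forward direction. Since this is exactly \cite[Proposition~11.10.5]{Bernstein2009}, I would only sketch the standard arguments rather than derive them in full.

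For the easier direction \enquote{$\Leftarrow$}, I would suppose $P,Q \succ 0$ satisfy $A^\transp P A - P = -Q$ and compute $V_P(Ax) - V_P(x) = x^\transp(A^\transp P A - P)x = -x^\transp Q x$, which is $\prec 0$ in the paper's sense and so already yields the pointwise descent $V_P(Ax) \prec V_P(x)$. To upgrade this to exponential decay (hence \ac{DGES}), I would use a Rayleigh-quotient argument: $\alpha := \min_{|x|=1} (x^\transp Q x)/(x^\transp P x)$ is attained on the compact unit sphere and strictly positive because both quadratic forms are positive there. This gives $V_P(x_{k+1}) \le (1-\alpha) V_P(x_k)$, and iterating combined with $\lambda_{\min}(P)|x|^2 \le V_P(x) \le \lambda_{\max}(P)|x|^2$ yields $|x_k| \le C|x_0|\rho^k$ with $\rho = \sqrt{1-\alpha}<1$.

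For the harder direction \enquote{$\Rightarrow$}, I would first observe that \ac{DGES} implies $\spectralRadius{A} < 1$: applied to an eigenpair $(\lambda, v)$ of $A$, the bound $|A^k v| \le C|v|\rho^k$ reduces to $|\lambda|^k \le C\rho^k$ for all $k \ge 0$, which forces $|\lambda| \le \rho < 1$. Next, picking any $Q \succ 0$ (for example $Q = I$), I would propose the constructive candidate
\[
P := \sum_{k=0}^{\infty} (A^\transp)^k Q A^k.
\]
The main obstacle is justifying convergence of this series, which I plan to handle via Gelfand's formula: $\spectralRadius{A}<1$ yields $\|A^k\|_\sigma \le c\,\tilde\rho^{\,k}$ for some $\tilde\rho \in (\spectralRadius{A},1)$ and $c > 0$, so the $k$-th summand is bounded in spectral norm by $c^2\|Q\|_\sigma \tilde\rho^{\,2k}$ and the series converges absolutely. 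Positive definiteness $P \succ 0$ then follows because the $k=0$ term equals $Q \succ 0$ and every subsequent term is positive semidefinite. Finally, an index shift yields $A^\transp P A = \sum_{k=1}^{\infty}(A^\transp)^k Q A^k = P - Q$, verifying the Lyapunov equation. Everything aside from the convergence step is routine algebra; the convergence argument is the one place where the stability hypothesis is really used.
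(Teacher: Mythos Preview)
Your argument is correct and is the standard textbook proof of the discrete Lyapunov lemma: Rayleigh-quotient contraction for the \enquote{$\Leftarrow$} direction, and the explicit series $P=\sum_{k\ge 0}(A^\transp)^k Q A^k$ for the \enquote{$\Rightarrow$} direction, with convergence via Gelfand's formula. The paper does not give its own proof at all; it simply cites \cite[Proposition~11.10.5]{Bernstein2009} and uses the result as a black box. So there is no alternative argument in the paper to compare against --- your sketch is precisely the classical one that the cited reference contains.

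One small technical remark on the \enquote{$\Rightarrow$} step: the eigenpair argument you give for $\spectralRadius{A}<1$ tacitly uses a possibly complex eigenvector, whereas \ac{DGES} is stated for real $x_0$. The clean fix is to note that \ac{DGES} directly yields $\|A^k\|_\sigma \le C\rho^k$ for all $k$, and since $|\lambda|^k \le \|A^k\|_\sigma$ for every eigenvalue $\lambda$, one obtains $|\lambda|\le C^{1/k}\rho \to \rho <1$. This avoids any real/complex bookkeeping.
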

}%
	\ifreport{
		
		Note that for all $P\succ 0$,  $\|A\|_P < 1$ is equivalent to $V_P(Ax) \prec V_P(x)$. The $P$-ellipsoid norm can therefore be interpreted as the matrix norm which is equivalent to a quadratic Lyapunov function. The relation of this norm to the (joint) spectral radius is further discussed in \cite[Section 2.3.7]{Jungers2009} and \cite{Blondel2005}.
	}%
%
\paragraph*{Proof of \cref{theorem:timing-stability:p-norm} ($\|\cdot\|_P$ is a submultiplicative norm):}
	Since $P\succ 0$, $\sqrt{V_P(x)} = \sqrt{x^{\transp} P x}$ is a vector norm \cite[Fact 9.7.30]{Bernstein2009}. The $P$-ellipsoid norm $\|\cdot\|_P$ is its equi-induced matrix norm,  therefore submultiplicative\ifreport{\xspace with $\|I\|_P=1$ due to \cref{def:timing-stability:equi-induced-norm,theorem:timing-stability:norm-of-exp}}. \ifreport{\qed}

\ifreport{The vector norm $\sqrt{V_P(x)}$ can also be seen as the euclidean norm after applying a coordinate transformation, as
	\begin{equation}
	\sqrt{V_P(x)}=\sqrt{x^{\transp} P x} = \sqrt{x^{\transp} P^{1/2} (P^{1/2})^{\transp} x} = |\underbrace{(P^{1/2})^{\transp}x}_{z}|,
	\end{equation}
	where $P^{1/2}$ is the Cholesky decomposition of $P$ per \cref{theorem:timing-stability:cholesky}.
	If $V_P$ is a Lyapunov function, the transformed system is contractive, \ie, $|z_{k+1}| = \sqrt{V_P(x_{k+1})} \le \sqrt{V_P(x_{k})} = |z_k|$.}

\begin{theorem} 	\label{theorem:timing-stability:p-norm-via-spectral}
$
		\|A\|_P = \|(P^{1/2})^{\transp} A (P^{1/2})^{-\transp}\|_{\sigma}.
$
\end{theorem}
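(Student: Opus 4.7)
:}
The plan is to reduce $\|A\|_P$ to a spectral norm via the change of variables $z = (P^{1/2})^{\transp} x$ induced by the Cholesky factorization, which was already alluded to right before the theorem statement. Since $P \succ 0$, \cref{theorem:timing-stability:cholesky} guarantees that $P^{1/2}$ is invertible, so this change of variables is a bijection on $\R^n \setminus \{0\}$, and $x$ ranges over $\R^n \setminus \{0\}$ if and only if $z$ does.

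First I would rewrite $V_P$ using Cholesky: $V_P(x) = x^{\transp} P^{1/2} (P^{1/2})^{\transp} x = |(P^{1/2})^{\transp} x|^2$. Applied to $Ax$ in place of $x$, this gives $V_P(Ax) = |(P^{1/2})^{\transp} A x|^2$. Setting $z = (P^{1/2})^{\transp} x$, so that $x = (P^{1/2})^{-\transp} z$, yields
\begin{equation}
\frac{V_P(Ax)}{V_P(x)} = \frac{|(P^{1/2})^{\transp} A (P^{1/2})^{-\transp} z|^2}{|z|^2}.
\end{equation}

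Finally I would take the square root and the maximum over $x \neq 0$, equivalently over $z \neq 0$, and invoke the definition of the spectral norm as the equi-induced norm of the Euclidean vector norm (see \cref{def:timing-stability:equi-induced-norm} and the definition of $\|\cdot\|_\sigma$):
\begin{equation}
\|A\|_P = \max_{z \neq 0} \frac{|(P^{1/2})^{\transp} A (P^{1/2})^{-\transp} z|}{|z|} = \|(P^{1/2})^{\transp} A (P^{1/2})^{-\transp}\|_\sigma.
\end{equation}

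There is no real obstacle here; the only point that needs care is verifying that the bijectivity of $x \mapsto z$ justifies swapping the domain of maximization, which follows immediately from invertibility of $P^{1/2}$ (and hence of $(P^{1/2})^{\transp}$) in \cref{theorem:timing-stability:cholesky}. The submultiplicativity and $\|I\|_P = 1$ established in \cref{theorem:timing-stability:p-norm} are consistent with this identity, since $\|\cdot\|_\sigma$ inherits these properties and similarity-type transformations preserve the matrix norm axioms.
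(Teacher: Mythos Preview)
Your proposal is correct and follows essentially the same route as the paper: rewrite $V_P(x)$ as $|(P^{1/2})^{\transp}x|^2$, change variables via $z=(P^{1/2})^{\transp}x$, and identify the resulting quotient maximization as the spectral norm. The paper's proof is merely a terser version of what you wrote, with the bijectivity of the substitution left implicit.
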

\begin{proof}
	Rewrite the $P$-ellipsoid norm as 
	\begin{equation}
	\|A\|_P = \max_{x \neq 0} \fracIfReport{|(P^{1/2})^{\transp} A x|}{|(P^{1/2})^{\transp} x|}
	\end{equation}
	 and change variables to $z$ with $x=(P^{1/2})^{-\transp} z$:
	\begin{multline}
	\ifreportX{\hfill}
	\|A\|_P  = \max_{z \neq 0} \fracIfReport{|(P^{1/2})^{\transp} A (P^{1/2})^{-\transp} z|}{|z|} \ifpaper{\\[-.3em]}= \|(P^{1/2})^{\transp} A (P^{1/2})^{-\transp}\|_{\sigma} \ifpaper{.}\ifreportX{\hfill}\ifpaper{\qed}
	\end{multline}
\end{proof}

\begin{theorem}[Extreme Quadratic Lyapunov Function]
	\label{theorem:timing-stability:extreme-lyapunov-lti}
	If a time-invariant system $x_{k+1} = A x_k$ is stable, \ie, $\spectralRadius{ A } < 1$, then there exists a quadratic Lyapunov function $V_P(x)$ that proves a stability factor $\bar \rho$ arbitrarily close to  the spectral radius $\spectralRadius{ A }$:
	\begin{multline}
	\forall A\in \R^{n \times n} \text{ with }\spectralRadius{A} < 1  \quad \forall \bar\rho > \spectralRadius{A} \ifpaper{\\} \quad \exists P \quad  \|A\|_P = \max_{x\neq 0} \sqrt{\fracIfReport{V_P(Ax)}{V_P(x)}} \le \bar \rho.
	\end{multline}
\end{theorem}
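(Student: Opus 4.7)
The plan is to reduce this to the basic Lyapunov-equation theorem for stable LTI systems, \cref{theorem:timing-stability:lyapunov-dt-lti}, applied not to $A$ itself but to the rescaled matrix $A/\bar\rho$. Since the hypothesis gives $\bar\rho > \spectralRadius{A} \ge 0$, we have $\bar\rho > 0$ and
\begin{equation}
\spectralRadius{A/\bar\rho} = \spectralRadius{A}/\bar\rho < 1,
\end{equation}
so the LTI system $z_{k+1} = (A/\bar\rho)\,z_k$ is exponentially stable. This is the only place where the assumption $\spectralRadius{A}<1$ (actually, just $\spectralRadius{A}<\bar\rho$) is used, and the rescaling is what provides the slack needed to get an explicit bound rather than the plain strict inequality supplied by the Lyapunov theorem.

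From \cref{theorem:timing-stability:lyapunov-dt-lti}, pick any $P \succ 0$ for which $V_P((A/\bar\rho)\,x) \prec V_P(x)$. Expanding $V_P$ and multiplying by $\bar\rho^{\,2}$, this is exactly
\begin{equation}
x^{\transp} A^{\transp} P A\, x \;<\; \bar\rho^{\,2}\, x^{\transp} P x \quad \forall x \neq 0.
\end{equation}
I would then take the supremum of $V_P(Ax)/V_P(x)$ over $x \neq 0$ to obtain $\|A\|_P^{\,2}$. The remaining step is a compactness argument: the ratio is positively homogeneous of degree $0$, so the supremum can be restricted to the ellipsoid $\{x : V_P(x) = 1\}$, which is compact, and the continuous function $V_P(Ax)$ attains its maximum there. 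Calling this maximum $m$, the pointwise strict inequality above forces $m < \bar\rho^{\,2}$, hence $\|A\|_P = \sqrt{m} \le \bar\rho$ (in fact, strictly less).

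Of the three steps, the first two are essentially one-liners once one notices the rescaling trick. The step that I expect to require the most care is turning the strict pointwise inequality $V_P(Ax) < \bar\rho^{\,2}\,V_P(x)$ into the non-strict supremum bound $\|A\|_P \le \bar\rho$: a naive argument might yield only $\|A\|_P < \bar\rho$ pointwise without addressing whether the supremum could approach $\bar\rho$ in the limit. The compactness of the $V_P$-unit ellipsoid resolves this cleanly, but this is the one place where the argument is not purely algebraic. An alternative route, which would avoid compactness entirely, is to note that the Lyapunov inequality $A^{\transp} P A \prec \bar\rho^{\,2} P$ is itself equivalent (via the substitution $y = P^{1/2\,\transp}x$ and \cref{theorem:timing-stability:p-norm-via-spectral}) to $\|(P^{1/2})^{\transp} A (P^{1/2})^{-\transp}\|_\sigma < \bar\rho$; I would mention this as a sanity check but stick with the direct supremum argument for brevity.
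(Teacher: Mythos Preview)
Your proposal is correct and follows essentially the same route as the paper: rescale to $A/\bar\rho$, apply \cref{theorem:timing-stability:lyapunov-dt-lti} to obtain $P$ with $V_P((A/\bar\rho)x)\prec V_P(x)$, and then unwind to $\|A\|_P\le\bar\rho$. The paper's version simply writes out the chain of implications without isolating the compactness step you highlight; your treatment of the supremum via the $V_P$-unit ellipsoid is more explicit than the paper's, but the underlying argument is the same.
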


\begin{proof}
	\ifpaper{See \specificReferenceToReport{theorem:timing-stability:extreme-lyapunov-lti}. \ifpaper{\qed}}
	\ifreport{
		Assume $\spectralRadius{A} < 1$ and $\bar \rho > \spectralRadius{A}$. Assume that $\bar \rho < 1$, which is without loss of generality because the resulting $P$ is also valid for any $\bar \rho>1$. 
		
		Consider the \enquote{destabilized} system $\tilde x_{k+1} = A \bar \rho^{-1} \tilde x_k$, for which all eigenvalues and therefore the spectral radius are scaled by $\bar \rho^{-1}$. It is still stable, but almost unstable for $\bar\rho \to \spectralRadius{A}^+$.
		\begin{equation}
		\spectralRadius{ A \bar \rho^{-1} } = \bar \rho^{-1} \spectralRadius{ A }  \in (\spectralRadius{A},1)
		\end{equation}
		
		Applying \cref{theorem:timing-stability:lyapunov-dt-lti} to $A \bar \rho^{-1}$ and any $Q\succ 0$ shows that there is a $P$ such that
		\begin{align}
		V_{P}(\bar \rho^{-1}Ax) &\prec V_{ P}(x) \\
		\Rightarrow \bar \rho^{-2} V_{ P} (Ax) &\prec V_{ P}(x) \\
		\Rightarrow \frac{V_{P} (Ax)}{V_{ P}(x)} & \prec \bar \rho^{2} \quad \forall x \neq 0 \\
		\Rightarrow \sqrt{\frac{V_P(Ax)}{V_P(x)}} &\le \bar \rho  \quad \forall x \neq 0\\
		\Rightarrow \|A\|_P &\leq \bar\rho.
		\end{align}
		Note that the resulting $V_P(x)$ is a Lyapunov function for both $A \bar \rho^{-1}$ and $A$.
	}
\end{proof}

\ifreport{
	\begin{theorem}[Spectral Radius Bound via Matrix Norms {\cite[Proposition 2.6]{Jungers2009}}]
		\label{theorem:timing-stability:jsr-from-matrixnorm}
		Any submultiplicative matrix norm $\|\cdot\|_S$ leads to an upper bound of the spectral radius:
		\begin{equation}
		\spectralRadius{A} \leq \|A\|_S \quad  \forall A \in \R^{n \times n} \quad \forall  \|\cdot\|_S.
		\end{equation}
		\todoOptional{Für Diss: Das hier ist die von JSR auf SR vereinfachte Version von jsr-from-matrixnorm; hier wieder entfernen sobald das JSR-Kapitel eingebunden ist. Die Zitation ist hier auch nur 95\% richtig, streng genommen fehlt noch der Zusammenhang JSR=SR.}
	\end{theorem}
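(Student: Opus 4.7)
The plan is to combine two standard ingredients: repeated application of submultiplicativity, and Gelfand's spectral radius formula.

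First, submultiplicativity yields $\|A^k\|_S \le \|A\|_S^k$ for every $k \in \N$ by straightforward induction, since $\|A^{k+1}\|_S = \|A \cdot A^k\|_S \le \|A\|_S \|A^k\|_S$. Taking $k$-th roots gives $\|A^k\|_S^{1/k} \le \|A\|_S$ uniformly in $k$.

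Next, I would invoke Gelfand's spectral radius formula, which states that for any submultiplicative matrix norm $\|\cdot\|_S$ on $\R^{n\times n}$,
\begin{equation*}
\spectralRadius{A} = \lim_{k\to\infty} \|A^k\|_S^{1/k}.
\end{equation*}
This is a classical result available, e.g., in \cite{Bernstein2009}. Passing to the limit in the per-$k$ inequality from the first step then yields $\spectralRadius{A} \le \|A\|_S$, establishing the claim.

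The main obstacle is Gelfand's formula itself; the cleanest course is to invoke it rather than reprove it. For a more self-contained route, one can give a direct eigenvector argument at least whenever a \emph{real} eigenvalue $\lambda$ attains $|\lambda| = \spectralRadius{A}$: form the rank-one real matrix $V \coloneq v v^{\transp} \neq 0$ from a corresponding real eigenvector $v$, observe $AV = \lambda V$, and apply submultiplicativity via $|\lambda|\|V\|_S = \|AV\|_S \le \|A\|_S\|V\|_S$, then divide by $\|V\|_S > 0$. The subtle case is when $\spectralRadius{A}$ is attained only by a conjugate pair of complex eigenvalues; one then has to work on the real invariant 2D subspace on which $A$ acts as $|\lambda|$ times a rotation, or pass to an appropriate iterate $A^k$ and use equivalence of norms (\cref{theorem:timing-stability:norm-comparison}) to transfer the bound. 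This is precisely the subtle point that makes the Gelfand route the more convenient packaging.
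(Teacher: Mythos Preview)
Your argument via Gelfand's formula is correct and is the standard way to establish this inequality: submultiplicativity yields $\|A^k\|_S^{1/k}\le\|A\|_S$ for all $k$, and Gelfand's formula identifies the left-hand limit as $\spectralRadius{A}$. Your side remark on the direct eigenvector argument is also accurate, including the honest flag that the complex-eigenvalue case needs extra care.

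There is, however, nothing to compare against here: the paper does not supply its own proof of this theorem. It merely states the result and attributes it to \cite[Proposition~2.6]{Jungers2009}. So your write-up is strictly more than what the paper provides; where the paper outsources to a citation, you have given a complete (and correct) justification.
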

	\begin{remark}[Extremal $P$-ellipsoid norm]
		\label{remark:timing-stability:extreme-p-norm}
		In the general case, there is no lower $P$-ellipsoid norm than the one guaranteed by \cref{theorem:timing-stability:extreme-lyapunov-lti}. Especially, it is not generally possible to find a $P$ such that $\|A\|_P=\spectralRadius{A}$ holds exactly.
	\end{remark}
	\begin{proof}
		Due to \cref{theorem:timing-stability:jsr-from-matrixnorm}, $\|A\|_P \geq \spectralRadius{A}$ always holds. The remainder of this proof is to show by example that \enquote{$=$} is not generally possible, \ie,
		\begin{equation}
		\text{for }A=\mat{\rho & 1 \\ 0 & \rho}\text{ with }0 < \rho < 1,\quad \|A\|_P \neq \spectralRadius{A} \quad \forall P\succ 0.
		\end{equation}

		Assume $A$ as given in the previous equation. Here, $\spectralRadius{A}=\rho < 1$. All $P \succ 0$  can be parameterized using \cref{theorem:timing-stability:cholesky} as 
		\begin{equation}
		P= P^{1/2} (P^{1/2})^{\transp} \text{ with } P^{1/2}=\mat{a & 0 \\ b & c}, ~ a>0,~c>0,~b\in\R.
		\end{equation}
		By \cref{theorem:timing-stability:p-norm-via-spectral}, 
		\begin{equation}
		\|A\|_P=\|\underbrace{(P^{1/2})^{\transp} A (P^{1/2})^{-\transp}}_M\|_{\sigma} = \max\{\sigma_1, \sigma_2\},
		\end{equation}
		where $\sigma_i^2$ are the eigenvalues of $M^\transp M$, which are the solutions of
		\begin{align}
		0 &= \det (M^\transp M - \sigma_i^2 I). \\
		\Leftrightarrow \dots \Leftrightarrow
		0 &= {\sigma _{i}}^4-\underbrace{\left(\frac{a^2}{c^2} + 2\rho ^2\right)}_{\eqcolon d > 2 \rho^2 > 0}\,{\sigma _{i}}^2+\rho ^4\\
		\Leftrightarrow \sigma_i^2& = \frac{\overbrace{d}^{>2\rho^2} \pm ({\overbrace{d^2 - 4 \rho^4}^{>0}})^{1/2}}{2}  \quad \Rightarrow \max_{i \in \{1,2\}} \sigma_i^2 > \rho^2\\
		\Rightarrow \|A\|_P &> \spectralRadius{A} \quad \text{ for all possible }P.
		\end{align}
		In the limit $\|A\|_P \to \spectralRadius{A}$, this results in $c\to\infty$ or $a \to 0$, so that $P^{1/2}$ or $(P^{1/2})^{-1}$ become numerically problematic. This motivates that a numerical solution for $P$ should stay away from this limit, but rather keep some distance $\|A\|_P - \spectralRadius{A} > 0$ to ensure numerical robustness.
		
		The source code linked in \cref{sec:experiments} contains symbolic and numeric computations for this example in \texttt{notes/matlab\_counterexample\_for\_existence\_of\_extreme\_P.m}.
	\end{proof}
}

\begin{theorem}[Robust stability from norm bounds]
	\label{theorem:timing-stability:norm-bound:generic}
	Let\ifpaper{\hphantom{\,}} $A_k = \sum_{i=0}^{N} A_{k,i}$ with fixed $N$. Then, the system $x_{k+1} = A_k x_{k}$ is \ac{DGES}$(\bar\rho, C)$ for some $C$ if there are a submultiplicative matrix norm $\|\cdot\|$ and a bound $0 \leq \bar \rho<1$ such that $\sum_i \|A_{k,i}\| \le \bar \rho ~ \forall k$.
\end{theorem}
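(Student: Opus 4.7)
The plan is to turn the hypothesis into a norm bound on the iterated transition operator $A_{k-1}A_{k-2}\cdots A_0$ and then translate that bound into the Euclidean setting used by the definition of \ac{DGES}.

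First, I would use the triangle inequality (\cref{def:norm}) to absorb the decomposition $A_k=\sum_{i=0}^N A_{k,i}$ into a single per-step bound, obtaining $\|A_k\|\le\sum_{i=0}^N \|A_{k,i}\|\le\bar\rho$ uniformly in $k$. Submultiplicativity of $\|\cdot\|$ then gives, by a straightforward induction on $k$,
\begin{equation}
\|A_{k-1}A_{k-2}\cdots A_0\|\le\prod_{j=0}^{k-1}\|A_j\|\le\bar\rho^{\,k}.
\end{equation}
Since $x_k = A_{k-1}\cdots A_0\,x_0$, it remains to turn the left-hand side into a bound on the Euclidean norm $|x_k|$.

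The one subtle point is exactly this last translation: because the hypothesis only supplies an abstract submultiplicative matrix norm (not necessarily equi-induced by the Euclidean vector norm), the estimate $|Ax|\le\|A\|\,|x|$ need not hold. I would resolve this by invoking equivalence of norms on the finite-dimensional space $\R^{n\times n}$ (\cref{theorem:timing-stability:norm-comparison}): there exists $c>0$ with $\|M\|_\sigma \le c\,\|M\|$ for every $M\in\R^{n\times n}$. Combining this with the submultiplicative estimate and the definition of the spectral norm yields
\begin{equation}
|x_k| \le \|A_{k-1}\cdots A_0\|_\sigma\,|x_0| \le c\,\bar\rho^{\,k}\,|x_0|.
\end{equation}

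Setting $C\coloneqq\max\{1,c\}$ guarantees $C\in[1,\infty)$ as required by the definition of \ac{DGES}, and the resulting inequality $|x_k|\le C\,|x_0|\,\bar\rho^{\,k}$ holds for all $k\ge 0$, all $x_0\in\R^n$, and all admissible sequences $(A_j)$, because the bound in the first step is uniform in $k$. The degenerate case $\bar\rho=0$ in the hypothesis forces $x_k=0$ for $k\ge 1$, so one may replace $\bar\rho$ by any $\bar\rho'\in(0,1)$ in the \ac{DGES} conclusion. I do not anticipate any further obstacle; the main insight is really just that the norm-equivalence constant can be folded into $C$ without affecting the decay rate $\bar\rho$.
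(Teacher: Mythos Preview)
Your proof is correct and follows essentially the same route as the paper: triangle inequality to get $\|A_k\|\le\bar\rho$, submultiplicativity to bound the iterated product, and equivalence of norms on $\R^{n\times n}$ to pass from $\|\cdot\|$ to the spectral norm. Your explicit handling of $C\ge 1$ and the degenerate case $\bar\rho=0$ is slightly more careful than the paper's version, but the argument is the same.
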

\begin{proof}
	{
		Assume $\sum_i \|A_{k,i}\| \le \bar \rho  < 1~ \forall k$. The triangle inequality\ifreport{\xspace \eqref{eq:timing-stability:norm-additive}\xspace} leads to $\|A_k\| = \|\sum_{i=0}^{N} A_{k,i}\| \le \sum_{i=0}^{N} \|A_{k,i}\| \le \bar \rho$. Due to \ifreport{\cref{theorem:timing-stability:norm-comparison}}\ifpaper{the equivalence of norms}, there is a finite $C>0$ such that $\|M\|_{\sigma} \le C \|M\|$ for all $M \in \R^{n \times n}$. This leads to
		\begin{align} 
		|x_{k+1}| =& \ifpaper{\textstyle} \left|\left(\reverseProduct_{j=0}^{k} A_{j}\right) x_0 \right| \le   \left\|\reverseProduct_{j=0}^{k} A_{j}\right\|_{\sigma} |x_0| \ifpaper{\nonumber\\} \le \ifpaper{&} \ifpaper{\textstyle} C \left\|\reverseProduct_{j=0}^{k} A_{j}\right\| |x_0| \le C \bar \rho^k |x_0| \quad \forall x_0 \in \R^n,
		\end{align}
		which proves \ac{DGES}$(\bar\rho, C)$.\ifpaper{\qed}
	}
\end{proof}

\section{Norm bounding of summands}
\label{sec:timing-stability:normbound}
\ifreport{
	\subsection{Bound on Timing-Dependent Deviations}
	\label{sec:timing-stability:normbound:expm}
}

\Cref{theorem:timing-stability:norm-bound} provides a stability result based on the $P$-ellipsoid norm of the timing-dependent deviations $\Delta A_{\dots}$. In this section, a bound for this norm is presented using the general form $\Delta A_{\dots}=M_1 (\ee^{A \tau} - I) M_2$ shown in \cref{sec:timing-stability:splitting:proof}. 

\ifpaper{By \specificReferenceToReport{sec:timing-stability:normbound:expm}, a Taylor series of order $r$ yields}
\ifreport{
	\paragraph*{Problem Statement} For small $\delta$, compute a bound on $\max\limits_{\tau \in [-\delta ,\delta]}\|M_1(\ee^{A \tau}-I)M_2\|_P$.
	
	
	\paragraph*{Idea} 
	 A series expansion of the matrix exponential
	\begin{align}
		\ee^{A\tau} -I =& \sum_{i=0}^{\infty} \frac{A^i \tau^i}{i!}  - I = \sum_{i=1}^{r} \frac{A^i \tau^i}{i!} + \underbrace{\sum_{i=r+1}^{\infty} \frac{A^i \tau^i}{i!}}_{E}
	\end{align}
	is expanded up to order $r \geq 0$, and the remainder $E$ is bounded.
	
	\paragraph*{Implementation} Applying this idea leads to
	\begin{align}
			\nonumber
			\|M_1(\ee^{A\tau}-I)M_2\|_P=& \left\|\sum_{i=1}^{r} \frac{M_1 A^i M_2 \tau^i}{i!} + M_1\ifpaper{\!}\sum_{i=r+1}^{\infty} \ifpaper{\!\!}\frac{A^i \tau^i}{i!} M_2 \right\|_P \\
			\leq& \sum_{i=1}^{r} \frac{\|M_1 A^i M_2\|_P |\tau|^i}{i!} \ifpaper{ \nonumber \\&} + \|M_1\|_P \sum_{i=r+1}^{\infty} \frac{ \|A \|^i_P  |\tau|^i}{i!} \|M_2\|_P \ifpaper{ \nonumber \\}\eqcolon \ifpaper{&}\,h(|\tau|).
		\end{align}
	
		As $h(|\tau|)$ is a polynomial of $|\tau|$ with nonnegative coefficients, it is nondecreasing for increasing $|\tau|$. Therefore, its bounds for $|\tau| \in [0,\delta]$ are $h(0)=0$ and $h(\delta)$:
}
\begin{align}
	0 \le& \|M_1(\ee^{A\tau}-I)M_2\|_P \le h(\delta) \quad \forall \tau \in [-\delta, \delta]\ifreportX{.}
\end{align}
	\ifreport{For computation, $h(\delta)$ with $\delta \geq 0$ is rewritten as}
	\ifreportStartMark
	\begin{align}
	\ifpaper{&\text{with~}} 
		h(\delta)
		\ifreportX{
			=& -\|M_1\|_P \|M_2\|_P+ \sum_{i=1}^{r} \delta^i \underbrace{\ifpaper{\textstyle }\frac{\|M_1 A^i M_2\|_P - \|M_1\|_P\|A\|^i_P\|M_2\|_P}{i!}}_{\eqcolon \gamma_i, ~ i \geq 1} \nonumber\\& +\|M_1\|_P \underbrace{\sum_{i=0}^{\infty} \frac{ \|A \|^i_P  \delta^i}{i!}}_{\ee^{\|A\|_P\,\delta}} \|M_2\|_P \\
		}
		=\ifreportX{&}  \|M_1\|_P\|M_2\|_P (\ee^{\|A\|_P\,\delta} - 1) + \ifpaper{\textstyle} \sum_{i=1}^{r} \gamma_i \delta^i \ifreportX{.}
		\ifpaper{,\nonumber\\&\gamma_i \coloneq \left({\|M_1 A^i M_2\|_P - \|M_1\|_P\|A\|^i_P\|M_2\|_P}\right)/(i!).}
	\end{align}
	\ifreportEndMark
	\ifreport{(To include the special case of $\delta=0$, the above derivation uses the definition $0^0\coloneq 1$.)\xspace} 
As $\lim_{\delta \to 0^+} h(\delta) = 0$, this bound preserves the property 
\begin{equation}
	\|\Delta A_{\dots}\|_P \to 0 \text{ for }\Delta t \to 0
\end{equation}
from \cref{theorem:timing-stability:splitting}, and therefore also the feasibility result from \cref{theorem:timing-stability:feasibility}. In the implementation, $r=10$ is used.
\todoOptional{Später: Wieso funktioniert diese Approximation für Beispiel B2 (rein akademisch, hier nicht gezeigt) nicht gescheit?}
\todoOptional{Wer das liest kriegt ein Eis, solange Vorrat reicht :-D}

\ifreport{\subsection{Verified Numerical Implementation}
\label{sec:timing-stability:norm-bound:numeric}}
To ensure a safe overapproximation despite finite numerical precision, interval arithmetic is used to determine all norms and norm bounds.
\xspace
\ifpaper{The interval computation of norms, based on \cite{Rump2010}, is explained in detail in \specificReferenceToReport{sec:timing-stability:norm-bound:numeric}.}%
\ifreport{%
	This leads to an overapproximated, \ie, pessimistic but guaranteed result.
	
	The numerical approximation of $P^{1/2}$ results in an approximate value $K \neq P^{1/2}$ without guarantees on the distance $K-P^{1/2}$ to some \enquote{nearest} valid solution for $P^{1/2}$. Let $\tilde P = KK^\transp$ be the corresponding replacement for $P$. If there is a bound $\bar \rho$ such that $\|A_k(\Delta t_{k}=0)\|_{\tilde P} < \bar \rho < 1$, this approximation is usable to show stability for some timing bounds. Otherwise, the stability analysis has failed.
	
	Computing a guaranteed bound for $\bar \rho$ despite numerical errors is possible using \emph{interval arithmetic} in the computation of $\|A_k(\Delta t_{k}=0)\|_{\tilde P}$ via the spectral norm, as will be explained later. 
	
	To show stability using \cref{theorem:timing-stability:norm-bound}, $\sum \|\Delta A_{\dots}\|_{\tilde P} < 1-\bar \rho$ must be checked. A bound on each summand is computed by evaluating \cref{sec:timing-stability:normbound:expm} in interval arithmetic.
	
	The use of interval arithmetic has the advantage that small uncertainties in the plant model $A\idxPlant,B\idxPlant,C\idxPlant$ and the period $T$ can be explicitly considered. Because the result of the presented approach is a \acl{CQLF}, the same stability result also holds if the uncertainties are time-varying (\cref{theorem:timing-stability:norm-bound:generic}).
	
	\subsubsection{Interval Computation of the Spectral Norm} By \cite[p. 5]{Rump2010}, an upper bound for the spectral norm of matrices with small entries can be efficiently computed by
	\begin{equation}
	\|A\|_{\sigma} \leq \sqrt{\sum_{i,j} a_{i,j}^2} \quad \forall A = (a_{i,j})_{i,j} \in  \R^{n \times n}\label{eq:timing-stability:rough-spectral-norm-bound}.
	\end{equation}
	
	For general matrices, relatively precise bounds for the spectral norm can be determined from the singular value decomposition $\Sigma=U^\transp A V$, where $\Sigma=\diag(\sigma_1, \dots, \sigma_n)$ is the diagonal matrix of singular values of $A$, $UU^\transp=U^\transp U=I$ and $VV^\transp=V^\transp V=I$ \cite[Theorem 5.6.3 and Fact 3.11.4]{Bernstein2009}:
	
	Let $\tilde V$ be a numerical approximation of $V$ with unknown accuracy. All following computations must be in interval arithmetic and are due to \cite[Theorem 3.2]{Rump2010}. Compute $D + E = \tilde V^\transp A^\transp A \tilde V$, where $D=\diag(d_1, \dots, d_n)$ is the diagonal part and $E$ the rest, to approximate
	\begin{equation}
	V^\transp A^\transp A V = V^\transp A^\transp \underbrace{U U^\transp}_{I} A V = \Sigma^\transp \Sigma = \diag(\sigma_1^2, \dots, \sigma_n^2).
	\end{equation} By \eqref{eq:timing-stability:rough-spectral-norm-bound}, compute $\alpha$ such that $\|I-\tilde V^\transp \tilde V\|_{\sigma} \le \alpha < 1$ and $\epsilon$ such that $\|E\|_{\sigma}<\epsilon$. Then,
	\begin{equation}
	\sqrt{\frac{\max_i d_i - \epsilon}{1+\alpha}} \le \|A\|_{\sigma} \le \sqrt{\frac{\max_i d_i + \epsilon}{1-\alpha}}. \label{eq:timing-stability:good-spectral-norm-bound}
	\end{equation}
	This computation has a complexity of $\landauO(n^3)$ \cite[p. 378]{Rump2010}.
	
	\subsubsection{Interval Computation of the Matrix Exponential}
	The matrices $M_1,M_2$ in \cref{sec:timing-stability:normbound:expm} depend on $\ee^{A\idxContinuous T/2}$ in some cases. Therefore, a validated computation of the matrix exponential is required. This is done using functions provided by the Python \emph{mpmath} library.
	
	This also solves the problem that $A\idxContinuous T/2$ may be not exactly known or not exactly representable by floating point values.
	
	\paragraph{Computational Complexity}
	The exponentiation of interval matrices with specified accuracy is NP-hard \cite{Goldsztejn2009} and therefore any known algorithm is of worse than polynomial complexity. As the dimension of $A$ is $n=n\idxPlant+n\idxDiscrete+m+p$, this suggests that an increase in the number $m+p$ of sensors and actuators leads to an exponentially (or worse than polynomially) increasing amount of computation time, effectively invalidating the advantage stated in \cref{remark:timing-stability:complexity} (Increasing $m+p$ requires only a polynomially increasing number of norm bounds).
	
	However, this is not true, as the structure \eqref{eq:timing-stability:splitting:lis-control-model:exp-a-tau} of $\ee^{A\idxContinuous \tau}$ reveals that only the terms $\ee^{A\idxPlant\tau}$ and $\int_0^\tau \ee^{A\idxPlant \xi} d\xi B\idxPlant$ need to be computed. For constant $A\idxPlant$, all terms except $B\idxPlant$ are fixed, so that increasing $m$ only incurs the polynomial complexity of matrix multiplication, and $p$ is irrelevant for this step.
%
}

\section{Synthesis of $P$ via \acsp{LMI}}\label{sec:lmi}
To show stability using \cref{theorem:timing-stability:norm-bound}, the \ac{CQLF} matrix $P$ must be determined such that the bound $\tilde \rho$ is less than 1:
\begin{multline}
\|A_k\|_P \le \|A(\Delta t=0)\|_P + \sum \|\Delta A_{\subsActuate,\dots}\|_P \ifpaper{\\} +  \sum \|\Delta A_{\subsMeasure,\dots}\|_P + \sum \|\Delta A_{\subsActuate\subsMeasure,\dots}\|_P \le \tilde \rho. \ifpaper{\!\!} \label{eq:timing-stability:lmi:goal}
\end{multline}
\Cref{theorem:timing-stability:extreme-lyapunov-lti} guarantees the existence of $P$ with $\|A(\Delta t=0)\|_P < 1$. Because the resulting bounds for $\|\Delta A_{\dots}\|_P$ are often prohibitively large, remaining degrees of freedom in $P$ must be used to minimize $\tilde \rho$ and show stability by $\tilde \rho < 1$. For this we employ an \ac{LMI}-based approach.

\subsection{Validity of Approximations}
\label{sec:timing-stability:lmi:approx-is-fine}
As shown in the following, determining $P$ using \acp{LMI} entails finite numerical precision and approximations. It is important to note that the final stability result is valid no matter how $P$ was determined, as long as $P \succ 0$: The underlying theorems are valid for any $P$-ellipsoidal norm $\|\cdot\|_P$ with $P \succ 0$. 
In the implementation, the numerical result $P$ is checked for $P \succ 0$ and \cref{theorem:timing-stability:norm-bound} using interval arithmetic and the results of \cref{sec:timing-stability:normbound}. If these tests succeed, the system is stable. Otherwise, no conclusion can be drawn.
\ifreport{\xspace(In the implementation, the condition $P\succ 0$ is implicitly checked during the computation of $(P^{1/2})^{-1}$.)\xspace}

\newcommand{\PNormLMIDerivation}{%
	\begin{align}
	\|M\|_P < c \quad \ifreportX{&\Leftrightarrow \quad \max_{x \in \R^n} \sqrt{\frac{(Mx)^\transp P (Mx)}{x^\transp P x}} < c \\
		&\Leftrightarrow  \quad \sqrt{\frac{(Mx)^\transp P (Mx)}{x^\transp P x}} \prec c \\
		&\Leftrightarrow \quad x^\transp M^\transp P M x \prec x^\transp P c^2 x\\}
	&\Leftrightarrow \quad M^\transp P M \prec c^2 P\ifreportX{.}\label{eq:timing-stability:p-norm-to-lmi}
	\end{align}%
}
\ifreportStartMark
\ifreportX{
\subsection{LMI Equivalence of Norm Bounds} \label{sec:lmi:basic-norm-bounds}
The minimum or maximum eigenvalue $\lambda_{\{\min,\max\}}$ can be formulated as \ac{LMI} \cite[Lemma 8.4.1]{Bernstein2009} via
\begin{align}
	\lambda_{\min}(M) > c &\Leftrightarrow  M \succ c I\ifreportX{\\}\ifpaper{,&\!}
		\lambda_{\max}(M) < c &\Leftrightarrow  M \prec c I.\ifpaper{\!}
\end{align}The same is possible for the singular values $\sigma_{\{\min,\max\}}(M) = \lambda^{1/2}_{\{\min,\max\}}(M^\transp M)$:
\begin{align}
\|M\|_{\sigma} = \sigma_{\max}(M) < c \quad &\Leftrightarrow \quad  	M^\transp M \prec c^2 I,\\
	\sigma_{\min}(M) > c \quad &\Leftrightarrow \quad M^\transp M \succ c^2 I.
\end{align}
%
%
A similar result for the $P$-ellipsoid norm can be derived from its definition and the definition of $\succ$ (cf. \cref{def:positive-definite}):
\PNormLMIDerivation
}
\ifreportEndMark

\subsection{\ac{LMI} Problem Formulation}
\label{sec:timing-stability:lmi:lmi}
To use the efficient framework of \acp{LMI}, the $P$-ellipsoid norms in \eqref{eq:timing-stability:lmi:goal} can be expressed \ifreport{using \eqref{eq:timing-stability:p-norm-to-lmi} }\ifpaper{using \PNormLMIDerivation \specificReferenceToReport{sec:lmi:basic-norm-bounds} }as
\begin{align}
A^\transp P \,A &\prec  \bar \rho^{2} P \quad (\Leftrightarrow \|A\|_P < \bar\rho), \label{eq:timing-stability:lmi:pnorm-a}\\
\Delta A_i^\transp P \,\Delta A_i &\prec \beta^{2} P \quad (\Leftrightarrow \|\Delta A_i\|_P < \beta) \quad\ifpaper{\!\!} \forall \Delta A_i \in \mathcal{D}, \label{eq:timing-stability:lmi:pnorm-delta}
\end{align}
where $A=A(\Delta t=0)$ is the nominal-case dynamics and, for now, $\mathcal{D}$ the set of $\Delta A_{\dots}$ in \cref{theorem:timing-stability:splitting} for all possible $\Delta t$.
Ignoring numerical errors, this leads to
\begin{align}
	\ifpaper{\textstyle}
	\|A\|_P + \sum_{\dots} \|\Delta A_{\dots}\|_P \stackrel{\text{\eqref{eq:timing-stability:lmi:pnorm-a}, \eqref{eq:timing-stability:lmi:pnorm-delta}}}{<} \bar\rho + \sum_{\dots} \beta \label{eq:timing-stability:lmi:rho-tilde-less}
\end{align}
and the optimization goal
\begin{align}
	\min_{P\succ 0, \,\bar\rho>0, \,\beta>0} \left(\bar\rho + \ifpaper{\textstyle}\sum_{\dots} \beta\right) \text{  \quad \ifpaper{\!\!\!\!}subject to  \labelcref{eq:timing-stability:lmi:pnorm-a,eq:timing-stability:lmi:pnorm-delta}}.
\end{align}
However, this is not a valid \ac{LMI} because \eqref{eq:timing-stability:lmi:pnorm-a} contains a product of the optimization variables $P$ and $\bar\rho$. Additionally, to avoid numerically ill-conditioned $P$, the constraint
\begin{align}
\gamma I \prec P  &\prec I\quad  (\Leftrightarrow \lambda_{\min}(P) > \gamma ~ \land ~ \lambda_{\max}(P) < 1) \label{eq:timing-stability:lmi:robust}
\end{align}
with $\gamma>0$ is added.%
\ifreport{\xspace(Note that scaling $P$ does not affect $\|\cdot\|_P$. Therefore, the absolute value of the upper bound for $\lambda_{\max}(P)$ does not matter, so it is arbitrarily fixed as $\lambda_{\max}(P) < 1$.)

}
The optimization then becomes
\begin{align}
	\max_{P \in \R^{n \times n}, \gamma > 0} \gamma \text{  \quad subject to  \labelcref{eq:timing-stability:lmi:pnorm-a,eq:timing-stability:lmi:pnorm-delta,eq:timing-stability:lmi:robust}},
\end{align}
where the desired norm bounds $\bar \rho$ and $\beta$ are constant within the LMI and instead optimized in an outer loop. Numerical robustness is further improved by preconditioning as detailed later in \ifpaper{\specificReferenceToReport{sec:timing-stability:lmi:precond}}\ifreport{\cref{sec:timing-stability:lmi:precond}}.
\todoOptional{Citation: Bilinear matrix inequalities are significantly more costly to solve.}

While in theory, $\mathcal{D}$ should be the set of all $\Delta A_{\{\subsActuate,\subsMeasure,\subsActuate\subsMeasure\},\dots}$ for a representative set of timings, this would be prohibitively large for systems with many sensors and actuators. It is instead approximated as the set
\begin{multline}
	\mathcal{D} = \big\{ A(\Delta t) - A(0) ~\big|~ \Delta t = \vecSmallT{\Delta t_{\subsActuate}^\transp & \Delta t_{\subsMeasure}^\transp} \in
	\ifpaper{\\ }
	\big(\{ \underline{\Delta t}_{\subsActuate}, 0, \overline{\Delta t}_{\subsActuate} \}
	 \times 
	\{ \underline{\Delta t}_{\subsMeasure}, 0, \overline{\Delta t}_{\subsMeasure}  \} \big)  \setminus \{0\} \big\}
\end{multline}
representing eight extreme combinations of $\Delta t_{\subsActuate}$ and $\Delta t_{\subsMeasure}$. As noted in \cref{sec:timing-stability:lmi:approx-is-fine}, this approximation does not restrict the validity of the final result.

\subsection{Optimization of $\bar\rho$ and $\beta$}
\label{sec:timing-stability:lmi:rho-beta-search}
In the previous \acp{LMI}, the parameters $\bar\rho$ and $\beta$ must be given, whereas the actual goal is to minimize the analysis result $\tilde \rho$. 
Mainly, $\bar\rho$ and $\beta$ should be minimized because, by \eqref{eq:timing-stability:lmi:goal} and \eqref{eq:timing-stability:lmi:rho-tilde-less}, neglecting the approximation of $\mathcal D$,
\begin{equation}
\tilde \rho = \bar \rho + \beta (m+p+mp)
\end{equation}
is a worst-case bound for $\tilde \rho$.
However, there are limits: Experiments show that smaller $\bar \rho$ increases $\|\Delta A_i\|_P$. Because $\beta > \|\Delta A_i\|_P $, $\bar \rho$ should not be too small.
To show stability, $\bar \rho<1$ is desirable. As $\bar \rho > \|A\|_P > \spectralRadius{A}$, we have $\spectralRadius{A} < \bar \rho < 1$. The implementation uses a fixed value $\bar\rho = 0.8 + 0.2 \spectralRadius{A}$ in this range, and a heuristic search for $\beta$\ifpaper{, as detailed in \specificReferenceToReport{sec:timing-stability:lmi:rho-beta-search}.}\ifreport{:}

\ifreport{
\begin{enumerate}
	\item Initially, $\beta = \frac{1}{4} \frac{1-\bar\rho}{m+p+mp}$ and $\delta=2$, where $\delta$ will be explained later.
	\item Repeat the following three times:
	\begin{itemize}
		\item Compute $P$ and $\tilde \rho$
		\item In the exceptional case of $\|A\|_P>1$, the system is probably unstable. Then, retry with smaller $\beta$ (or exit with error).
		\item If $\gamma<10^{-5}$, update $\delta \coloneq 0.45\delta$.
		\item Update $\beta \coloneq \delta \beta \frac{1 - \|A\|_P}{\tilde \rho - \|A\|_P}$.
	\end{itemize}
	\item Return the lowest $\tilde \rho$ found and the corresponding $P$.
\end{enumerate}
For $\delta=1$ and $\|\Delta A_{\dots}\|_P$ proportional to $\beta$, this would converge to $\tilde \rho=1$ at the second iteration. A larger value of $\delta$ potentially achieves lower $\tilde\rho$ at the cost of lower robustness $\gamma$. Experiments suggest that it also helps to speed up convergence.}

\ifreport{
	\subsection{LMI Preconditioning}\label{sec:timing-stability:lmi:precond}
	To improve speed and accuracy of the LMI solver, a state transformation $\tilde A = R^{-1} A R$ and $\tilde {\mathcal D} = \{R^{-1}DR | D \in \mathcal{D}\}$ is applied.
	By the definition of previous LMI, the ideal robustness $\gamma=1$ would be achieved with  $\tilde P=I$. Assuming $\Delta A_i \approx 0$ and $\bar\rho \approx 1$, $\tilde P=I$ is a solution if
	\begin{equation}
	\|\tilde A\|_{\sigma} \stackrel{\text{ \cref{theorem:timing-stability:p-norm-via-spectral}}}{=} \|\tilde A\|_{P=I} \stackrel{\text{LMI}}{<} \bar \rho \approx 1.
	\end{equation}
	Therefore, $R$ should be chosen such that $\|\tilde A\|_{\sigma}< 1$.
	
	\newcommand{\precOrSucc}{\stackrel{(\prec)}{\succ}}
	A lemma required for the following derivation is that $A \prec 0 \Leftrightarrow M^{-\transp} A M^{-1} \prec  0$  for any invertible $M$, as
	\begin{multline}
	A \prec 0 \Leftrightarrow x^\transp A \underbrace{x}_{\coloneq M^{-1} z} \prec 0 \ifpaper{\\} \Leftrightarrow z^\transp M^{-\transp} A M^{-1} z \prec 0 \Leftrightarrow M^{-\transp} A M^{-1} \prec 0. \label{eq:timing-stability:lmi:pd-rewriting}
	\end{multline}
	
	The computations of \cref{sec:timing-stability:lmi:lmi} are denoted as $P_{\mathrm{LMI}}(A, \mathcal{D}, \bar\rho, \beta)$. For improved accuracy, this original LMI is reused as follows:
	\begin{enumerate}
		\item Compute a quadratic Lyapunov function for the nominal case:  $P_{\mathrm{nominal}} = P_{\mathrm{LMI}}(A=\tilde A, \mathcal{D} = \emptyset, \bar\rho=1, \beta=0)$, therefore $\|A\|_{P_{\mathrm{nominal}}} < 1$ (in practice: $\approx 1$).
		\item Choose $R^{-1}=(P_{\mathrm{nominal}}^{1/2})^{\transp}$, which is nonsingular due to $P_{\mathrm{nominal}} \succ 0$ and \cref{theorem:timing-stability:cholesky}. Then, $\|\tilde A\|_\sigma < 1$, as
		\begin{equation}
		\|A\|_{P_{\mathrm{nominal}}}  \stackrel{\text{ \cref{theorem:timing-stability:p-norm-via-spectral}}}{=} \Big\|\underbrace{(P_{\mathrm{nominal}}^{1/2})^{\transp}}_{R^{-1}} A  \underbrace{(P_{\mathrm{nominal}}^{1/2})^{-\transp}}_{R}\Big\|_{\sigma} = \|\tilde A\|_{\sigma}.
		\end{equation}
		\item Compute $\tilde P = P_{\mathrm{LMI}}(A=\tilde A, \mathcal{D}= \tilde \Delta, \bar\rho, \beta)$
		\item Inverse transform $P = R^{-\transp} \tilde P R^{-1}$  due to
		\begin{align}
		\text{\cref{sec:timing-stability:lmi:lmi}} \quad \Rightarrow& \quad  \tilde A^\transp \tilde P \,\tilde A \prec  \bar \rho^{2} \tilde P\\
		\Leftrightarrow& \quad  R^\transp A^\transp R^{-\transp} \tilde P R^{-1} A R \prec \bar \rho^2 \tilde P\\
		\stackrel{\mathclap{\text{\eqref{eq:timing-stability:lmi:pd-rewriting}}}}{\Leftrightarrow}& \quad  A^\transp \underbrace{R^{-\transp} \tilde P R^{-1}}_{\coloneq P} A \prec \bar \rho^2 R^{-\transp} \tilde P R^{-1}\\
		\Leftrightarrow& \quad  A^\transp P A \prec \rho^2 P.
		\end{align}
		This derivation shows that the norm bounds concerning $\bar \rho$ and analogously also $\beta$ hold unchanged.
		
		As the computation never uses $P$, but only $P^{1/2}$, it is desirable to derive an inverse transform for the Cholesky decomposition.
		
		\emph{Proposition:} This inverse transform is $(P^{1/2})^{\transp} = (\tilde P^{1/2})^{\transp} R^{-1}$.
		
		\emph{Proof:}  The statement is true because the Cholesky definition is unique (\cref{theorem:timing-stability:cholesky}) and the proposed value of $P^{1/2}$ fulfills all three conditions of the definition of the Cholesky decomposition:
		\begin{enumerate}
			\item  $P\succ 0$ due to \eqref{eq:timing-stability:lmi:pd-rewriting} and $\tilde P \succ 0$.
			\item $P^{1/2}$ fulfills $P^{1/2} (P^{1/2})^\transp=P$, so it is either the Cholesky decomposition or a transformed (\eg, transposed) variant.
			\item $(P^{1/2})^\transp$ is upper triangular with positive diagonal entries (UT\textsubscript{+}) because:
			\begin{itemize}
				\item It is the product of  UT\textsubscript{+} matrices: $(\tilde P^{1/2})^\transp$ and $R^{-1} = (P_{\mathrm{nominal}}^{1/2})^\transp$ are UT\textsubscript{+} by definition of the Cholesky decomposition.
				\item  The product of two UT\textsubscript{+} matrices is UT\textsubscript{+} \citep[Fact 3.23.12ii]{Bernstein2009}.  
			\end{itemize} 
		\end{enumerate}
	\end{enumerate}
}

\ifreportX{\clearpage}
\section{Experimental Results} \label{sec:experiments}
The approach was prototypically implemented in Python using \emph{CVXPY} for \acp{LMI} and \emph{mpmath} for interval arithmetic. (Open-source code is available at \url{https://github.com/qronos-project/timing-stability-lmi/}.) Stability could successfully be proven for examples C2 and D2 from \cite{Gaukler2019}, for which no previous stability result is known. These examples are the one- (C2) and three-axis (D2) angular rate control of a linearized quadcopter with a period of $T=10\,\text{ms}$ and a timing uncertainty of $\pm$1\,\%. Example D2 is a multivariable system with $m=4$, $p=3$ and a total dimension of $n=16$.

\Cref{table:results} compares the results and computation times obtained using interval arithmetic ($\tilde \rho$, $t$) with those from a simplified approximation ($\tilde \rho_{\mathrm{approx}}$, $t_{\mathrm{approx}}$), in which the norm bounds from \cref{sec:timing-stability:normbound} are replaced by the floating-point maximum $\max_{\tau} \|\Delta A_{\dots}(\tau)\|$ over 100 samples of $\tau$. While this approximation is not guaranteed to be correct, it is about eight times faster. The small deviations $|\tilde \rho_{\mathrm{approx}} - \tilde \rho|$ show that the norm bounds are accurate.

While stability ($\tilde \rho < 1$) can be shown for example D2, this does not hold for doubled timing uncertainty (D2\textsubscript{b}), which may be due to conservatism or due to actual instability. To analyze the scalability, the dimension of D2 was doubled by block-diagonal repetition. By construction, the resulting system D2\textsubscript{c} of dimension $n=32$ has the same stability properties as D2. It can still be analyzed approximately within six minutes and verified within one hour, however at the cost of increased conservatism: Stability can only be shown for reduced timing uncertainty (D2\textsubscript{d}, $\overline{\Delta t_{\subsMeasure}}$ reduced to 1/10th). This conservatism relates to the fact that the summands of \cref{theorem:timing-stability:splitting} are norm-bounded individually, while their total effect is generally less severe.

{\ifpaper{\small }
\begin{table}
	\begin{tabular}{l|rrrrr}\hline
		name & $n$ & $\tilde \rho_{\mathrm{approx}}$ & $|\tilde \rho - \tilde \rho_{\mathrm{approx}}|$ & $t_{\mathrm{approx}}$ & $t$\\ \hline
		C2 & 5 & $0.914$ & $9.2 \cdot 10^{-8}$ & $1.0$ & $1.6$\\
		D2 & 16 & $0.926$ & $9.3 \cdot 10^{-8}$ & $17.5$ & $98.1$\\
		D2\textsubscript{b}: $2\Delta t$ & 16 & $1.073$ & --- & $12.8$ & ---\\
		D2\textsubscript{c}: $2n$ & 32 & $1.021$ & --- & $312.1$ & ---\\
		D2\textsubscript{d}: $2n$, $\frac{\overline{\Delta t}_{\subsMeasure}}{10}$\ifpaper{\!\!} & 32 & $0.979$ & $9.8 \cdot 10^{-8}$ & $308.1$ & $2196.3$\\\hline
	\end{tabular}
\\
	All values are rounded up to the last shown digit. Times are wall-times in seconds on an Intel i7-8750H CPU with 16GB RAM.
	
	$n=n\idxPlant+n\idxDiscrete+m+p$: Total state dimension
	
	$\tilde \rho$: Upper bound on stability factor with interval arithmetic
	
	$\tilde \rho_{\mathrm{approx}}$: Fast approximation of $\tilde \rho$ 
	
	$t_{\mathrm{approx}}, t$: Time for computing $\tilde \rho_{\mathrm{approx}}$, $\rho$.
	
	Modified system parameters are indicated as $2n$ (dimension doubled by repetition) and $K\Delta t$ (timing variable(s) increased by factor $K$)
	\caption{Experimental results}
	\label{table:results}
\end{table}
}

\section{Conclusion}
We presented a stability verification approach for control systems with multiple inputs and outputs under uncertain timing for sensing and actuating. Here, the challenge is that the system dynamics depends on the combination of all individual timing variables, that is, varying jitter for each sensor and actuator.  To avoid the resulting curse of dimensionality, we exploit the system model's structural properties: A decomposition of the discrete-time dynamics leads to summands with at most two timing variables. Subsequently, we can bound these summands in terms of a norm that corresponds to a \acl{CQLF}. The experimental results show that our approach facilitates the stability analysis for moderately complex systems for which, to the best of our knowledge, previously no analysis methods were known.

Future research will be concerned with extending the approach to the nonlinear case and improving the scalability by a more efficient implementation.

\appendix
\ifreportStartMark

\ifreportEndMark
\printbibliography
\end{document}